\documentclass[times, 10pt,a4wide,envcountsame]{llncs}

\usepackage{a4wide}
\usepackage{times}
\usepackage{amsmath,amssymb,eufrak,theorem}
\usepackage{enumerate}
\usepackage{complexity}
\usepackage{gastex}

\usepackage{stmaryrd,rotating,mathrsfs}

\usepackage[active]{srcltx}

\def\AC{{\mathsf{AC}}}

\def\AND{{\mathrm{AND}}}
\def\BIN{{\mathrm{BIN}}}
\def\CRR{{\mathrm{CRR}}}
\def\CTL{{\mathsf{CTL}}}
\def\div{{\mathrm{div}}}

\def\EF{{\mathsf{EF}}}
\def\EXPTIME{{\mathsf{EXPTIME}}}
\def\inp{{\mathrm{in}}}
\def\leaf{{\mathrm{leaf}}}

\def\NC{{\mathsf{NC}}}
\def\Omc{{\mathcal{O}}}
\def\OR{{\mathrm{OR}}}
\def\out{{\mathrm{out}}}
\def\Pmc{{\mathcal{P}}}
\def\path{{\mathrm{path}}}
\def\PSPACE{{\mathsf{PSPACE}}}
\def\TC{{\mathsf{TC}}}
\def\U{{\mathsf{U}}}

\newcommand{\ValOne}{\text{ValOne}}
\newcommand{\OptValOne}{\text{OptValOne}}
\newcommand{\LCM}{\text{LCM}}

\newtheorem{fact}[theorem]{Fact}

\pagestyle{plain}

\begin{document}

\title{Branching-time model checking of one-counter processes}
\author{Stefan G\"oller\inst{1} \and Markus Lohrey\inst{2,}\thanks{The second author
would like to acknowledge the support by DFG research project GELO.}}
\institute{
Universit\"at
Bremen, Fachbereich Mathematik und Informatik, Germany
\and
Universit\"at
Leipzig, Institut f\"ur Informatik, Germany\\
\email{goeller@informatik.uni-bremen.de
lohrey@informatik.uni-leipzig.de}}

\maketitle

\begin{abstract}
One-counter processes (OCPs) are pushdown processes which operate only
on a unary stack alphabet. We study the computational complexity of
model checking
computation tree logic ($\CTL$) over OCPs. 
A $\PSPACE$ upper bound is inherited from the modal $\mu$-calculus 
for this problem \cite{Serr06}.
First, we analyze the periodic behaviour of $\CTL$ over OCPs and
derive a model checking algorithm whose running time is exponential
only in the number of control locations and a syntactic notion of the 
formula that we call leftward until depth. In particular,
model checking fixed OCPs against $\CTL$ formulas with a fixed 
leftward until depth is in $\P$. This generalizes a corresponding
result from \cite{GoMaTo09} for the expression complexity of 
$\CTL$'s fragment $\EF$. Second, we prove that already over some fixed
OCP, $\CTL$ model checking is $\PSPACE$-hard, i.e., expression
complexity is $\PSPACE$-hard. 
Third, we show that there already exists a fixed $\CTL$ formula for 
which model checking of OCPs is $\PSPACE$-hard, i.e.,
data complexity is $\PSPACE$-hard as well. To obtain the
latter result, we employ two results from complexity theory:
(i) Converting a natural number in Chinese remainder presentation into binary
presentation is in logspace-uniform $\NC^1$ \cite{ChDaLi01} and 
(ii) $\PSPACE$ is $\AC^0$-serializable \cite{HLSVW93}.
We demonstrate that our approach can be used to obtain further results.
We show that model-checking $\CTL$'s fragment $\EF$ over OCPs is hard for $\P^\NP$, thus
establishing a matching lower bound and answering an open question
from \cite{GoMaTo09}.
We moreover show that the following problem is hard for 
$\PSPACE$: Given a one-counter Markov decision process, 
a set of target states with counter value zero each, and an initial
state, to decide whether the probability that the initial state will 
eventually reach one of the target states is arbitrarily close to $1$.
This improves a previously known lower bound for every
level of the Boolean hierarchy shown in \cite{BraBroEteKucWoj09}.
\end{abstract}

\section{Introduction}

Pushdown automata (PDAs) (or recursive state machines; RSMs) are a natural model
for sequential programs with recursive procedure calls, and their verification
problems have been studied extensively.
The complexity of model checking problems for PDAs is quite well understood: 
The reachability problem for PDAs can be solved in  polynomial time
\cite{BoEsMa97,EsHaRoSch00}. Model checking
modal $\mu$-calculus over PDAs was shown to be $\EXPTIME$-complete in
\cite{Wal01}, and the global version of the model checking
problem has been considered in \cite{Cachat02,PitVar04}.
The $\EXPTIME$ lower bound
for model checking PDAs also holds for the simpler logic $\CTL$
and its fragment EG \cite{WalCTL00}, even for a fixed formula
(data complexity) or a fixed PDA (expression complexity).
On the other hand, model checking PDAs against the logic $\EF$
(another natural fragment of $\CTL$) is $\PSPACE$-complete
\cite{WalCTL00}, and again the lower bound still
holds if either the formula or the PDA is fixed \cite{BoEsMa97}.
Model checking problems for various fragments and extensions
of PDL (propositional dynamic logic) over PDAs were studied
in \cite{GolLoh06}.

One-counter processes (OCPs) are Minsky counter machines with just one counter and
action labels on the transitions. They can also be seen as a special
case of PDAs with just one stack symbol, plus a non-removable
bottom symbol which indicates an empty stack (and thus allows to
test the counter for zero) and hence constitute a natural and fundamental
computational model. In recent years, 
model checking problems for OCPs received increasing attention
\cite{GoMaTo09,HKOW-09concur,To09,Serr06}.
Clearly, all upper complexity bounds carry over from PDAs.
The question, whether these upper bounds can be matched by lower bounds
was just recently solved for several important logics:
Model checking $\mu$-calculus over OCPs is $\PSPACE$-complete. 
The $\PSPACE$ upper bound was
shown in \cite{Serr06}, and a matching lower bound can
easily be shown by a reduction from emptiness of alternating
unary finite automata, which was shown to be $\PSPACE$-complete in
\cite{Ho96,JaSa07}.
This lower bound even holds if either the OCP or the
formula is fixed.  The situation becomes different for the fragment $\EF$.
In \cite{GoMaTo09}, it was shown that  model checking $\EF$ over 
OCPs is in the complexity class $\mathsf{P}^{\mathsf{NP}}$ 
(the class of all problems that can be solved on a deterministic
polynomial time machine with access to an oracle from $\mathsf{NP}$).
Moreover, if the input formula is represented succinctly
as a dag (directed acyclic graph), then model checking $\EF$ over 
OCPs is also hard for $\mathsf{P}^{\mathsf{NP}}$.
For  the standard (and less succinct) tree representation for formulas,
only hardness for the class $\mathsf{P}^{\mathsf{NP}[\log]}$
(the class of all problems that can be solved on a deterministic
polynomial time machine which is allowed to make $O(\log(n))$ 
many queries to an oracle from $\mathsf{NP}$) was shown in \cite{GoMaTo09}.
In fact, there already exists a fixed $\EF$ formula
such that model checking this formula over a given OCP is hard for
$\mathsf{P}^{\mathsf{NP}[\log]}$, i.e., the data complexity is 
$\mathsf{P}^{\mathsf{NP}[\log]}$-hard.

In this paper we consider the model checking problem for $\CTL$
over OCPs. 
By the known upper bound for the modal $\mu$-calculus
\cite{Serr06} this problem belongs to $\PSPACE$.
First, we analyze the combinatorics of $\CTL$ model checking over OCPs.
More precisely, we analyze the periodic behaviour of the
set of natural numbers that satisfy a given $\CTL$ formula in a given
control location of the OCP (Theorem~\ref{thm-CTL-periodic}).
By making use of Theorem~\ref{thm-CTL-periodic}, we can derive a model checking
algorithm whose running time is exponential
only in the number of control locations and a syntactic measure on 
$\CTL$ formulas that we call leftward until depth
(Theorem~\ref{CTL upper bound_0}).
As a corollary, we obtain that model checking a fixed OCP
against $\CTL$ formulas of fixed leftward until depth lies in $\P$
(Corollary~\ref{CTL upper bound}). This generalizes a recent result
from \cite{GoMaTo09}, where it was shown that the expression complexity of
$\EF$ over OCPs lies in $\P$.
Next, we focus on lower bounds.
We show that model checking $\CTL$ over OCPs is 
$\PSPACE$-complete, even if we fix either 
the OCP (Theorem~\ref{Theo-CTL-expression}) or the $\CTL$ formula (Theorem~\ref{theo ctl data}).
The proofs for Theorem~\ref{Theo-CTL-expression} uses an intriguing reduction from 
QBF. 
We have to construct a fixed OCP for which we can construct for a given unary
encoded number $i$ $\CTL$ formulas
that express, when interpreted over our fixed OCP, whether the current
counter value is divisible by $2^i$ and whether the $i^{\text{th}}$ bit
in the binary representation of the current counter value is $1$, respectively.  
For the proof of Theorem~\ref{theo ctl data}  ($\PSPACE$-hardness of
data complexity for $\CTL$)
we use two techniques from complexity theory, which to our knowledge have not
been applied in the context of verification so far:
\begin{itemize}
\item the existence of small
depth circuits for converting a number from Chinese remainder representation
to binary representation (see Section~\ref{Sec-circuits} for details) 
and
\item the fact that $\PSPACE$-computations are serializable in a certain sense
(see Section~\ref{sec real} for details).
\end{itemize}
One of the main obstructions 
in getting lower bounds for OCPs is the fact that OCPs are well suited
for testing divisibility properties of the counter value and hence can
deal with numbers in Chinese remainder representation, but it is not 
clear how to deal with numbers in binary representation. Small depth circuits
for converting a number from Chinese remainder representation
to binary representation are the key in order to overcome this obstruction.

We are confident that our new lower bound techniques described above can be used for
proving further lower bounds for OCPs. We present two other applications
of our techniques:
\begin{itemize}
\item We show that model checking $\EF$ over OCPs is complete for $\mathsf{P}^{\mathsf{NP}}$ even if the input formula
is represented by a tree (Theorem~\ref{T EF}) and thereby solve an open problem from \cite{GoMaTo09}.
Figure~\ref{F Model checking} summarizes the picture on the complexity of model
checking for PDAs and OCPs. 
\item  We improve a lower bound on a decision problem for one-counter Markov decision processes from
\cite{BraBroEteKucWoj09} (Theorem~\ref{T ValOne}). More details on this problem
are provided below. 
\end{itemize}
\begin{table}[t]
\begin{center}
\begin{tabular}{l|l|l}
\textbf{Logic}        &  \textbf{PDA}     &  \textbf{OCP}\\ \hline
$\mu$-calculus  &  $\EXPTIME$ &  $\PSPACE$ \\ \hline
$\mu$-calculus, fixed formula &  $\EXPTIME$ &  $\PSPACE$ \\ \hline
$\mu$-calculus, fixed system &  $\EXPTIME$ &  $\PSPACE$ \\ \hline
$\CTL$, fixed formula & $\EXPTIME$  &  $\PSPACE$ (*)\\ \hline
$\CTL$, fixed system & $\EXPTIME$  &  $\PSPACE$ (*)\\ \hline
$\CTL$, fixed system, fixed leftward until depth & $\EXPTIME$  &  in $\P$ (*)\\ \hline
$\EF$           & $\PSPACE$  &  $\mathsf{P}^{\mathsf{NP}}$ (*)\\ \hline
$\EF$, fixed formula & $\PSPACE$ & $\mathsf{P}_{\log}^{\mathsf{NP}}$ hard\\ \hline
$\EF$, fixed system & $\PSPACE$ & in $\P$ 
\end{tabular}
\end{center}
\caption{Model checking over PDA and OCP{\label{F Model checking}; our new results are marked with (*).}}
\end{table}
{\em Markov decision processes} (MDPs) extend classical Markov chains by allowing so called
{\em nondeterministic vertices}. In these vertices,
no probability distribution on the outgoing
transitions is specified. The other vertices are called {\em probabilistic
vertices}; in these
vertices a probability distribution on the outgoing transitions is given.
The idea is that in an MDP a player Eve plays against nature (represented
by the probabilistic vertices).
In each nondeterministic vertex $v$, Eve chooses a probability distribution on the
outgoing transitions of $v$; this choice may depend on the past of the play (which is a path 
in the underlying graph ending in $v$) and is formally represented by a strategy for Eve.
An MDP together with a strategy for Eve 
defines an ordinary Markov chain, whose state space
is the unfolding of the graph underlying the MDP.  In Section~\ref{S Markov}, we consider infinite MDPs, which
are finitely represented by one-counter processes; this formalism was introduced in 
\cite{BraBroEteKucWoj09} under the name {\em one-counter Markov decision process} (OC-MDP).
For a given OC-MDP $\mathcal{A}$ and a set $R$ of control locations of the OCP underlying $\mathcal{A}$ 
(a so called {\em reachability constraint}) 
the following two sets $\ValOne(R)$ and $\OptValOne(R)$ were considered in 
\cite{BraBroEteKucWoj09}:
$\ValOne(R)$ is the set of all states $s$
of the MDP defined by $\mathcal{A}$ such that for every $\epsilon > 0$ there exists
a strategy $\sigma$ for Eve under which the
probability of finally reaching from $s$ a control location in $R$ and at
the same time having counter value $0$ is at least $1-\varepsilon$.
$\OptValOne(R)$ is the set of all states $s$
of the MDP defined by $\mathcal{A}$ for which there exists a specific strategy
for Eve under which this probability becomes $1$. It was shown in \cite{BraBroEteKucWoj09}
that for a given OC-MDP $\mathcal{A}$, a set of control locations $R$, and a state $s$
of the MDP defined by $\mathcal{A}$,
\begin{itemize}
\item the question whether $s \in \OptValOne(R)$  is  
$\PSPACE$-hard and in $\EXPTIME$, and
\item the question whether $s \in \ValOne(R)$  is  hard 
for every level of the Boolean hierarchy $\BH$.
\end{itemize}
The Boolean hierarchy is a hierarchy of complexity classes between $\NP$ and 
$\mathsf{P}^{\mathsf{NP}[\log]}$, see Section~\ref{S Tools} for a definition.
We use our lower bound techniques in order to improve the second hardness result for the levels of $\BH$ to 
$\PSPACE$-hardness. As a byproduct, we also reprove $\PSPACE$-hardness for $\OptValOne(R)$.
Currently, it is open, whether $\ValOne(R)$ is decidable; the corresponding problem for 
MDPs defined by pushdown processes is undecidable \cite{EtYa05}.

The paper is organized as follows. 
In Section~\ref{S Pre} we introduce general notation.
In Section~\ref{S OCP CTL} we define one-counter processes and the branching-time
logic $\CTL$. Periodicity of $\CTL$ on OCPs and a derived model checking algorithm 
is content of Section~\ref{S Upper}. 
In Section~\ref{S Expression} we give a fixed one-counter net (which is
basically  a one-counter process that cannot test if the counter is zero) for which
CTL model checking is $\PSPACE$-hard.
Section~\ref{S Tools} recalls tools from complexity theory that we need in
subsequent sections. We show that there already exists a fixed $\CTL$ formula
for which model checking over one-counter nets is $\PSPACE$-hard in Section~\ref{S Data}. 
Finally, we apply our lower bound technique and provide two applications.
We prove in Section~\ref{S Combined} that model checking
$\EF$ over one-counter nets is $\P^\NP$-hard, thus matching the $\P^\NP$ upper
bound from \cite{GoMaTo09}.
In Section~\ref{S Markov} we show that membership in $\ValOne(R)$
over one-counter Markov decision processes is $\PSPACE$-hard.

\section{Preliminaries}{\label{S Pre}}

\newcommand{\Q}{\mathbb{Q}}
\newcommand{\N}{\mathbb{N}}
\newcommand{\Z}{\mathbb{Z}}
\renewcommand{\O}{\mathcal{O}}
\newcommand{\bit}{\text{bit}}

We denote the naturals by $\N=\{0,1,2,\ldots\}$
and the rational numbers by $\Q$.
For each $i,j\in\N$ we define $[i,j]=\{k\in\N\mid
i\leq k\leq j\}$ and $[j]=[1,j]$. In particular $[0] = \emptyset$.
For each $n\in\N$ and each position $i\geq 1$, let $\bit_i(n)$ denote the $i^{\text{th}}$
least significant bit of the binary representation of $n$, i.e.,
$n=\sum_{i\geq 1} 2^{i-1}\cdot\bit_i(n)$.
For every finite and non-empty subset $M\subseteq\N\setminus\{0\}$, 
define $\LCM(M)$ to be the {\em least common multiple} of all numbers in $M$. 
Due to a result of Nair \cite{Nai82} it is known that 
$2^k\leq\LCM([k])\leq4^k$ for all $k\geq 9$.
As usual, for (a possibly infinite) alphabet $A$, $A^*$ denotes the set of all finite words
over $A$, $A^+$ denotes the set of all finite non-empty words over $A$, and 
$A^\omega$ denotes the set of all infinite words over $A$. Let $A^\infty = A^* \cup A^\omega$.
The length of a finite word 
$w$ is denoted by $|w|$.
For a word $w=a_1a_2\cdots a_n \in A^*$ (resp. $w=a_1a_2\cdots \in A^\omega$)
with $a_i\in A$ and $i \in [n]$ (resp. $i\geq 1$),
we denote by  
$w_i$ the $i^{\text{th}}$ letter $a_i$. 
A (possibly infinite) directed graph $G = (V, E)$ (with $E \subseteq V \times V$) is called
{\em deadlock-free} if for all $v \in V$ there exists $v' \in V$ with $(v,v') \in E$. If 
for all $v \in V$ there are only finitely many $v' \in V$ with $(v,v') \in E$, then $G$ is 
called {\em image-finite}.  The set of all {\em finite paths in $G$} is the set 
$\path_+(G) = \{ \pi \in V^+ \mid \forall i \in [|\pi|-1] : (\pi_i, \pi_{i+1}) \in E \}$. The 
set of all {\em infinite paths in $G$} is the set 
$\path_\omega(G) = \{ \pi \in V^\omega \mid \forall i \geq 1 : (\pi_i, \pi_{i+1}) \in E \}$.
A nondeterministic finite automaton (NFA) is a tuple
$A=(S,\Sigma,\delta,s_0,S_f)$, where $S$ is a finite set of {\em states},
$\Sigma$ is a {\em finite alphabet}, $\delta\subseteq S\times\Sigma\times S$ is the
{\em transition relation}, $s_0\in S$ is the {\em initial state}, and
$S_f\subseteq S$ is 
a set of {\em final states}.
We assume that the reader has some basic knowledge in complexity
theory, see e.g. \cite{AroBar09} for more details. 

\section{One-counter processes and computation tree logic}{\label{S OCP CTL}}
\newcommand{\Prop}{\mathcal{P}}

Fix some countable set $\Prop$ of {\em atomic propositions}.
A {\em transition system} is a triple $T=(S,\{S_p\mid p\in\Prop\}, \rightarrow)$, where
$(S, \to)$ is a directed graph and $S_p\subseteq S$ for all $p \in \Prop$ with $S_p = \emptyset$ for all but 
finitely many $p\in\Prop$. Elements of $S$ (resp. $\to$) are also
called {\em states} (resp. {\em transitions}).
We prefer to use the infix notation $s_1\rightarrow s_2$ 
instead of $(s_1,s_2)\in\,\rightarrow$. 
For $x \in \{+,\omega\}$ let $\path_x(T) = \path_x(S,\to)$.
For a subset $U\subseteq S$ of states, a (finite or infinite) path $\pi$ is called a {\em $U$-path}
if $\pi \in U^\infty$.

A {\em one-counter process} (OCP) is a tuple
$\O=(Q,\{Q_p\mid p\in\Prop\},\delta_0,\delta_{>0})$, where 
$Q$ is a finite set of {\em control locations}, 
$Q_p\subseteq Q$ for each $p\in\Prop$ but $Q_p = \emptyset$ for all but
finitely many $p\in\Prop$, 
$\delta_0\subseteq Q \times\{0,1\}\times Q$ is a finite set of
{\em zero transitions},  and 
$\delta_{>0}\subseteq Q\times \{-1,0,1\}\times Q$ is a finite set of
{\em positive transitions}. 
The {\em size} of an OCP is defined as
$|\O|=|Q|+\sum_{p\in\Prop}|Q_p| +|\delta_0|+|\delta_{>0}|$.
A {\em one-counter net} (OCN) is an OCP, where
$\delta_0\subseteq\delta_{>0}$.
A one-counter process
$\O=(Q,\{Q_p\mid p\in\Prop\},\delta_0,\delta_{>0})$ defines a transition system
$T(\O)=(Q\times\N,\{Q_p\times\N\mid p\in\Prop\},\rightarrow)$, where 
$(q,n)\rightarrow(q',n+k)$ if and only if either 
$n = 0$ and $(q,k,q')\in\delta_0$, or
$n > 0$ and $(q,k,q')\in\delta_{>0}$.

\newcommand{\X}{\mathsf{X}}
\newcommand{\F}{\mathsf{F}}
\renewcommand{\G}{\mathsf{G}}
\newcommand{\WU}{\mathsf{WU}}
\newcommand{\links}{[\![}
\newcommand{\rechts}{]\!]}
\newcommand{\sem}[1]{\ensuremath{\links #1 \rechts}}
\newcommand{\true}{\texttt{true}}
\newcommand{\false}{\texttt{false}}

More details on $\CTL$ and $\EF$ can be found 
for instance in \cite{BaiKat08}.
{\em Formulas} $\varphi$ of the logic $\CTL$ are given by the following
grammar, where $p\in\Prop$\!,
$$
\varphi\quad::=\quad p\ \mid\ \neg\varphi\ \mid\ \varphi\wedge\varphi\ \mid\ 
\exists\X\varphi\ \mid\ \exists\varphi\U\varphi\ \mid\ \exists\varphi\WU\varphi.
$$ 
Given a transition system $T=(S,\{S_p\mid p\in\Prop\},\rightarrow)$ 
and a $\CTL$ formula $\varphi$, we define 
the semantics $\links\varphi\rechts_T\subseteq S$ 
by induction on the structure of $\varphi$ as
follows:
\begin{eqnarray*}
\sem{p}_T&\  =\ & S_p\quad \text{for each } p\in\Prop\\
\sem{\neg\varphi}_T& = & S\setminus\sem{\varphi}_T\\
\sem{\varphi_1\wedge\varphi_2}_T&= &
\sem{\varphi_1}_T\cap\sem{\varphi_2}_T\\
\sem{\exists\X\varphi}_T& = & \{s\in S\mid \exists s'\in\sem{\varphi}_T: s\rightarrow
s'\}\ \\
\sem{\exists\varphi_1\U\varphi_2}_T &  = & 
\{s\in S\mid \exists  \pi \in \path_+(T) : \pi_1 = s, 
\pi_{|\pi|} \in\sem{\varphi_2}_T, \forall i\in[|\pi|-1] : \pi_i\in\sem{\varphi_1}_T\}\\
\sem{\exists\varphi_1\WU\varphi_2}_T &  = & 
\sem{\exists\varphi_1\U\varphi_2}_T\cup
\{s\in S\mid \exists \pi \in \path_\omega(T) : \pi_1 = s, 
\forall i\geq 1 : \pi_i\in\sem{\varphi_1}_T\}
\end{eqnarray*}
We write $(T,s)\models\varphi$ as an abbreviation for 
$s\in\sem{\varphi}_T$. When additionally $T$ is clear from the context, we just
write $s\models\varphi$. We introduce the usual abbreviations 
$\varphi_1\vee\varphi_2=\neg(\neg\varphi_1\wedge\neg\varphi_2)$,
$\true=p\vee\neg p$ for some $p\in\Prop$,
$\forall \X\varphi=\neg\exists\X\neg\varphi$,
$\exists\F\varphi=\exists\true\U\varphi$, and
$\exists\G\varphi=\exists\varphi\WU\false$.
Formulas of the $\CTL$-fragment $\EF$ are given by the following grammar, where
$p\in\Prop$,
$$
\varphi\quad::=\quad p\ \mid \neg\varphi\ \mid\ \varphi\wedge\varphi\ \mid\
\exists\X\varphi\ \mid \exists\F\varphi.
$$
Define the {\em size} $|\varphi|$ of $\CTL$ formulas $\varphi$ inductively as
follows: $|p|=1$,
$|\neg\varphi|=|\varphi|+1$, $|\varphi_1\wedge\varphi_2|=|\varphi_1|+|\varphi_2|+1$,
$|\exists\X\varphi|=|\varphi|+1$, and
$|\exists\varphi_1\U\varphi_2|=|\exists\varphi_1\W\U\varphi_2|=|\varphi_1|+|\varphi_2|+1$.

\section{$\CTL$ on OCPs: Periodic behaviour  and upper bounds}\label{S Upper}

\newcommand{\lud}{\mathrm{lud}}
The goal of this section is to prove a periodicity property of $\CTL$ over
one-counter processes. We will use this property in order to establish 
an upper bound for $\CTL$ on OCPs, see Theorem~\ref{CTL upper bound_0}. As a corollary, we show
that for a fixed one-counter process, $\CTL$ model checking restricted
to formulas of fixed leftward until depth (see the definition below)
can be done in polynomial time, see Corollary~\ref{CTL upper bound}.
For this, let us define the {\em leftward until depth $\lud$} of $\CTL$ formulas
inductively as follows: 
\begin{eqnarray*}
\lud(p)&\ =\ &0\ \text{ for each }p\in\Prop\\
\lud(\neg\varphi)&=&\lud(\varphi)\\
\lud(\varphi_1\wedge\varphi_2)&=&
\max\{\lud(\varphi_1),\lud(\varphi_2)\}\\
\lud(\exists\X\varphi)&=&\lud(\varphi)\\
\lud(\exists\varphi_1\U\varphi_2)&=&\max\{\lud(\varphi_1)+1,\lud(\varphi_2)\} \\
\lud(\exists\varphi_1\WU\varphi_2)&=&\max\{\lud(\varphi_1)+1,\lud(\varphi_2)\}\\
\end{eqnarray*}
A similar definition of the until depth can be found in \cite{TheWi96}, but
there  the until depth of $\exists\varphi_1\U\varphi_2$ is 
1 plus the maximum of the until depths of $\varphi_1$ and 
$\varphi_2$. Note that $\lud(\varphi)\leq 1$ for each $\EF$ formula $\varphi$.

Let us fix some one-counter process  $\O=(Q,\{Q_p\mid
p\in\Prop\},\delta_0,\delta_{>0})$ for the rest of this section. Let us introduce a
bit more notation.
Let $\odot\in\{+,-\}$, let $\delta\in\N$, and let 
$\pi=(q_1,n_1)\rightarrow(q_2,n_2)\ \cdots\rightarrow (q_k,n_k)$
(resp. $\pi=(q_1,n_1)\rightarrow(q_2,n_2)\rightarrow\cdots$)  be a finite
(resp. infinite) path in $T(\O)$ such that moreover $n_i,n_i\odot\delta>0$ for all $i$. 
Define $\pi\odot\delta$ to be the path that emerges from $\pi$ by replacing each $n_i$ by
$n_i\odot\delta$.
For each position $i$ and $j$ of $\pi$ with $i\leq j$, define $\pi[i,j]$ to be the
subpath of $\pi$ that begins in $(q_i,n_i)$ and that ends in $(q_j,n_j)$.

We aim to prove the following: For each $\CTL$ formula $\varphi$ we can compute
some threshold $t$ and some period $p$, where $t,p\in\exp(|\O|\cdot|\varphi|)$, such
that for all $n\in\N$ with $n>t$ only $n$'s residue class modulo $p$ determines whether
$(q,n)\in\sem{\varphi}_{T(\O)}$ or not, where $q\in Q$ is an arbitrary control
location.
The goal of this section is to give rather precise bounds on the size of
the threshold $t$ and the period $p$ embracing the notion of leftward until
depth from above.

Let us assume that $|Q|=k$. Define $K=\LCM([k])$ and
$K_\varphi=K^{\lud(\varphi)}$ for each $\CTL$ formula $\varphi$.

\begin{theorem} \label{thm-CTL-periodic}
Let $\varphi$ be a $\CTL$ formula. Then we can compute in polynomial time a threshold 
$$t(\varphi)\ \leq\ 2\cdot|\varphi|\cdot k^2\cdot K_\varphi$$ 
such that for all $n,n'>t(\varphi)$ that satisfy $n\equiv n' \text{ mod } K_\varphi$ we have
\begin{eqnarray}{\label{E period}}
(q,n)\in\sem{\varphi}_{T(\O)}\quad \text{\ if
and only if\ }\quad 
(q,n')\in\sem{\varphi}_{T(\O)}
\end{eqnarray} for each control location $q\in Q$. 
\end{theorem}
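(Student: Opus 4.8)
The plan is to proceed by induction on the structure of $\varphi$. The atomic and Boolean cases are immediate: for $\varphi = p$ take $t(p) = 0$ (the set $Q_p \times \N$ is trivially periodic with any period, since $K_p = K^0 = 1$); for $\neg\varphi$ the same threshold and period work, since negation preserves periodicity exactly; for $\varphi_1 \wedge \varphi_2$ take $t(\varphi_1 \wedge \varphi_2) = \max\{t(\varphi_1), t(\varphi_2)\}$ and note that $K_{\varphi_i}$ divides $K_{\varphi_1\wedge\varphi_2}$, so a common period $K_{\varphi_1\wedge\varphi_2}$ is obtained by a standard lcm-argument. For $\exists\X\varphi$: if $(q,n)$ and $(q,n')$ are large and congruent mod $K_{\exists\X\varphi} = K_\varphi$, then since all positive transitions change the counter by $-1,0,+1$, successors stay large and congruent mod $K_\varphi$, and the zero-transition set $\delta_0$ is irrelevant for large counters; one just has to bump the threshold by a constant (a $+1$ suffices, which is absorbed by the bound $2|\varphi|k^2 K_\varphi$ since the formula size increases).

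The substantial case is $\exists\varphi_1 \U \varphi_2$ (and $\exists\varphi_1 \WU \varphi_2$, which is analogous, the $\G$-part being handled by a cycle/pumping argument on $\varphi_1$-paths). Here I expect the main obstacle. Let $L = \lud(\varphi_1)$, so $K_{\varphi_1} = K^L$ and $K_{\exists\varphi_1\U\varphi_2} = K^{L+1} = K\cdot K_{\varphi_1}$, and by induction the sets $\sem{\varphi_1}$ and $\sem{\varphi_2}$ are eventually periodic with thresholds $t(\varphi_1), t(\varphi_2)$ and periods $K_{\varphi_1}, K_{\varphi_2}$ respectively (both dividing $K^L$, hence dividing $K_{\exists\varphi_1\U\varphi_2}$). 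The idea is: a witnessing path for $(q,n) \models \exists\varphi_1\U\varphi_2$ is a finite path through $\sem{\varphi_1}$-states ending in a $\sem{\varphi_2}$-state. I want to show that for $n, n'$ large and congruent mod $K^{L+1}$ such a path exists from $(q,n)$ iff one exists from $(q,n')$. The key combinatorial lemma is a "hill-cutting / pumping" argument: from any witnessing path one can extract a witnessing path that either (a) stays within a bounded counter window around its endpoints, or (b) contains a sub-loop on the control graph at some counter level above a bound proportional to $k^2 K_{\varphi_1}$ that is entirely inside $\sem{\varphi_1}$ — because once the counter exceeds $t(\varphi_1)$, membership in $\sem{\varphi_1}$ depends only on the residue mod $K_{\varphi_1} \mid K^L$, so a configuration $(p, m)$ with $m$ large is in $\sem{\varphi_1}$ iff $(p, m + K^{L+1})$ is. Such a high loop can be pumped up or down by multiples of $K^{L+1}$ (its total counter effect, combined with shifting, can be tuned to any multiple of $\LCM([k])$ times a suitable factor — this is where $K = \LCM([k])$ and the bound $k^2$ come from, via a Lagrange/coin-problem estimate on cycle lengths $\le k$ in a graph with $k$ nodes).

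Concretely, I would: (1) fix a witnessing path from $(q,n)$ of minimal length; (2) argue that if $n$ exceeds the claimed threshold $t(\exists\varphi_1\U\varphi_2) = 2|\exists\varphi_1\U\varphi_2|\,k^2\,K^{L+1}$, then this path must contain a control-repetition $(p,m) \cdots (p,m')$ with both $m, m' > t(\varphi_1)$ and the whole intermediate segment above $t(\varphi_1)$ — otherwise the path, being "low", has length bounded and its endpoint counter is controlled, contradicting minimality or giving a direct transfer; (3) observe the segment between the two occurrences of $p$ is a "$\varphi_1$-safe" cycle whose net effect $d$ satisfies $|d| < $ length $\le$ something, and by inserting/removing copies of cycles of length dividing into $\LCM([k])$ we can change $n$ by any multiple of $K^{L+1}$ while preserving that every visited configuration stays in $\sem{\varphi_1}$ (residues mod $K^L$ unchanged) and the final $\sem{\varphi_2}$-configuration stays in $\sem{\varphi_2}$ (residue mod $K^L$, and mod $K_{\varphi_2}$ too, unchanged since $K_{\varphi_2} \mid K^{L+1}$ after possibly also shifting it — care needed: $\varphi_2$'s period may be $K^{\lud(\varphi_2)}$ which divides $K^{L+1}$ precisely when $\lud(\varphi_2) \le L+1$, guaranteed by the $\lud$-recursion $\lud(\varphi_2) \le \lud(\exists\varphi_1\U\varphi_2)$); hence a witnessing path from $(q, n \pm K^{L+1})$ exists, and by iterating, from $(q, n')$ for any $n' > t$ with $n' \equiv n \bmod K^{L+1}$. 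Summing the threshold contributions $2k^2 K_\varphi$ per subformula occurrence along this induction yields the stated bound $t(\varphi) \le 2|\varphi| k^2 K_\varphi$; and all computations (the thresholds, and $K$ itself) are polynomial-time since $K_\varphi \le 4^{k \cdot \lud(\varphi)}$ is written in binary and $\lud(\varphi) \le |\varphi|$. The delicate bookkeeping will be making the "change by exactly a multiple of $K^{L+1}$" step simultaneously compatible with $\varphi_1$-safety along the whole (possibly long) suffix and the endpoint's $\varphi_2$-membership, while not blowing the threshold past $2|\varphi|k^2 K_\varphi$.
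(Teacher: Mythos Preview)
Your overall inductive strategy matches the paper's, and the easy cases are fine. But your pumping mechanism for $\exists\psi_1\U\psi_2$ has a gap. You propose to find a control-repetition and then ``insert/remove copies of cycles of length dividing into $\LCM([k])$'' via a ``Lagrange/coin-problem estimate on cycle lengths $\le k$''. What matters, however, is not cycle \emph{length} but net \emph{counter effect}, and an arbitrary control-repetition gives you no control over that effect modulo $K_{\psi_1}$; a Frobenius-type argument on effects in $[-k,k]$ would yield multiples of their $\gcd$, not of $\LCM([k])$. The paper's device is different: look at the \emph{first} visits of the witnessing path to the counter values $n,\,n-K_{\psi_1},\,\ldots,\,n-kK_{\psi_1}$; pigeonhole on these $k+1$ control locations yields a segment whose net effect is exactly $d\cdot K_{\psi_1}$ for some $d\in[k]$, and since $d\mid K=\LCM([k])$, this effect divides $K_\varphi=K\cdot K_{\psi_1}$. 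Divisibility alone does the work.

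The two directions are also asymmetric in a way your plan glosses over. Pumping \emph{up} (transferring a witness from $(q,n)$ to $(q,n')$ with $n'>n$) just repeats the segment above $\delta/(dK_{\psi_1})$ times. \emph{Shortening} (from $n'$ down to $n$) is harder: you cannot remove a segment more times than it appears, so one good segment is not enough. The paper isolates this as a separate Claim: any $\sem{\psi_1}$-path above $T$ that drops height at least $k^2\cdot K\cdot K_{\psi_1}$ can be shortened by exactly $K\cdot K_{\psi_1}$. The proof is a two-level pigeonhole: partition the drop into $kK$ consecutive blocks of height $kK_{\psi_1}$; inside each block the first-visit pigeonhole gives a removable segment of effect $d_i K_{\psi_1}$ with $d_i\in[k]$; some value $d$ recurs in at least $K$ of the $kK$ blocks, so removing $K/d$ of those disjoint segments shortens by exactly $(K/d)\cdot d\cdot K_{\psi_1}=K\cdot K_{\psi_1}$. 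This double pigeonhole, not a coin-problem bound, is the source of the factor $k^2$ in the threshold.
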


\begin{proof}
We prove the theorem by induction on the structure of $\varphi$.
That $t(\varphi)$ can be computed in polynomial time will be obvious.

\medskip
\noindent
Assume $\varphi\in\Prop$. Then we put $t(\varphi)=0$. Recall that
$K_\varphi=K^{\lud(\varphi)}=1$. 
Trivially, (\ref{E period}) holds.

\medskip
\noindent
Assume $\varphi=\neg\psi$. Then we put
$t(\varphi)=t(\psi)$. 
Equation (\ref{E period}) follows immediately by induction hypothesis.

\medskip
\noindent
Assume $\varphi=\psi_1\wedge\psi_2$.
Then we put
$t(\varphi)=\max\{t(\psi_1),t(\psi_2)\}$.
We have
\begin{eqnarray*}
t(\varphi)&\quad=\quad&\max\{t(\psi_1),t(\psi_2)\}\\
&\stackrel{\tiny\text{IH}}{\leq}&
\max\{2\cdot |\psi_i|\cdot k^2\cdot K_{\psi_i}\mid
i\in\{1,2\}\}\\
&\leq&
2\cdot|\varphi|\cdot k^2 \cdot K_\varphi
\end{eqnarray*}
and hence $t(\varphi)$ satisfies the requirement of the theorem.
Note that
$K_\varphi=\LCM\{K_{\psi_1},K_{\psi_2}\}$ by definition.
By choice of $t(\varphi)$, 
Equation (\ref{E period}) holds
immediately due to induction hypothesis.

\medskip
\noindent
Assume $\varphi=\exists\X \psi$. Then we put
$t(\varphi)=t(\psi)+K_\psi$.
Thus we get
\begin{eqnarray*}
t(\varphi)&\quad=\quad&t(\psi)+K_\psi\\
&\stackrel{\tiny\text{IH}}{\leq}&
2\cdot |\psi|\cdot k^2\cdot K_\psi+K_\psi\\
&\leq&
2\cdot (|\psi|+1)\cdot k^2\cdot K_\psi\\
&=&
2\cdot|\varphi|\cdot k^2\cdot K_\varphi
\end{eqnarray*} 
and hence $t(\varphi)$ satisfies the requirement of the theorem.
Since $t(\varphi)-t(\psi)=K_\psi\geq 1$, we have that (\ref{E period}) follows immediately by
induction hypothesis.

\medskip
\noindent
Assume $\varphi=\exists\psi_1\U\psi_2$. 
Let us first define the threshold.
Let $T=\max\{t(\psi_1),t(\psi_2)\}$.
We put $t(\varphi)=T+2\cdot k^2\cdot K_\varphi$.
Hence we have
\begin{eqnarray*}
t(\varphi)&\quad=\quad& T+2\cdot k^2\cdot K_\varphi\\
&\stackrel{\tiny\text{IH}}{\leq}&
\max\{2\cdot|\psi_i|\cdot k^2\cdot K_{\psi_i} \mid i\in\{1,2\}\}+
2\cdot k^2\cdot K_\varphi\\
&\leq&
2\cdot \left((|\varphi|-1)+1\right)\cdot k^2\cdot K_\varphi\\
&=&
2\cdot |\varphi|\cdot k^2\cdot K_\varphi
\end{eqnarray*}
and thus  $t(\varphi)$ satisfies the requirement of the theorem.
It remains to prove (\ref{E period}).

Recall that $K_\varphi=\LCM\{K\cdot K_{\psi_1},K_{\psi_2}\}$ by definition.
Let us fix an arbitrary control location $q\in Q$ and naturals $n,n'\in\N$
such that $t(\varphi)<n<n'$ and $n\equiv n' \text{ mod } K_\varphi$. We have
to prove that (\ref{E period}) holds, i.e., 
$(q,n)\in\sem{\varphi}_{T(\O)}$ if and only if 
$(q,n')\in\sem{\varphi}_{T(\O)}$.
For this, let $\delta=n'-n$, which is a multiple of $K_\varphi$.
The current situation is shown in Figure~\ref{F Situation}.

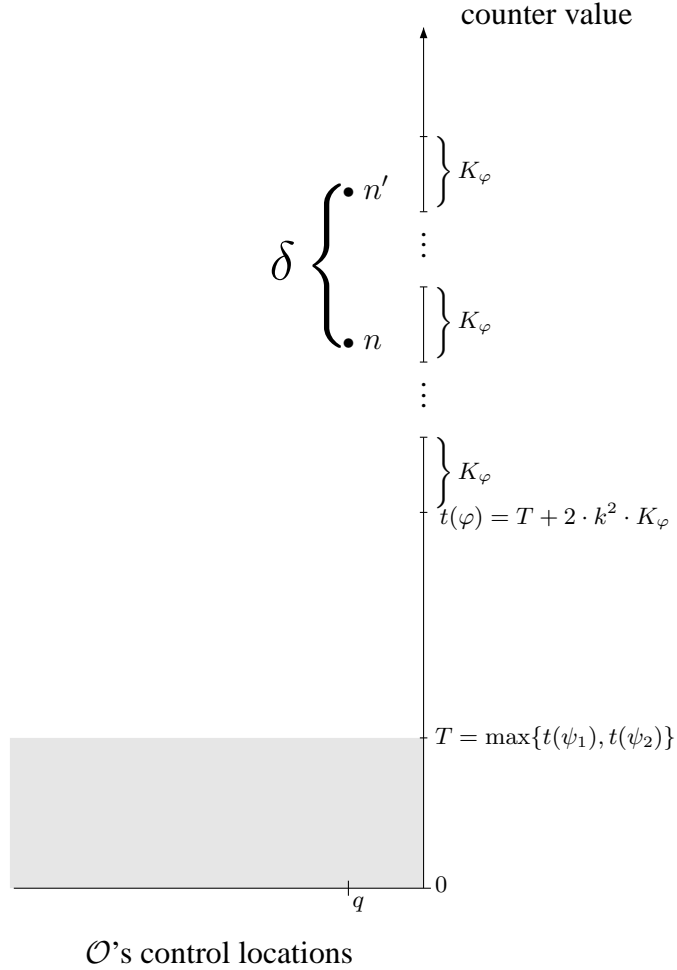
\begin{figure}[t]
\begin{center}
\setlength{\unitlength}{0.05cm}
\begin{picture}(200,250)(0,-20)

\gasset{Nframe=n,loopdiam=9,ELdist=.7}
\gasset{Nadjust=wh,Nadjustdist=1}
\gasset{curvedepth=0}
\node(li)(50,0){}		
\node(re)(163,0){}		
\node(u)(160,-1){}		
\node(o)(160,121){}		
\put(170,230){\large counter value}		
\put(163,-1){$0$}
\put(155,109){$\hspace{-0.7cm}\left.\phantom{\over{\bigcup_{\bigcup_{\bigcup}}}{\bigcup_\bigcup}}\right\} K_\varphi$}
\put(159,128){$\large\vdots$}
\put(70,-20){\large $\O$'s control locations}		
\node(u1)(160,139){}		
\node(o1)(160,161){}		
\put(155,149){$\hspace{-0.7cm}\left.\phantom{\over{\bigcup_{\bigcup_{\bigcup}}}{\bigcup_\bigcup}}\right\} K_\varphi$}
\put(159,168){$\large\vdots$}
\node(u2)(160,179){}		
\node(o2)(160,201){}		
\put(155,189){$\hspace{-0.7cm}\left.\phantom{\over{\bigcup_{\bigcup_{\bigcup}}}{\bigcup_\bigcup}}\right\} K_\varphi$}
\node(u3)(160,199){}		
\node(o3)(160,230){}		
\node(n)(140,145){$\small\bullet$}
\put(144,143.5){\large$n$}
\node(n')(140,185){$\small\bullet$}
\put(144,183.5){\large$n'$}
\put(119,162){\huge$\delta\left\{\huge\phantom{\over{\bigcup_{\bigcup}}{\bigcup_\bigcup}}\right.$}
\drawpolygon[Nframe=n,Nfill=y,fillgray=.9](50,0)(160,0)(160,40)(50,40)
\node(T)(195,40){$T=\max\{t(\psi_1),t(\psi_2)\}$}	
\drawline[AHnb=0](159,40)(161,40){}
\node(tp)(195,99){\small $t(\varphi)=T+2\cdot k^2\cdot K_\varphi$}	
\drawline[AHnb=0](159,100)(161,100){}
\drawline[AHnb=0](159,120)(161,120){}
\drawline[AHnb=0](159,140)(161,140){}
\drawline[AHnb=0](159,160)(161,160){}
\drawline[AHnb=0](159,180)(161,180){}
\drawline[AHnb=0](159,200)(161,200){}
\gasset{AHnb=0}   
\drawedge(li,re){}	 
\gasset{AHnb=1}
\drawedge[AHnb=0](u,o){}	 
\drawedge[AHnb=0](u1,o1){}	 
\drawedge[AHnb=0](u2,o2){}	 
\drawedge(u3,o3){}	 
\drawline[AHnb=0](140,2)(140,-2){}
\put(141,-5){$\large q$}
\end{picture}
\end{center}
\caption{The until case.\label{F Situation}}
\end{figure}

\medskip
\noindent
'Only-if': Let us assume that $(q,n)\in\sem{\varphi}_{T(\O)}$. Hence, there
exists a finite path
$$
\pi\ =\ (q_1,n_1)\rightarrow(q_2,n_2)\ \cdots\rightarrow (q_l,n_l), 
$$
where $l \geq 1$, $\pi[1,l-1]$ is a $\sem{\psi_1}_{T(\O)}$-path, $(q,n)=(q_1,n_1)$,
and $(q_{l},n_{l})\in\sem{\psi_2}_{T(\O)}$. 
Now we make a case distinction.

\medskip
\noindent 
{\em Case A:} $n_j>T$ for each $j\in[l]$. 
Since $K_{\psi_1}|\delta$ and $K_{\psi_2}|\delta$ we
obtain that
$\pi+\delta$ witnesses $(q,n')\in\sem{\varphi}_{T(\O)}$
by induction hypothesis. This is depicted in Figure~\ref{F Delta}.

\begin{figure}[t]
\begin{center}
\setlength{\unitlength}{0.05cm}
\begin{picture}(200,260)(0,-20)
\gasset{Nframe=n,loopdiam=9,ELdist=.7}
\gasset{Nadjust=wh,Nadjustdist=1}
\gasset{curvedepth=0}
\node(li)(50,0){}		
\node(re)(163,0){}		
\node(u)(160,-1){}		
\node(o)(160,121){}		
\put(170,230){\large counter value}		
\put(70,-20){\large $\O$'s control locations}		
\put(163,-1){$0$}
\put(155,109){$\hspace{-0.7cm}\left.\phantom{\over{\bigcup_{\bigcup_{\bigcup}}}{\bigcup_\bigcup}}\right\} K_\varphi$}
\put(159,128){$\large\vdots$}
\node(u1)(160,139){}		
\node(o1)(160,161){}		
\put(155,149){$\hspace{-0.7cm}\left.\phantom{\over{\bigcup_{\bigcup_{\bigcup}}}{\bigcup_\bigcup}}\right\} K_\varphi$}
\put(159,168){$\large\vdots$}
\node(u2)(160,179){}		
\node(o2)(160,201){}		
\put(155,189){$\hspace{-0.7cm}\left.\phantom{\over{\bigcup_{\bigcup_{\bigcup}}}{\bigcup_\bigcup}}\right\} K_\varphi$}
\node(u3)(160,199){}		
\node(o3)(160,230){}		
\node(n)(140,145){$\small\bullet$}
\put(144,143.5){\large$n$}
\node(n')(140,185){$\small\bullet$}
\put(144,183.5){\large$n'$}
\put(120,100){\large $\pi$}
\put(100,155){\large $\pi+\delta$}
\drawcurve(140,145)(120,90)(100,105)(80,62)(100,80)(60,102)(40,144)
\drawcurve[dash={0.4 1}0](140,185)(120,130)(100,145)(80,102)(100,120)(60,142)(40,184)
\drawpolygon[Nframe=n,Nfill=y,fillgray=.9](50,0)(160,0)(160,40)(50,40)
\drawline[AHnb=0](140,2)(140,-2){}
\put(141,-5){$\large q$}
\node(T)(195,40){$T=\max\{t(\psi_1),t(\psi_2)\}$}	
\drawline[AHnb=0](159,40)(161,40){}
\node(tp)(195,99){\small $t(\varphi)=T+2\cdot k^2\cdot K_\varphi$}	
\drawline[AHnb=0](159,100)(161,100){}
\drawline[AHnb=0](159,120)(161,120){}
\drawline[AHnb=0](159,140)(161,140){}
\drawline[AHnb=0](159,160)(161,160){}
\drawline[AHnb=0](159,180)(161,180){}
\drawline[AHnb=0](159,200)(161,200){}
\gasset{AHnb=0}
\drawedge(li,re){}	 
\gasset{AHnb=1}
\drawedge[AHnb=0](u,o){}	 
\drawedge[AHnb=0](u1,o1){}	 
\drawedge[AHnb=0](u2,o2){}	 
\drawedge(u3,o3){}	 
\end{picture}
\end{center}
\caption{The path $\pi+\delta$ witnesses $(q,n')\in\sem{\varphi}_{T(\O)}$.\label{F Delta}}
\end{figure}

\medskip
\noindent 
{\em Case B:} $n_j\leq T$ for some $j\in[l]$. 
For each of $\pi$'s counter values $h\in\{n_i\mid i\in[l]\}$, define
$$
\mu(h)\quad=\quad\min\{i\in[l]\mid n_i=h\}
$$
to be the minimal position in $\pi$ whose corresponding state has counter value
$h$.
We are interested in $\pi$'s first states of counter value
$n, n-K_{\psi_1},n-2\cdot K_{\psi_1}$, and so on.
For this, define $m(i)=\mu(n-i\cdot K_{\psi_1})$ for every appropriate $i\in\N$.
By the pigeonhole principle, there are distinct $i_1,i_2\in[0,k]$ such that
$i_1 < i_2$ and 
$q_{m(i_1)}=q_{m(i_2)}$. 
Note that $i_1$ and $i_2$ are well-defined since 
$$n-i_1\cdot K_{\psi_1}\ >\  n-i_2\cdot K_{\psi_1} \ \geq\ n-k\cdot K_{\psi_1}\ \geq\ 
T+2\cdot k^2\cdot K_\varphi-k\cdot K_{\psi_1}\ >\  T.$$
Let $p = q_{m(i_1)}=q_{m(i_2)}$ and $d=i_1-i_2 \in [k]$.
Hence, $d$ divides $K$.
Moreover, let $\sigma$ denote $\pi$'s subpath from $(q_{m(i_1)},n_{m(i_1)})
= (p, n-i_1 \cdot K_{\psi_1})$ down to
$(q_{m(i_2)},n_{m(i_2)}) = (p, n-i_2 \cdot K_{\psi_1})
= (p, n-i_1 \cdot K_{\psi_1} - d \cdot K_{\psi_1} )$, i.e., formally
$\sigma=\pi[m(i_1),m(i_2)]$. 
The current situation is depicted in Figure~\ref{F Catch}.
The path $\sigma$ is indicated thick.

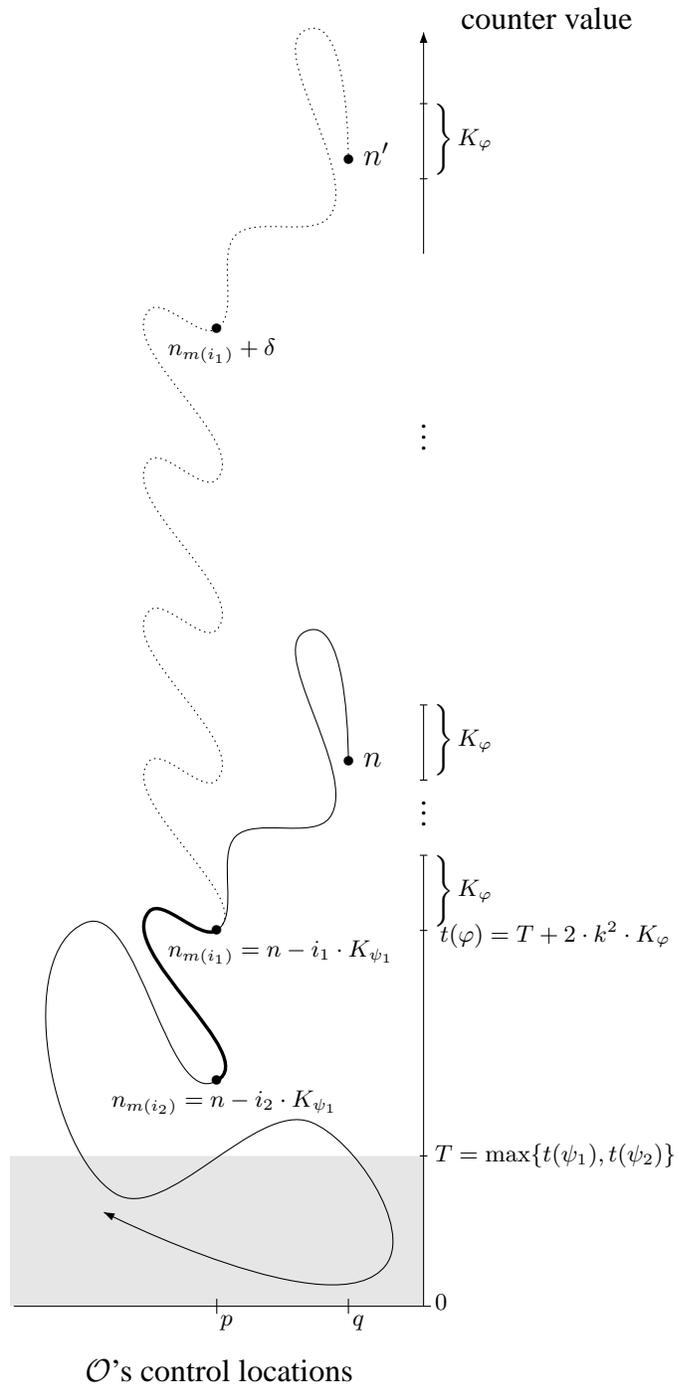
\begin{figure}
\begin{center}
\setlength{\unitlength}{0.05cm}
\begin{picture}(200,340)(0,-20)
\gasset{Nframe=n,loopdiam=9,ELdist=.7}
\gasset{Nadjust=wh,Nadjustdist=1}
\gasset{curvedepth=0}
\node(li)(50,0){}		
\node(re)(163,0){}		
\node(u)(160,-1){}		
\node(o)(160,121){}		
\put(170,340){\large counter value}		
\put(163,-1){$0$}
\put(155,109){$\hspace{-0.7cm}\left.\phantom{\over{\bigcup_{\bigcup_{\bigcup}}}{\bigcup_\bigcup}}\right\} K_\varphi$}
\put(159,128){$\large\vdots$}
\node(u1)(160,139){}		
\node(o1)(160,161){}		
\put(155,149){$\hspace{-0.7cm}\left.\phantom{\over{\bigcup_{\bigcup_{\bigcup}}}{\bigcup_\bigcup}}\right\} K_\varphi$}
\put(159,228){$\large\vdots$}
\node(u2)(160,209){}		
\node(o2)(160,241){}		
\put(155,309){$\hspace{-0.7cm}\left.\phantom{\over{\bigcup_{\bigcup_{\bigcup}}}{\bigcup_\bigcup}}\right\} K_\varphi$}
\node(u3)(160,279){}		
\node(o3)(160,340){}		
\node(n)(140,145){$\small\bullet$}
\put(144,143.5){\large$n$}
\node(n')(140,305){$\small\bullet$}
\put(144,303.5){\large$n'$}
\drawpolygon[Nframe=n,Nfill=y,fillgray=.9](50,0)(160,0)(160,40)(50,40)
\put(70,-20){\large $\O$'s control locations}		
\drawline[AHnb=0](140,2)(140,-2){}
\put(141,-5){$\large q$}
\drawline[AHnb=0](105,2)(105,-2){}
\put(106,-5){$\large p$}
\node(mi1)(105,100){$\small\bullet$}
\put(92,93){\small$n_{m(i_1)}= n-i_1\cdot K_{\psi_1}$}
\node(mi1')(105,260){$\small\bullet$}
\put(92,253){\small$n_{m(i_1)}+\delta$}
\node(mi2)(105,60){$\small\bullet$}
\put(77,53){\small$n_{m(i_2)}= n-i_2\cdot K_{\psi_1}$}
\drawcurve[AHnb=0,dash={0.4 1}0](140,305)(130,340)(135,290)(110,285)(105,260)
(87,265)(105,220)(87,225)(105,180)(87,185)(105,140)(87,145)(105,100)(104.9,100.1)
\drawcurve(140,145)(130,180)(135,130)(110,125)(105,100)(87,105)(105,60)(70,102)(78,30)(130,49)(150,11)(75,25)
\drawcurve[linewidth=0.9](105.2,101)(105,100)(87,105)(105,60)(104.7,60.42)
\node(T)(195,40){$T=\max\{t(\psi_1),t(\psi_2)\}$}	
\drawline[AHnb=0](159,40)(161,40){}
\node(tp)(195,99){\small $t(\varphi)=T+2\cdot k^2\cdot K_\varphi$}	
\drawline[AHnb=0](159,100)(161,100){}
\drawline[AHnb=0](159,120)(161,120){}
\drawline[AHnb=0](159,140)(161,140){}
\drawline[AHnb=0](159,160)(161,160){}
\drawline[AHnb=0](159,300)(161,300){}
\drawline[AHnb=0](159,320)(161,320){}
\gasset{AHnb=0}
\drawedge(li,re){}	 
\gasset{AHnb=1}
\drawedge[AHnb=0](u,o){}	 
\drawedge[AHnb=0](u1,o1){}	 
\drawedge(u3,o3){}	 
\end{picture}
\end{center}
\caption{The path from $(q,n)$ can be merged from $(q,n')$.\label{F Catch}}
\end{figure}

We have to prove $(q,n')\in\sem{\varphi}_{T(\O)}$.
For this, we show that there exists a $\sem{\psi_1}_{T(\O)}$-path
$\pi_\downarrow$ from $(q,n')$ down to $(q_{m(i_1)},n_{m(i_1)}) = (p, n-i_1 \cdot K_{\psi_1})$.
Thus, since $\pi_\downarrow$ meets $\pi$ in $(p,n-i_1\cdot K_{\psi_1})$, it follows 
$(q,n')\in\sem{\varphi}_{T(\O)}$. The path $\pi_\downarrow$ is indicated by a 
dashed  curve in Figure~\ref{F Catch}.
Our path $\pi_\downarrow$ consists of two concatenated paths.
First recall that the path $\sigma$ loses a counter height of precisely
$d\cdot K_{\psi_1}$.
The first part of $\pi_\downarrow$ is the path
$\pi[1,m(i_1)]$ shifted upwards by the offset $\delta$.
The second part of $\pi_\downarrow$ is the path from
$(q_{m(i_1)},n_{m(i_1)}+\delta) = (p, n-i_1 \cdot K_{\psi_1}+\delta)$ down to 
$(q_{m(i_1)},n_{m(i_1)}) = (p, n-i_1 \cdot K_{\psi_1})$ that we can obtain by
first shifting $\sigma$ up by the offset $\delta$ and then downward pumping it
precisely $\frac{\delta}{d\cdot K_{\psi_1}}$ many times.
This is possible since $\delta$ is a multiple of $K_\varphi$, which is in turn
a multiple of $K\cdot K_{\psi_1}$, hence $\frac{\delta}{d\cdot
K_{\psi_1}}\in\N$.

\medskip
\noindent
'If': Assume that $(q,n')\in\sem{\varphi}_{T(\O)}$.  To prove that
$(q,n)\in\sem{\varphi}_{T(\O)}$, we will use the following claim.

\medskip

\noindent
{\em Claim:}
Assume some $\sem{\psi_1}_{T(\O)}$-path
$(q_1,n_1) \to (q_2,n_2) \to \cdots \to (q_l,n_l)$ 
whose counter values are all
strictly above $T$ and where $n_1-n_l\geq k^2\cdot K\cdot K_{\psi_1}$. 
Then there exists a $\sem{\psi_1}_{T(\O)}$-path
from $(q_1,n_1)$ to $(q_l,n_l+K\cdot K_{\psi_1})$ strictly above $T+K\cdot
K_{\psi_1}$.
The statement of the claim is depicted in Figure~\ref{F Claim}.

\begin{figure}[t]
\begin{center}
\setlength{\unitlength}{0.05cm}
\begin{picture}(200,250)(0,-20)
\gasset{Nframe=n,loopdiam=9,ELdist=.7}
\gasset{Nadjust=wh,Nadjustdist=1}
\gasset{curvedepth=0}
\drawpolygon[Nframe=n,Nfill=y,fillgray=.9](40,0)(160,0)(160,40)(40,40)
\node(li)(40,0){}		
\node(re)(163,0){}		
\node(u)(160,-1){}		
\node(o)(160,121){}		
\put(170,230){\large counter value}		
\put(163,-1){$0$}
\node(u1)(160,139){}		
\node(o1)(160,161){}		
\node(u2)(160,179){}		
\node(o2)(160,201){}		
\node(u3)(160,199){}		
\node(o3)(160,230){}		
\node(T)(195,40){$T=\max\{t(\psi_1),t(\psi_2)\}$}	
\drawline[AHnb=0](159,40)(161,40){}
\node(tp)(195,99){\small $t(\varphi)=T+2\cdot k^2\cdot K_\varphi$}	
\drawline[AHnb=0](159,100)(161,100){}
\gasset{AHnb=0}   
\drawedge(li,re){}	 
\gasset{AHnb=1}
\drawedge(u,o3){}	 
\node(q1)(50,215){\tiny$\bullet$}
\put(50,220){$n_1$}
\node(ql)(110,88){\tiny$\bullet$}
\put(110,83){$n_l$}
\node(ql)(110,110){\tiny$\bullet$}
\put(110,115){$n_l+K\cdot K_{\psi_1}$}
\drawline[AHnb=0](50,2)(50,-2){}
\put(51,-5){$\large q_1$}
\drawline[AHnb=0](110,2)(110,-2){}
\put(111,-5){$\large q_l$}
\drawcurve[dash={0.4 2}0](50,215)(60,180)(90,150)(90,64)(110,110)
\drawline[AHnb=0,dash={0.9 5}0](40,215)(160,215){}
\drawline[AHnb=0,dash={0.9 5}0](40,88)(160,88){}
\drawline[AHnb=0,dash={0.9 5}0](40,62)(160,62){}
\drawline[AHnb=0,dash={0.9 5}0](40,110)(160,110){}
\drawcurve(50,215)(60,120)(70,150)(70,95)(50,110)(90,42)(110,88)
\put(119,162){}
\put(60,-20){\large $\O$'s control locations}		
\put(105,97){\large$\hspace{-0.7cm}\left.\phantom{\over{\bigcup_{\bigcup_{\bigcup}}}{\bigcup_\bigcup}}\right\}K\cdot K_{\psi_1}$}
\put(105,49){\large$\hspace{-0.7cm}\left.\phantom{\over{\bigcup_{\bigcup_{\bigcup}}}{\bigcup_\bigcup}}\right\}K\cdot K_{\psi_1}$}
\put(-5,150){$\geq k^2 \cdot K \cdot K_{\psi_1}$ \large$\begin{cases}
  \phantom{ \bigcup} & \\ 
  \phantom{ \bigcup} & \\ 
  \phantom{ \bigcup} & \\ 
  \phantom{ \bigcup} & \\ 
  \phantom{ \bigcup} & \\ 
  \phantom{ \bigcup} & \\ 
  \phantom{ \bigcup} & \\ 
  \phantom{ \bigcup} & \\ 
  \phantom{ \bigcup} & \\ 
  \phantom{ \bigcup} & \\ 
  \phantom{ \bigcup} &  
  \end{cases}$}
\end{picture}
\end{center}
\caption{Shortening paths above $T$ of height difference at least 
$k^2 \cdot K \cdot K_{\psi_1}$ by height $K\cdot K_{\psi_1}$.\label{F Claim}}
\end{figure}
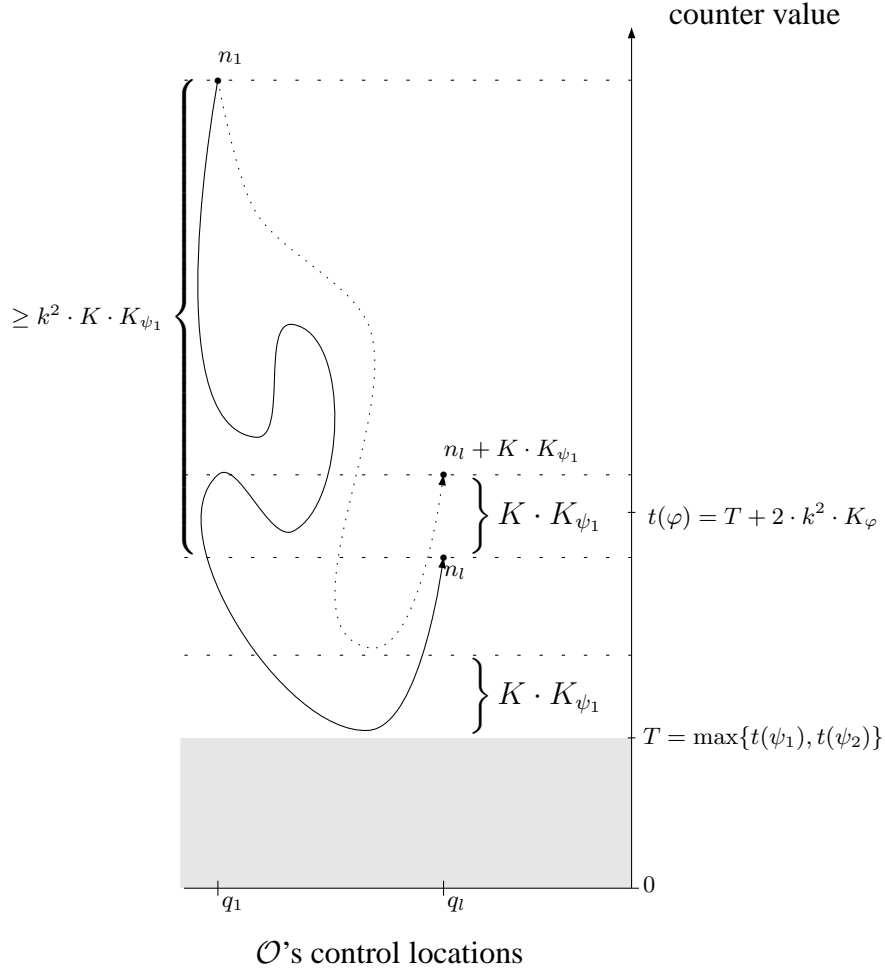

\vspace{-0.1cm}

\bigskip

\noindent
Thus, the claim tells us that paths that lose height
at least $k^2\cdot K\cdot K_{\psi_1}$ and
whose states all have counter values strictly
above $T$ can be lifted by a height
precisely $K\cdot K_{\psi_1}$.

Let us postpone the proof of the claim and first finish the proof of the 
if-direction. Since by assumption $(q,n')\in\sem{\varphi}_{T(\O)}$, there exists
a finite path
$$
\pi\ =\ (q_1,n_1)\rightarrow(q_2,n_2)\ \cdots\rightarrow (q_l,n_l), 
$$
where $\pi[1,l-1]$ is a $\sem{\psi_1}_{T(\O)}$-path, $(q,n')=(q_1,n_1)$, 
and where $(q_{l},n_{l})\in\sem{\psi_2}_{T(\O)}$. 
To prove $(q,n)\in\sem{\varphi}_{T(\O)}$, we make a case distinction.

\medskip
\noindent
{\em Case A:} $n_j>T$ for each $j\in[l]$. 
Assume that the path $\pi[1,l-1]$ contains two states whose counter difference
is at least $k^2\cdot K\cdot K_{\psi_1}+K_\varphi$ which is (strictly) greater
than $k^2\cdot K\cdot K_{\psi_1}$. Since $K_\varphi$ is a multiple of $K\cdot
K_{\psi_1}$ by definition, we can shorten $\pi[1,l-1]$ by a height precisely
$K_\varphi$ by applying the above claim $\frac{K_\varphi}{K\cdot
K_{\psi_1}}\in\N$
many times. We repeat this shortening process of $\pi[1,l-1]$ by height
$K_\varphi$ as long as this is no longer possible, i.e., until
there are no two states whose counter difference is at least
$k^2\cdot K \cdot K_{\psi_1}+K_\varphi$.
Let $\sigma$ denote the $\sem{\psi_1}_{T(\O)}$-path starting in $(q,n')$ that we obtain
from $\pi[1,l-1]$ until the before mentioned shortening is no longer possible.
Thus, $\sigma$ ends in some state
with a counter value that is congruent $n_{l-1}$ modulo $K_\varphi$ (since we
shortened $\pi[1,l-1]$ by a multiple of $K_\varphi$). 
Since $K_\varphi$ is in turn a multiple of $K_{\psi_2}$, we can
build a path $\sigma'$ which extends the path $\sigma$ by a single
transition to some state that satisfies $\psi_2$ by induction hypothesis. 
Moreover, by our shortening process,
the counter difference between any two states in $\sigma'$ is at most
$$k^2\cdot K\cdot K_{\psi_1}+K_\varphi\quad\leq\quad 2\cdot k^2\cdot
K_\varphi.$$ 

Since $n > T+2\cdot k^2\cdot K_\varphi$, it follows that the path
$\sigma'-\delta$ (which starts in $(q,n)$)
is strictly above $T$. Moreover, since $\delta$ is a multiple of 
$K_{\psi_1}$ and $K_{\psi_2}$, this path
witnesses $(q,n)\in\sem{\varphi}_{T(\O)}$ by induction hypothesis.

\medskip
\noindent
{\em Case B:} $n_j=T$ for some $j\in[l]$. Let $j_0\in[l]$ be minimal such that
$n_{j_0}=T$.
Note that $\pi[1,j_0-1]$ is a $\sem{\psi_1}_{T(\O)}$-path whose counter
values are all strictly above $T$.
Moreover the maximal counter difference between two states of $\pi[1,j_0-1]$ is
at least
$$
2\cdot k^2\cdot K_\varphi-1+\delta\quad\geq\quad k^2\cdot K\cdot
K_{\psi_1}+\delta.
$$
Hence, in analogy to case A, we can shorten $\pi[1,j_0-1]$ {\em precisely} by
height $\delta$. Let $\sigma$ denote the resulting path. Then 
$\sigma-\delta$ is a $\sem{\psi_1}_{T(\O)}$-path that ends in 
$(q_{j_0-1}, n_{j_0-1})$ and starts in $(q,n)$. We can append $\pi[j_0-1,l]$
to this path. The resulting path 
witnesses $(q,n)\in\sem{\varphi}_{T(\O)}$.

\medskip
\noindent
It remains to prove the above claim.

\medskip
\noindent
{\em Proof of the claim.}
For each counter value $h\in\{n_i\mid i\in[l]\}$ that appears in $\pi$, let
$$\mu(h)=\min\{i\in[l]\mid n_i=h\}$$ denote the minimal position
in $\pi$ whose corresponding state has counter value $h$.
Define $\Delta=k\cdot K_{\psi_1}$. We will be interested in
$k\cdot K$ many consecutive intervals (of counter values) each of size $\Delta$
-- we will call these intervals blocks.
Define the bottom $b=n_1-(k\cdot K)\cdot\Delta$. A {\em block} is an interval
$B_i=[b+(i-1)\cdot\Delta,b+i\cdot\Delta]$ for some $i\in[k\cdot K]$. 
Since each block has size $\Delta=k\cdot K_{\psi_1}$, we can think of each block
$B_i$ to consist of $k$ consecutive {\em subblocks} of size $K_{\psi_1}$ each.
Note that each subblock has two extremal elements, namely its {\em upper} and
{\em lower boundary}. Thus all $k$ subblocks have $k+1$ boundaries in total.
Hence, by the pigeonhole principle, 
for each block $B_i$, there exists some distance $d_i\in[k]$ and 
two distinct boundaries $\beta(i,1)$ and $\beta(i,2)$ of distance $d_i\cdot
K_{\psi_1}$ such that the control location of $\pi$'s earliest state
of counter value $\beta(i,1)$ agrees with the control location of $\pi$'s
earliest state of counter value $\beta(i,2)$, i.e., formally
$$
q_{\mu(\beta(i,1))}\ =\ q_{\mu(\beta(i,2))}.
$$
The situation is depicted in Figure~\ref{F Blocks}. Observe that shortening the path $\pi$ by gluing together $\pi$'s states
at position $\mu(\beta(i,1))$ and $\mu(\beta(i,2))$ still results in a
$\sem{\psi_1}_{T(\O)}$-path by induction hypothesis, since we shortened the
height of $\pi$ by a multiple of $K_{\psi_1}$.
Our overall goal is to shorten $\pi$ by gluing together states only of certain 
blocks such that we obtain a path whose height is in total precisely
$K\cdot K_{\psi_1}$ smaller than $\pi$'s.

Recall that there are $k\cdot K$ many blocks.
By the pigeonhole principle there is some $d\in[k]$ such that $d_i=d$ for at
least $K$ many blocks $B_i$. By gluing together $\frac{K}{d} \in \mathbb{N}$ pairs of states of distance
$d\cdot K_{\psi_1}$ each, we shorten $\pi$ by a height of $\frac{K}{d}\cdot
d\cdot K_{\psi_1}=K\cdot K_{\psi_1}$. This proves the claim.

\begin{figure}[t]
\begin{center}
\setlength{\unitlength}{0.05cm}
\begin{picture}(200,280)(0,-20)
\gasset{Nframe=n,loopdiam=9,ELdist=.7}
\gasset{Nadjust=wh,Nadjustdist=1}
\gasset{curvedepth=0}
\put(170,230){\large counter value}		
\put(163,-1){$0$}
\node(li)(50,0){}		
\node(re)(163,0){}		
\node(u)(160,-1){}		
\node(o3)(160,230){}		
\gasset{AHnb=0}  
\drawedge(li,re){}	 
\gasset{AHnb=1}
\drawedge(u,o3){}	 
\put(49,29){
\drawcurve[AHnb=0](50,180)(80,200)(70,160)(50,175)(70,130)(65,120)(40,110)(45,100)(55,90)(60,80)(65,70)(44,50)(54,69)(70,49)(70,35)
\put(50,180){\tiny$\bullet$}
\put(47,185){$n_1$}
\drawline[AHnb=0](40,180)(80,180){}
\drawline[AHnb=0,dash={0.4 2}0](40,170)(80,170){}
\drawline[AHnb=0,dash={0.4 2}0](40,160)(80,160){}
\drawline[AHnb=0,dash={0.4 2}0](40,150)(80,150){}
\drawline[AHnb=0,dash={0.4 2}0](40,140)(80,140){}
\drawline[AHnb=0,dash={0.4 2}0](40,130)(80,130){}
\drawline[AHnb=0](40,120)(80,120){}
\put(-6,147){$\Delta=k\cdot K_{\psi_1}$\huge$\begin{cases} & \\ & \\ & \\ \end{cases}$}
\put(0,135){block $B_1$}
\put(53,143){$\hspace{-1cm}\Large \left.\phantom{\begin{cases} \bigcup& \\ &
   \bigcup\\    \end{cases}}\right\} d_1\cdot
    K_{\psi_1}$}
\put(64,93){$\hspace{-1cm} \left.\phantom{\begin{cases} & \\ & \\ & \\ &\\ &
   \\ & 
   \bigcup\\    \end{cases}}\right\} d_2\cdot
    K_{\psi_1}$}
\put(88,173){$\hspace{-0.7cm}\large \left.\phantom{\bigcup}\right\} K_{\psi_1}$}
\drawline[AHnb=0](40,120)(80,120){}
\drawline[AHnb=0,dash={0.4 2}0](40,110)(80,110){}
\drawline[AHnb=0,dash={0.4 2}0](40,100)(80,100){}
\drawline[AHnb=0,dash={0.4 2}0](40,90)(80,90){}
\drawline[AHnb=0,dash={0.4 2}0](40,80)(80,80){}
\drawline[AHnb=0,dash={0.4 2}0](40,70)(80,70){}
\drawline[AHnb=0](40,60)(80,60){}
\put(3,86){block $B_2$\huge$\begin{cases} & \\ & \\ & \\ \end{cases}$}
\put(30,-45){\large $\O$'s control locations}		
\put(69,26){$\vdots$}
    }
\end{picture}
\caption{Repeating control locations in blocks\label{F Blocks}}
\end{center}
\end{figure}
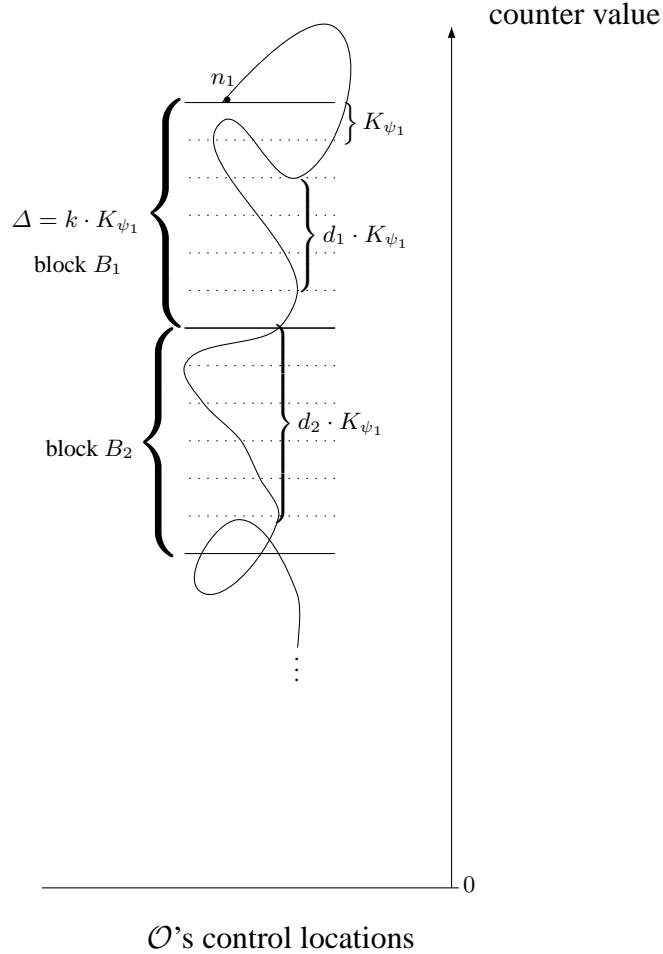

\medskip
\noindent
Assume $\varphi=\exists\psi_1\WU\psi_2$. This can can easily seen to be proven
analogously to the case when $\varphi=\exists \psi_1\U\psi_2$.
\qed
\end{proof}

\begin{theorem} \label{CTL upper bound_0}
The following problem can be solved in time 
$O(\log(n) +|Q|^3 \cdot |\varphi|^2 \cdot 4^{|Q| \cdot \lud(\varphi)} \cdot |\delta_0 \cup \delta_{>0}|)$:

\noindent
INPUT: A one-counter process $\O=(Q,\{Q_p\mid p\in\Prop\},\delta_0,\delta_{>0})$,
a $\CTL$ formula $\varphi$, a control location $q\in Q$ and some natural $n\in\N$ given in binary.

\noindent
QUESTION: $(q,n)\in\sem{\varphi}_{T(\O)}$?
\end{theorem}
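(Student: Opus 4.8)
The plan is to turn the periodicity result of Theorem~\ref{thm-CTL-periodic} into an algorithm that, for every subformula $\psi$ of $\varphi$ and every control location $q \in Q$, computes the complete satisfaction information on a finite ``window'' of counter values, namely on $[0, t(\psi) + K_\psi]$, together with the residue classes modulo $K_\psi$ of those values above $t(\psi)$ that satisfy $\psi$ at $q$. Since $K = \LCM([k]) \le 4^k$ by Nair's bound and $K_\psi = K^{\lud(\psi)} \le 4^{|Q|\cdot\lud(\varphi)}$, while $t(\psi) \le 2\cdot|\varphi|\cdot k^2\cdot K_\varphi$, this window has size bounded by the claimed running time (up to the $\log n$ term for reading $n$); the information can be stored as a bit-table indexed by (subformula, control location, counter value in the window). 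The base of the induction, $\psi = p \in \Prop$, is immediate from the sets $Q_p$; negation and conjunction are pointwise and trivial on the window.

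First I would handle $\exists\X\psi$: here $(q,m) \models \exists\X\psi$ iff some transition from $(q,m)$ leads to a state satisfying $\psi$, and since the counter changes by at most $1$ per step, to decide this for all $m$ in the window for $\exists\X\psi$ it suffices to know the $\psi$-table on a window that is larger by one, plus the periodic behaviour of $\psi$; one scans the edge set $\delta_0 \cup \delta_{>0}$ once per counter value, giving the $|\delta_0 \cup \delta_{>0}|$ factor. The interesting case is $\exists\psi_1\U\psi_2$ (and $\exists\psi_1\WU\psi_2$ analogously). Here I would compute, on the window for $\varphi = \exists\psi_1\U\psi_2$, the set of states satisfying $\varphi$ by a least-fixed-point / backward-reachability computation inside a suitably enlarged finite truncation of $T(\O)$: concretely, work in the finite graph on states $(q,m)$ with $m$ ranging over $[0, t(\varphi) + K_\varphi + c]$ for a small constant slack $c$ (needed because a witnessing path may temporarily climb above the window before descending), restricted to states where $\psi_1$ holds, and propagate satisfaction of $\psi_2$ backwards along edges. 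The key point — which is exactly what Theorem~\ref{thm-CTL-periodic} and the structure of its proof (Case A/Case B, the claim about shortening paths) guarantee — is that if $(q,m) \models \varphi$ at all, then there is a witnessing path whose counter values never exceed $t(\varphi) + K_\varphi + c$, so this bounded computation is sound and complete on the window. For counter values above $t(\varphi)$ we simply read off the residue classes mod $K_\varphi$ from the window, by Theorem~\ref{thm-CTL-periodic}.

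The running-time accounting then goes as follows: there are $|\varphi|$ subformulas; for each, the window has $O(|\varphi|\cdot k^2 \cdot K_\varphi)$ counter values and $k$ control locations, so $O(|\varphi|\cdot k^3 \cdot K_\varphi)$ states; the backward-reachability / fixed-point computation for an until-subformula visits each edge $O(1)$ amortized times per state (or one can bound it crudely by the number of states times $|\delta_0\cup\delta_{>0}|$), yielding the factor $|Q|^3\cdot|\varphi|^2\cdot K_\varphi \cdot |\delta_0\cup\delta_{>0}|$ overall after summing, and $K_\varphi \le 4^{|Q|\cdot\lud(\varphi)}$ by Nair. Reading $n$ in binary and a final table lookup (either $n$ lies in the window, or $n > t(\varphi)$ and we check $n \bmod K_\varphi$) costs $O(\log n)$. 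I expect the main obstacle to be the bookkeeping of the ``slack'': making precise how far above the nominal window a witnessing path for an until-formula can go, and checking that the slack stays bounded by an absolute constant (or at worst by $K_\varphi$) independently of $n$, so that it does not blow up the window size. This is where one has to reuse the path-surgery arguments (downward pumping, the shortening claim) from the proof of Theorem~\ref{thm-CTL-periodic} rather than just its statement; everything else is routine dynamic programming over a finite truncation of $T(\O)$.
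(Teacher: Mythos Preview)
Your approach is essentially correct and reaches the same asymptotic bound, but it differs from the paper's in one key respect: where you propose to \emph{truncate} the transition system at level $t(\varphi)+K_\varphi+c$ and then argue that witnessing paths for until-formulas can be confined to this truncation (your ``slack'' issue), the paper instead \emph{wraps} the system. Concretely, it works over a single window $Q\times[0,t(\varphi)+K_\varphi]$ for \emph{all} subformulas (not per-subformula windows), and whenever the fixpoint iteration would step from level $t(\varphi)+K_\varphi$ to $t(\varphi)+K_\varphi+1$, it redirects that successor to level $t(\varphi)+1$. Correctness of this wrap comes straight from Theorem~\ref{thm-CTL-periodic}: for every subformula $\psi$, the satisfaction of $\psi$ at $(q,t(\varphi)+K_\varphi+1)$ agrees with that at $(q,t(\varphi)+1)$, since both counter values exceed $t(\psi)$ and differ by $K_\varphi$, a multiple of $K_\psi$. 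This sidesteps entirely the obstacle you flag --- no bound on how high a witnessing path may climb is ever needed --- and the standard finite-state CTL algorithm runs unchanged on the wrapped graph.

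Your route can also be completed, but note that the slack is not an absolute constant: one has to reuse the path-shortening claim from the proof of Theorem~\ref{thm-CTL-periodic} (applied to upward excursions as well as downward ones) to show that any witness can be compressed so that its maximum level stays within $O(k^2 K_\varphi)$ of the starting level, hence within $O(|\varphi|\cdot k^2\cdot K_\varphi)$ overall. This does not hurt the asymptotics, but it is genuinely extra work that the wrapping trick renders unnecessary.
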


\begin{proof}
Let $k = |Q|$. 
We first compute the threshold $t(\varphi) \leq 2 \cdot |\varphi| \cdot k^2
\cdot K_\varphi$ from  Theorem~\ref{thm-CTL-periodic}. Then we have $(q,n)\in\sem{\varphi}_{T(\O)}$
if and only if $(q,m)\in\sem{\varphi}_{T(\O)}$, where 
either $n=m \leq t(\varphi)$ or $n > t(\varphi)$ and
$m$ is the unique number in the interval $[t(\varphi)+1,
t(\varphi)+K_\varphi]$, which is congruent $n$ modulo $K_\varphi$. We can find
this number in time $O(\log(n))$. Now we check $(q,m)\in\sem{\varphi}_{T(\O)}$
using the standard algorithm for model checking $\CTL$ on finite transition 
systems. The only difference is that if we reach a counter value of
$t(\varphi)+K_\varphi+1$, then we replace this value by
$t(\varphi)+1$. More precisely, we compute inductively for every subformula
$\psi$ of $\varphi$ the set 
$$
S(\psi) = \sem{\psi}_{T(\O)} \cap (Q \times [t(\varphi)+K_\varphi]).
$$
Let us sketch the case of an until formula
$\psi =\exists \psi_1\U\psi_2$. By induction, we have already
computed the sets $S(\psi_1)$ and $S(\psi_2)$.
The set $S(\psi)$ is computed by a fixpoint iteration. Initially, 
we put all elements from $S(\psi_2)$ into $S(\psi)$.
Then, we perform the following fixpoint iteration process
as long as possible. Assume that $(p,k) \in S(\psi_1)$ 
is a state, which does not belong to the current $S(\psi)$. Assume 
that $(p,k)$ has a $T(\O)$-successor (where a counter value of 
$t(\varphi)+K_\varphi+1$ is reduced to $t(\varphi)+1$) in $S(\psi)$. Then we add
$(p,k)$ to $S(\psi)$. The correctness of this fixpoint iteration
process follows from Theorem~\ref{thm-CTL-periodic}. 
The size of each set $S(\psi)$ is bounded by 
$O(|Q| \cdot |\varphi| \cdot k^2 \cdot K_\varphi) \subseteq 
O(|Q|^3 \cdot |\varphi| \cdot 4^{|Q| \cdot \lud(\varphi)})$.
Computing $S(\psi)$ can be done in time 
$O(|Q|^3 \cdot |\varphi| \cdot 4^{|Q| \cdot \lud(\varphi)} \cdot |\delta_0
\cup \delta_{>0}|)$. Hence, the total time bound is
$O(\log(n) + |Q|^3 \cdot |\varphi|^2 \cdot 4^{|Q| \cdot \lud(\varphi)} \cdot
|\delta_0 \cup \delta_{>0}|)$.
\qed
\end{proof}

\begin{corollary} \label{CTL upper bound}
For every fixed one-counter process $\O=(Q,\{Q_p\mid
p\in\Prop\},\delta_0,\delta_{>0})$
and every fixed $k$ the following problem is in $\mathsf{P}$:

\noindent
INPUT: A $\CTL$ formula $\varphi$ with $\lud(\varphi)\leq k$, a control location
$q\in Q$ and some natural $n\in\N$ given in binary.

\noindent
QUESTION: $(q,n)\in\sem{\varphi}_{T(\O)}$?
\end{corollary}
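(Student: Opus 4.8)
The plan is to obtain the corollary as an immediate specialization of Theorem~\ref{CTL upper bound_0}. Recall that that theorem decides whether $(q,n)\in\sem{\varphi}_{T(\O)}$ in time
$$
O\bigl(\log(n) + |Q|^3 \cdot |\varphi|^2 \cdot 4^{|Q| \cdot \lud(\varphi)} \cdot |\delta_0 \cup \delta_{>0}|\bigr).
$$
First I would note that, since the one-counter process $\O$ is fixed, the quantities $|Q|$ and $|\delta_0 \cup \delta_{>0}|$ are constants that do not depend on the input. Since moreover $k$ is fixed and the input formula satisfies $\lud(\varphi)\le k$, the factor $4^{|Q|\cdot\lud(\varphi)}$ is bounded by the constant $4^{|Q|\cdot k}$. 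Consequently the running time above collapses to $O(\log(n)+|\varphi|^2)$, with the hidden constant depending only on the fixed data $\O$ and $k$.

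It then remains to observe that this is polynomial in the length of the actual input. The input consists of the $\CTL$ formula $\varphi$, which contributes $|\varphi|$ to the input length, the binary encoding of $n$, which contributes $\Theta(\log n)$, and the control location $q\in Q$, which has constant size; so the input length is $\Theta(|\varphi|+\log n)$, and $O(\log n + |\varphi|^2)$ is (at most quadratic, hence) polynomial in it. This places the problem in $\mathsf{P}$.

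I do not anticipate any real difficulty: the corollary is nothing more than a careful reading of the time bound of Theorem~\ref{CTL upper bound_0} under the stipulation that $\O$ and $k$ are constants while $\varphi$, $q$ and $n$ form the input. The only points worth making explicit are (i) that the exponential factor $4^{|Q|\cdot\lud(\varphi)}$ is neutralized precisely because both $|Q|$ and the bound $k$ on $\lud(\varphi)$ are fixed, and (ii) that Theorem~\ref{CTL upper bound_0} already handles $n$ given in binary, so no separate estimate on the magnitude of $n$ is required.
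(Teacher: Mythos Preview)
Your proposal is correct and matches the paper's approach: the corollary is stated without a separate proof because it follows immediately from the time bound in Theorem~\ref{CTL upper bound_0} once $|Q|$, $|\delta_0\cup\delta_{>0}|$, and $\lud(\varphi)\le k$ are treated as constants. Your explicit computation that the bound collapses to $O(\log n + |\varphi|^2)$ is exactly the intended reading.
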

Corollary~\ref{CTL upper bound} generalizes a result from \cite{GoMaTo09},
stating that the expression complexity of $\EF$ over one-counter processes
is in $\mathsf{P}$.

\section{Expression complexity for $\CTL$ is hard for $\PSPACE$}
{\label{S Expression}}

The goal of this section is to prove that model checking $\CTL$ is $\PSPACE$-hard
already over a fixed one-counter net.
We show this via a reduction from the well-known $\PSPACE$-complete problem
QBF.
Our lower bound proof is separated into three steps. In step one, we 
define a family of $\CTL$ formulas $(\varphi_i)_{i\geq 1}$ such that over the
fixed the one-counter net $\O$ that is depicted in Figure~\ref{Fig-fixed-CCP} we can express (non-)divisibility by $2^i$.
In step two, we define a family of $\CTL$ formulas $(\psi_i)_{i\geq 1}$ such that
over $\O$ we can express if the $i^{\text{th}}$
bit in the binary representation of a natural is set to $1$.
In our final step, we give the reduction from QBF. 

For step one, we need the following simple fact which characterizes divisibility by
powers of two. Recall that $[n] = \{1,\ldots,n\}$, in particular $[0] = \emptyset$.

\begin{fact}{\label{F div}}
Let $n \geq 0$ and $i\geq 1$. Then the following two statements are equivalent:
\begin{itemize}
\item $2^i$ divides $n$.
\item $2^{i-1}$ divides $n$ and $|\{n'\in [n] \mid  2^{i-1} \text{ divides } n'\}|\  \text{ is even.}$
\end{itemize}
\end{fact}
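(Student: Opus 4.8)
The plan is to prove the two implications separately, using only elementary divisibility arithmetic. First I would set up notation: write $S_{j}(n) = \{n' \in [n] \mid 2^{j} \text{ divides } n'\}$ for the multiples of $2^{j}$ in the interval $[1,n]$, and note the elementary count $|S_{j}(n)| = \lfloor n / 2^{j}\rfloor$. With $j = i-1$ this gives $|S_{i-1}(n)| = \lfloor n/2^{i-1}\rfloor$, which is the quantity whose parity appears in the statement.

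For the forward direction, assume $2^{i}$ divides $n$. Then certainly $2^{i-1}$ divides $n$, which is the first half of the second condition. For the parity half, write $n = 2^{i} m$ for some $m \geq 0$; then $\lfloor n/2^{i-1}\rfloor = n/2^{i-1} = 2m$, which is even, as required.

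For the converse, assume $2^{i-1}$ divides $n$ and $\lfloor n/2^{i-1}\rfloor$ is even. From the first assumption, $n = 2^{i-1} \ell$ where $\ell = n/2^{i-1} = \lfloor n/2^{i-1}\rfloor$; from the second, $\ell$ is even, say $\ell = 2m$. Then $n = 2^{i-1} \cdot 2m = 2^{i} m$, so $2^{i}$ divides $n$. This closes the equivalence.

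I do not expect any real obstacle here: the only mild point of care is the edge case $i = 1$ (where $2^{i-1} = 1$ trivially divides everything and the condition reduces to ``$n$ is even iff $\lfloor n \rfloor = n$ is even'', which is a tautology) and the case $n = 0$ (where both sides hold, since $2^{i} \mid 0$ and $S_{i-1}(0) = \emptyset$ has even cardinality $0$); both are subsumed by the arithmetic above since $\lfloor 0/2^{i-1}\rfloor = 0$ is even. The identity $|S_{j}(n)| = \lfloor n/2^{j}\rfloor$ itself is the one fact worth stating explicitly, since the whole equivalence is really just this counting identity combined with ``$2^{i} \mid n \iff n/2^{i-1}$ is an even integer.''
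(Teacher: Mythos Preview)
Your argument is correct and complete; the counting identity $|\{n'\in[n]:2^{i-1}\mid n'\}|=\lfloor n/2^{i-1}\rfloor$ reduces the statement to the trivial equivalence $2^i\mid n \iff n/2^{i-1}$ is an even integer, and your treatment of the edge cases $i=1$ and $n=0$ is also fine. The paper itself states this as a ``simple fact'' and gives no proof at all, so there is nothing to compare against beyond noting that your level of detail already exceeds what the authors deemed necessary.
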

\begin{figure}[t]
\begin{center}
\setlength{\unitlength}{0.05cm}
\begin{picture}(300,133)(-90,-10)
  \put(-65,40){\huge$\delta_{>0}:$}
 \gasset{Nframe=n,loopdiam=9,ELdist=.7}
 \gasset{Nadjust=wh,Nadjustdist=1}
 \gasset{curvedepth=0}
 \node(tb)(45,80){$\overline{t}$}
 \node(t)(45,40){$t$}
 \node(q0)(100,20){$q_0$}
 \node(q2)(100,100){$q_2$}
 \node(q1)(135,60){$q_1$}
 \drawloop[loopangle=0](q1){$-1$}	
 \node(q3)(65,60){$q_3$}
 \drawloop[ELpos=66,loopangle=240](q3){$-1$}	
 \drawedge[ELside=l,ELpos=70](q0,q1){$-1$}
 \drawedge[ELside=r](q1,q2){$-1$}
 \drawedge[ELside=r](q2,q3){$-1$}
 \drawedge[ELside=l](q3,q0){$-1$}
 \node(f)(10,50){$f$}
 \node(g)(10,90){$g$}
 \drawedge[AHnb=1,ATnb=1,ELside=l,curvedepth=0](q0,t){$0$}
 \drawedge[AHnb=1,ATnb=1,ELside=l,ELpos=30,curvedepth=0](q1,tb){$0$}
 \drawedge[AHnb=0,ATnb=1,ELside=r,curvedepth=0](q2,tb){$0$}
 \drawedge[AHnb=1,ATnb=1,ELside=l,curvedepth=0](q3,tb){$0$}
 \drawedge[AHnb=1,ELside=l,curvedepth=17](q2,t){$0$}
 \drawedge[ELside=l,curvedepth=0](t,f){$0$}
 \drawedge[ELside=l,curvedepth=0](tb,f){$-1$}
 \drawedge[ELside=l,curvedepth=0,AHnb=1,ATnb=1](g,f){$-1$}
 \node(p0)(25,120){$p_0$}
 \node(p1)(65,120){$p_1$}
 \drawedge[ELside=l,ELpos=50,curvedepth=5](tb,p1){$+1$}
 \drawedge[ELside=l,ELpos=50,curvedepth=1](p1,tb){$0$}
 \drawedge[ATnb=1,AHnb=1,ELside=r](p0,tb){$0$}
 \drawloop[loopdiam=9,loopangle=0](p1){$+1$}	
\end{picture}
\end{center}
\begin{center}
\setlength{\unitlength}{0.05cm}
\begin{picture}(300,40)(-30,-15)
 \put(5,15){\huge$\delta_{0}:$}
 \gasset{Nframe=n,loopdiam=9,ELdist=.7}
 \gasset{Nadjust=wh,Nadjustdist=1}
 \gasset{curvedepth=0}
 \node(tb)(105,0){$\overline{t}$}
 \node(t)(165,10){$t$}
 \node(q0)(190,0){$q_0$}
 \node(f)(140,20){$f$}
 \drawedge[ELside=r](t,q0){$0$}
 \drawedge(t,f){$0$}
 \node(p0)(85,40){$p_0$}
 \node(p1)(125,40){$p_1$}
   \drawedge[ELside=l,ELpos=50,curvedepth=0](tb,p1){$+1$}
   \drawedge[ATnb=1,AHnb=1,ELside=r](p0,tb){$0$}
\end{picture}
\end{center}
\caption{\label{Fig-fixed-CCP} The one-counter net $\O$ for which $\CTL$ model checking
is $\PSPACE$-hard}
\end{figure}
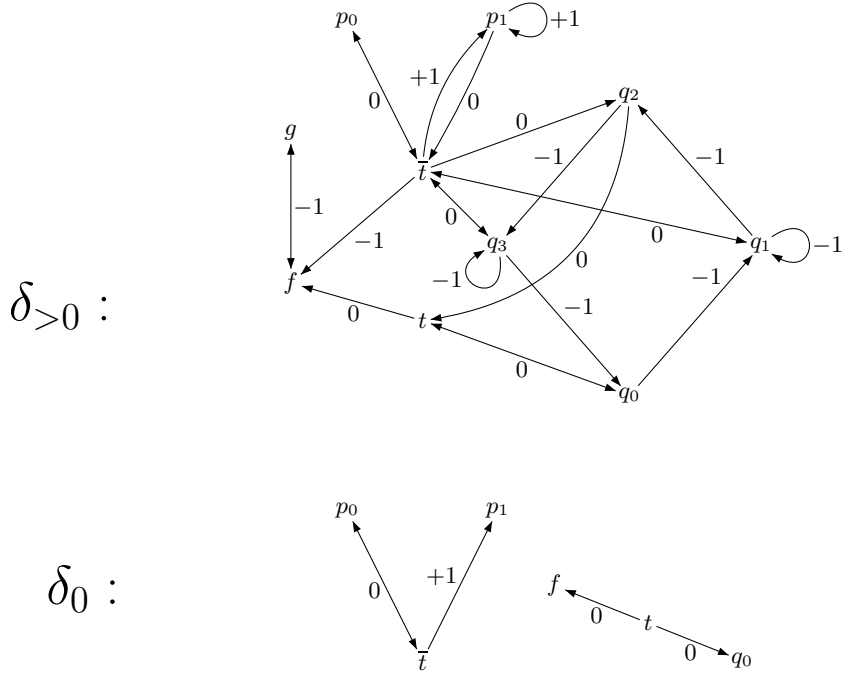
The set of atomic propositions of $\O$ in Figure~\ref{Fig-fixed-CCP}  
coincides with its control locations.
Recall that $\O$'s zero transitions are denoted by $\delta_0$ and $\O$'s positive
transitions are denoted by $\delta_{>0}$. Since $\delta_0\subseteq\delta_{>0}$, we have that
$\O$ is indeed a one-counter net.

Note that both $t$ and $\overline{t}$ are control locations of $\O$.
Now we define a family of $\CTL$ formulas
$(\varphi_i)_{i\geq 1}$ such
that for each $n\in\N$ we have that first $(t,n)\models\varphi_i$ if
and only if $2^i$ divides $n$ 
and second $(\overline{t},n)\models\varphi_i$ if and only if 
$2^i$ does {\em not} divide $n$.
On first sight, it might seem superfluous to let the control location $t$
represent divisibility by powers of two and the control location $\overline{t}$ to represent
non-divisibility by powers of two since $\CTL$ allows negation. 
However the fact that we have {\em only one} family of formulas
$(\varphi_i)_{i\geq 1}$ to
express both divisibility and non-divisibility is a crucial technical subtlety
that is necessary in order to avoid an exponential blowup in formula size.
By making use of Fact~\ref{F div}, we construct 
the formulas $\varphi_i$ inductively.
\newcommand{\test}{\text{test}}
First, let us define the auxiliary formulas $\test=t\vee\overline{t}$ and
$\varphi_\diamond=q_0\vee q_1\vee q_2\vee q_3$.
Think of $\varphi_\diamond$ to hold in those control locations that altogether
are situated in the ``diamond'' in Figure~\ref{Fig-fixed-CCP}.
We define
$$\varphi_1\quad =\quad \test\ \wedge\
\exists\X\left( f\wedge\EF(f\wedge\neg\exists\X g)\right).
$$
Now assume $i>1$. Then we define
\begin{eqnarray*}
\varphi_i & = & \test\ \wedge\ \exists\X\mu_i , \text{ where } \\
\mu_i & = &  \exists (\varphi_\diamond\wedge
           \exists\X\varphi_{i-1}) \U (q_0\wedge\neg\exists\X q_1) .
\end{eqnarray*}
Observe that $\varphi_i$ can only be true either in control location $t$ or
$\overline{t}$. 
Note that the formula right to the until symbol expresses that we are in $q_0$
and that the current counter value is zero.
Also note that the formula left to the until symbol
requires that $\varphi_\diamond$ holds, i.e., we are always in one of the four
``diamond control locations''. In other words, we decrement the counter
by moving along the diamond control locations (by possibly looping) and always check if
$\exists\X\varphi_{i-1}$ holds, just until we are in $q_0$ and the counter value
is zero. Since $\varphi_{i-1}$ is only used once in $\varphi_i$, we get:

\begin{fact}{\label{F size varphi}} $|\varphi_i|\in O(i)$.
\end{fact}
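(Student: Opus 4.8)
The plan is to prove $|\varphi_i|\in O(i)$ by a straightforward induction on $i$, exploiting the crucial fact that in the recursive definition of $\varphi_i$ the subformula $\varphi_{i-1}$ occurs exactly once. First I would record the base case: $\varphi_1 = \test \wedge \exists\X(f \wedge \EF(f \wedge \neg\exists\X g))$ is a fixed formula not depending on $i$, so $|\varphi_1| = c_0$ for some constant $c_0$, and in particular $|\varphi_1| \leq c_0 \cdot 1$. Here $\test = t \vee \overline{t}$ and $\varphi_\diamond = q_0 \vee q_1 \vee q_2 \vee q_3$ are fixed formulas of constant size as well, say both bounded by a constant $c_1$; I would note this explicitly since these reappear in the inductive step.

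For the inductive step, I would unfold the definition $\varphi_i = \test \wedge \exists\X\mu_i$ with $\mu_i = \exists(\varphi_\diamond \wedge \exists\X\varphi_{i-1})\U(q_0 \wedge \neg\exists\X q_1)$, and then apply the size function $|\cdot|$ defined in Section~\ref{S OCP CTL} compositionally. Walking through the grammar: the right argument of the until, $q_0 \wedge \neg\exists\X q_1$, has constant size; the left argument $\varphi_\diamond \wedge \exists\X\varphi_{i-1}$ contributes $|\varphi_\diamond| + |\varphi_{i-1}| + O(1)$; wrapping in $\exists \cdot \U \cdot$, then $\exists\X$, then conjoining with $\test$ each add only a constant. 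Collecting terms yields a recurrence of the shape $|\varphi_i| \leq |\varphi_{i-1}| + c$ for a constant $c$ that absorbs $|\test|$, $|\varphi_\diamond|$, and the finitely many connectives introduced. Solving this recurrence against the base case gives $|\varphi_i| \leq c_0 + (i-1)c \in O(i)$.

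The key point I would emphasize — and the only place where anything could go wrong — is that $\varphi_{i-1}$ is referenced a single time inside $\mu_i$, so the recurrence is additive rather than multiplicative; had $\varphi_{i-1}$ appeared twice (e.g., once to test divisibility and once to test non-divisibility), we would instead get $|\varphi_i| \geq 2|\varphi_{i-1}|$ and hence an exponential blow-up. This is exactly the design rationale spelled out in the paragraph preceding the fact, namely why a single family $(\varphi_i)$ is made to serve both the location $t$ (divisibility) and $\overline t$ (non-divisibility). So the ``main obstacle'' is really just making sure the bookkeeping of the size function is faithful to the stated definition of $|\cdot|$; there is no genuine mathematical difficulty. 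I would also remark that if one prefers, $\varphi_\diamond$ and $\test$ could be folded into the constant from the start, making the recurrence $|\varphi_i| \leq |\varphi_{i-1}| + O(1)$ immediate. This completes the proof.
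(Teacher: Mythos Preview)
Your proposal is correct and follows exactly the paper's approach: the paper simply remarks ``Since $\varphi_{i-1}$ is only used once in $\varphi_i$, we get'' the fact, and your write-up is just a careful expansion of that one-line observation into the obvious induction with recurrence $|\varphi_i| \leq |\varphi_{i-1}| + c$.
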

The following lemma shows the correctness of the construction.

\begin{lemma}{\label{L Correctness}}
Let $n\geq 0$ and $i\geq 1$. Then 
\begin{enumerate}[(1)]
\item $(t,n)\models\varphi_i$ if and only if $2^i$ divides $n$.
\item $(\overline{t},n)\models\varphi_i$ if and only if $2^i$ does
not divide $n$.
\end{enumerate}
\end{lemma}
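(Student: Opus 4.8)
The plan is to prove both statements simultaneously by induction on $i$, since the two control locations $t$ and $\overline{t}$ are tied together through the single formula family $(\varphi_i)_{i\geq 1}$; one needs the inductive hypothesis for \emph{both} $(1)$ and $(2)$ at level $i-1$ in order to push through level $i$. First I would handle the base case $i=1$ directly by inspecting the net in Figure~\ref{Fig-fixed-CCP}: from $(t,n)$ the formula $\exists\X(f\wedge\EF(f\wedge\neg\exists\X g))$ should hold precisely when $n$ is even, because the $t\to f$ transition is a $0$-transition in $\delta_0$ and a $0$-transition in $\delta_{>0}$, while reaching $f$ again and not being able to see $g$ encodes the parity test via the $g\to f$ edge labelled $-1$; dually from $\overline{t}$ the same formula holds iff $n$ is odd. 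I would spell out exactly which transitions are used (including the $p_0,p_1$ gadget and the $\overline{t}\to f$ edge labelled $-1$) so that the parity claim is unambiguous.

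For the inductive step, fix $i>1$ and assume $(1)$ and $(2)$ hold for $i-1$. The key is to analyse $\mu_i = \exists(\varphi_\diamond\wedge\exists\X\varphi_{i-1})\U(q_0\wedge\neg\exists\X q_1)$ evaluated at the state reached from $(t,n)$ (resp.\ $(\overline t,n)$) after one $\exists\X$ step, which by the net's structure is $(q_0,n)$ with the counter unchanged, since $t\to q_0$ and $\overline{t}\to p_1$ are the relevant moves — I need to check the net carefully to see exactly which state $\exists\X$ lands in and with which counter value. The right-hand side $q_0\wedge\neg\exists\X q_1$ holds exactly at $(q_0,0)$, because the $q_0\to q_1$ edge is labelled $-1$ in $\delta_{>0}$ and is absent (or blocked) at counter value $0$; so $\mu_i$ asserts the existence of a path through the diamond, decrementing the counter by $1$ each time it loops around the cycle $q_0\to q_1\to q_2\to q_3\to q_0$, ending at $(q_0,0)$, along which at every diamond state the subformula $\exists\X\varphi_{i-1}$ is checked. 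The move $\exists\X$ from a diamond state at counter value $m$ reaches $t$ or $\overline{t}$ at value $m$ (via the $0$-labelled edges $q_j\to t$, $q_j\to\overline t$), so by the induction hypothesis $\exists\X\varphi_{i-1}$ holds at a diamond state with counter value $m$ iff $2^{i-1}$ divides $m$ (using $(1)$, reaching $t$) \emph{or} $2^{i-1}$ does not divide $m$ (using $(2)$, reaching $\overline t$) — i.e.\ it is \emph{always} satisfiable, which is not yet the parity count we want; the real content is that $\varphi_\diamond\wedge\exists\X\varphi_{i-1}$ must hold at \emph{every} state of the path, and because the path visits counter values $n, n-1, \ldots, 1, 0$ (up to the loops, which only repeat a value), the constraint forces, at each value $m$ on the way down, a choice of witness consistent with whether $2^{i-1}\mid m$. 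Here I would invoke Fact~\ref{F div}: $2^i\mid n$ iff $2^{i-1}\mid n$ and the number of $m\in[n]$ with $2^{i-1}\mid m$ is even, and I would show the path-existence in $\mu_i$ from $(q_0,n)$ corresponds to being able to traverse the diamond with the loop-parities arranged so that this count comes out even (landing back in $q_0$ rather than $\overline t$ at the bottom), which is where the distinction between $t$-start and $\overline t$-start flips the conclusion to give divisibility vs.\ non-divisibility.

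The main obstacle I expect is the careful bookkeeping of how a single pass around the diamond, together with the two self-loops (at $q_1$ and at $q_3$, both labelled $-1$) and the $p_1$ gadget, lets the path "spend" exactly the right number of decrements to realise the parity condition of Fact~\ref{F div}, and matching this up so that a $t$-start yields "even count $\Rightarrow$ divisible" while a $\overline t$-start yields "odd-complement $\Rightarrow$ non-divisible" — in other words, getting the combinatorics of the gadget to mirror the recursion in Fact~\ref{F div} without an off-by-one error, and confirming that no \emph{spurious} path (e.g.\ one that leaves the diamond early, or uses a zero-transition prematurely) can witness $\mu_i$. Once the correspondence between diamond-traversals-with-loops and the counting in Fact~\ref{F div} is pinned down, both directions of $(1)$ and $(2)$ follow by a straightforward induction, and Fact~\ref{F size varphi} guarantees the construction stays polynomial.
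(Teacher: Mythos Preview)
Your overall plan (simultaneous induction, base case via the $f/g$ gadget, inductive step via Fact~\ref{F div}) matches the paper's, but there is a genuine gap in your reading of the diamond gadget that makes the inductive step collapse. You write that from any diamond state one can reach \emph{both} $t$ and $\overline t$ via a $0$-edge, and hence that $\exists\X\varphi_{i-1}$ is ``always satisfiable'' at a diamond state. This is false: in $\Omc$ the only test-state successor of $q_0$ and of $q_2$ is $t$, while the only test-state successor of $q_1$ and of $q_3$ is $\overline t$. Consequently, for $n\geq 1$,
\[
(q_j,n)\models \varphi_\diamond\wedge\exists\X\varphi_{i-1}
\ \Longleftrightarrow\
\begin{cases}
2^{i-1}\mid n & \text{if } j\in\{0,2\},\\
2^{i-1}\nmid n & \text{if } j\in\{1,3\}.
\end{cases}
\]
This is the paper's Claim~A, and it is the whole engine of the argument: the left-hand side of the until is \emph{not} trivially true; it forces the path to be at an even-index diamond node exactly when the counter is a multiple of $2^{i-1}$ and at an odd-index node otherwise. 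From this one reads off (the paper's Claim~B) exactly when $(q_j,n)\models\mu_i$ in terms of $2^{i-1}\mid n$ and the parity of $|\{n'\in[n]:2^{i-1}\mid n'\}|$, and Fact~\ref{F div} finishes the job.

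Two smaller misreadings compound the problem. First, the relevant $\exists\X$-successors of $\overline t$ for $\mu_i$ are $q_1,q_2,q_3$ (all with unchanged counter), not $p_1$; the $p_0,p_1$ gadget plays no role in Lemma~\ref{L Correctness} at all. Second, the self-loops at $q_1$ and $q_3$ are labelled $-1$, so they \emph{decrement} the counter rather than ``only repeat a value''; they are what let the path linger at an odd-index node through a run of non-multiples of $2^{i-1}$. Once you correct the edge structure, your vague ``choice of witness consistent with whether $2^{i-1}\mid m$'' becomes the precise constraint above, and the combinatorics you were worried about in the last paragraph disappear: there is essentially a unique admissible trajectory through the diamond, and where it ends ($q_0$ versus $q_2$) is determined by the parity in Fact~\ref{F div}.
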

\begin{proof}
We prove statements (1) and (2) simultaneously by induction on $i$.
For the induction base, assume $i=1$. We only show Point (2), i.e.
$(\overline{t},n)\in\sem{\varphi_1}_{T(\O)}$ if and only if $n$ is odd.
We have the following equivalences:
\begin{eqnarray*}
(\overline{t},n)\models\varphi_1
&\quad\Longleftrightarrow\quad&
n\geq 1\text{ and } (f,n-1) \models \EF(f\wedge\neg\exists\X g) \\
&\Longleftrightarrow&
n\geq 1\ \text{and}\ (f,n-1)\rightarrow^*(f,0)\\
&\Longleftrightarrow&n\geq 1\ \text{and}\  n-1 \text{ is even}\\
&\Longleftrightarrow&n \text{ is odd}
\end{eqnarray*}
Point (1) can be shown analogously for $i = 1$.

\medskip
\noindent
For the induction step, assume $i\geq 2$ and that the statement in the lemma
holds for $i-1$. 
It is easy to verify by the construction of $\O$ and by induction hypothesis
that the following claim holds.

\medskip
\noindent
{\em Claim A:} For every $n \geq 1$ the following equivalences hold:
\begin{eqnarray*}
(q_0,n)\models\varphi_\diamond\wedge\exists\X\varphi_{i-1}  & \quad\Longleftrightarrow \quad &
(q_2,n)\models\varphi_\diamond\wedge\exists\X\varphi_{i-1}    \quad\Longleftrightarrow \quad
2^{i-1} \text{ divides } n \\
(q_1,n)\models\varphi_\diamond\wedge\exists\X\varphi_{i-1}  & \quad\Longleftrightarrow \quad &
(q_3,n)\models\varphi_\diamond\wedge\exists\X\varphi_{i-1}    \quad\Longleftrightarrow \quad
2^{i-1} \text{ does not divide } n
\end{eqnarray*}
Using Claim A, one can easily show the following (recall that
$\mu_i = \exists (\varphi_\diamond\wedge
\exists\X\varphi_{i-1}) \U (q_0\wedge\neg\exists\X q_1)$):

\medskip
\noindent
{\em Claim B:} For every $n \geq 0$ the following equivalences hold:
\begin{eqnarray*}
(q_0,n)\models\mu_i  & \quad\Longleftrightarrow \quad & 
    2^{i-1} \text{ divides } n \text{ and }
    |\{n'\in[n]\mid 2^{i-1}\text{ divides }n'\}|\text{ is even} \\
(q_1,n)\models\mu_i  & \quad\Longleftrightarrow \quad & 
    2^{i-1} \text{ does not divide } n \text{ and }
    |\{n'\in[n]\mid 2^{i-1}\text{ divides }n'\}|\text{ is odd} \\
(q_2,n)\models\mu_i  & \quad\Longleftrightarrow \quad & 
    2^{i-1} \text{ divides } n \text{ and }
    |\{n'\in[n]\mid 2^{i-1}\text{ divides }n'\}|\text{ is odd} \\
(q_3,n)\models\mu_i  & \quad\Longleftrightarrow \quad & 
    2^{i-1} \text{ does not divide } n \text{ and }
    |\{n'\in[n]\mid 2^{i-1}\text{ divides }n'\}|\text{ is even} \\
\end{eqnarray*}
Let us now prove Point (1) from the lemma for $i \geq 2$. 
We have the following equivalences:
\begin{eqnarray*}
(t,n)\models\varphi_i &\quad\Longleftrightarrow\quad &
(q_0,n)\models \mu_i \\
& \stackrel{\text{Claim B}}{\Longleftrightarrow}&
2^{i-1}\text{ divides } n \text{ and }
|\{n'\in[n]\mid 2^{i-1}\text{ divides }n'\}|\text{ is even}\\
& \stackrel{\text{Fact~\ref{F div}}}{\Longleftrightarrow}&
2^i \text{ divides } n
\end{eqnarray*}
For Point (2), we have the following equivalences:
\begin{eqnarray*}
(\overline{t},n)\models\varphi_i &\quad\Longleftrightarrow\quad &
\exists j\in\{1,2,3\}:
(q_j,n)\models \mu_i \\
& \stackrel{\text{Claim B}}{\Longleftrightarrow}&
\text{either }2^{i-1}\text{ does not divide } n \text{ and }
|\{n'\in[n]\mid 2^{i-1}\text{ divides }n'\}| \text{ is odd (i.e. }j=1\text{),}\\
&& \text{or } 2^{i-1}\text{ does not divide } n \text{ and }
|\{n'\in[n]\mid 2^{i-1}\text{ divides }n'\}| \text{ is even (i.e. }j=3\text{),}\\
&& \text{or } 2^{i-1}\text{ divides } n \text{ and }
|\{n'\in[n]\mid 2^{i-1}\text{ divides }n'\}| \text{ is odd (i.e. }j=2\text{)}\\
& \Longleftrightarrow&
2^{i-1}\text{ does not divide } n \text{ or }
(2^{i-1}\text{ divides }n \text{ and }
|\{n'\in[n]\mid 2^{i-1}\text{ divides }n'\}| \text{ is odd})\\
& \stackrel{\text{Fact~\ref{F div}}}{\Longleftrightarrow}&
2^i\text{ does not divide } n
\end{eqnarray*}
\qed
\end{proof}
For expressing if the $i^{\text{th}}$ bit of a natural is set to $1$, we make use of the
following fact.

\begin{fact}{\label{F bit}}
Let $n\geq 0$ and $i\geq 1$. Then 
$\bit_i(n)=1$ if and only if 
$|\{n'\in[n]\mid 2^{i-1} \text{ divides } n'\}|\ \text{ is odd.}$
\end{fact}
\begin{proof}
We have
\begin{eqnarray*}
\bit_i(n)=1\quad & \Longleftrightarrow & \quad n \text{ mod } 2^i\in[2^{i-1},2^i-1] \\
\quad & \Longleftrightarrow & \quad \exists r \in [0,2^{i-1}-1], k \geq 0 : n
= r + (2k+1) \cdot 2^{i-1} \\
& \Longleftrightarrow & \quad |\{n'\in[n]\mid 2^{i-1}\text{ divides }n'\}|\text { is odd}.
\end{eqnarray*}
\qed
\end{proof}
Let us now define a family of $\CTL$ formulas $(\psi_i)_{i\geq 1}$
such that for each $n\in\N$ we have
$\bit_i(n)=1$	if and only if
$(\overline{t},n)\models\psi_i$. We set
\begin{eqnarray*}
\psi_1 \quad &=&\quad \varphi_1\quad \text{ and } \\
\psi_i\quad &=&\quad  
\overline{t}\ \wedge\ \exists\X \left((q_1\vee q_2)\ \wedge\ \mu_i \right) 
\quad\text{for each }i>1.
\end{eqnarray*}
Fact~\ref{F size varphi} and the construction of $\psi_i$ immediately yield the
following fact.

\begin{fact}{\label{F size psi}}
$|\psi_i|\in O(i)$.
\end{fact}
The following lemma shows the correctness of the construction.

\begin{lemma}{\label{L Bit}}
Let $n \geq 0$ and let $i\geq 1$. Then
$(\overline{t},n)\models\psi_i$ if and only if $\bit_i(n)=1$.
\end{lemma}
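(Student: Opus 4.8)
The plan is to prove Lemma~\ref{L Bit} by a case distinction on $i$, mirroring the structure of the definition of $\psi_i$ and reusing the machinery already built up for Lemma~\ref{L Correctness}. For the base case $i=1$, we have $\psi_1 = \varphi_1$ by definition, and Point (2) of Lemma~\ref{L Correctness} tells us that $(\overline t,n)\models\varphi_1$ iff $2^1$ does not divide $n$, i.e.\ iff $n$ is odd. On the other hand, Fact~\ref{F bit} with $i=1$ says $\bit_1(n)=1$ iff $|\{n'\in[n]\mid 2^0\text{ divides }n'\}|=|\{n'\in[n]\}|=n$ is odd. So both conditions reduce to ``$n$ is odd'', and the base case follows.

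For the induction step $i\geq 2$, I would unfold the definition $\psi_i = \overline t \wedge \exists\X\big((q_1\vee q_2)\wedge\mu_i\big)$. By the construction of $\O$ in Figure~\ref{Fig-fixed-CCP}, from $\overline t$ with counter value $n$ the only outgoing positive transitions lead (via $p_1$, with a self-loop $+1$) upward and back, but the relevant successors for this formula are the states reached by the zero/positive transitions into the diamond; more precisely, $(\overline t,n)$ has a successor in control location $q_1$ and a successor in control location $q_2$, each still at counter value $n$ (reading off the edges $q_1\to\overline t$, $q_2\to\overline t$ reversed is not right — rather one should check which edges of $\delta_{>0}$ and $\delta_0$ have $\overline t$ as source). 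The key point I want to extract is: $(\overline t,n)\models \exists\X((q_1\vee q_2)\wedge\mu_i)$ iff $(q_1,n)\models\mu_i$ or $(q_2,n)\models\mu_i$. Then I invoke Claim~B from the proof of Lemma~\ref{L Correctness}: $(q_1,n)\models\mu_i$ iff $2^{i-1}$ does not divide $n$ and $|\{n'\in[n]\mid 2^{i-1}\text{ divides }n'\}|$ is odd, and $(q_2,n)\models\mu_i$ iff $2^{i-1}$ divides $n$ and that same cardinality is odd. Taking the disjunction, the divisibility clauses are complementary, so the disjunction simplifies to exactly ``$|\{n'\in[n]\mid 2^{i-1}\text{ divides }n'\}|$ is odd'', which by Fact~\ref{F bit} is equivalent to $\bit_i(n)=1$.

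Concretely, the chain of equivalences I would write out is:
\begin{eqnarray*}
(\overline t,n)\models\psi_i &\quad\Longleftrightarrow\quad& (q_1,n)\models\mu_i \text{ or } (q_2,n)\models\mu_i\\
&\stackrel{\text{Claim B}}{\Longleftrightarrow}& |\{n'\in[n]\mid 2^{i-1}\text{ divides }n'\}| \text{ is odd}\\
&\stackrel{\text{Fact~\ref{F bit}}}{\Longleftrightarrow}& \bit_i(n)=1.
\end{eqnarray*}
The main obstacle, and the step deserving the most care, is the first equivalence: one must check against Figure~\ref{Fig-fixed-CCP} that the $\exists\X$ from $\overline t$ indeed reaches precisely (or at least: reaches $q_1$ and $q_2$ at the same counter value $n$, and that no other $\X$-successor in $q_1\vee q_2$ exists or matters), and that the conjunct $(q_1\vee q_2)$ correctly filters the successors so that only the $\mu_i$-values at $q_1$ and $q_2$ are relevant — the states $p_0,p_1,f,g,t$ must not interfere. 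Once the transition structure at $\overline t$ is pinned down, the rest is a mechanical combination of Claim~B and Fact~\ref{F bit}, exactly as in the proof of Lemma~\ref{L Correctness}. I should also double-check the edge case $n=0$: Fact~\ref{F bit} and Claim~B are both stated for $n\geq 0$, and for $n=0$ the set $\{n'\in[n]\mid\cdots\}$ is empty hence of even cardinality, consistent with $\bit_i(0)=0$.
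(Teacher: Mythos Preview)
Your proposal is correct and follows essentially the same route as the paper: both treat $i=1$ via Lemma~\ref{L Correctness}, and for $i\geq 2$ both reduce $(\overline t,n)\models\psi_i$ to the disjunction $(q_1,n)\models\mu_i$ or $(q_2,n)\models\mu_i$, then apply Claim~B and Fact~\ref{F bit}. The first equivalence you flag as needing care is indeed the only nontrivial step; the paper takes it for granted, but your instinct to verify against Figure~\ref{Fig-fixed-CCP} that the $\X$-successors of $(\overline t,n)$ landing in $\{q_1,q_2\}$ are exactly $(q_1,n)$ and $(q_2,n)$ (for $n>0$; for $n=0$ there are none, matching $\bit_i(0)=0$) is well placed.
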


\begin{proof}
The case $i=1$ is covered by Lemma~\ref{L Correctness}.
For $i \geq 2$, the following equivalences hold:
\begin{eqnarray*}
(\overline{t},n)\models\psi_i 
&\quad\Longleftrightarrow\quad &
(q_1,n)\models\mu_i\text{ or }
(q_2,n)\models\mu_i\\
&\stackrel{\text{Claim B}}{\Longleftrightarrow}&
\text{either }2^{i-1}\text{ does not divide }n\text{ and }
|\{n'\in[n]\mid 2^{i-1}\text{ divides } n'\}| \text{ is odd}\\
&&\text{or } 2^{i-1} \text{ divides }n \text{ and }
|\{n'\in[n]\mid 2^{i-1}\text{ divides } n'\}| \text{ is odd}\\
&\Longleftrightarrow&
|\{n'\in[n]\mid 2^{i-1}\text{ divides } n'\}| \text{ is odd}\\
&\stackrel{\text{Fact~\ref{F bit}}}{\Longleftrightarrow}&
\bit_i(n)=1
\end{eqnarray*}
\qed
\end{proof}
For our final step, let us give a reduction from QBF.
Let $\alpha$ be the following quantified Boolean formula
$$
\alpha\quad =\quad Q_kx_k\ \ Q_{k-1}x_{k-1}\ \ \cdots\ \ Q_1x_1\ \
\beta(x_1,\ldots,x_k),
$$
where $\beta$ is a Boolean formula over variables
$\{x_1,\ldots,x_k\}$ and $Q_i\in\{\exists,\forall\}$ is a quantifier for
each $i\in[k]$.
Our overall goal is to give a $\CTL$ formula $\theta$ such that
our QBF formula $\alpha$ is valid if and only if $(\overline{t},0)\models\theta$.
A truth assignment
$\vartheta:\{x_1,\ldots,x_k\}\rightarrow\{0,1\}$ corresponds to the
natural number $n(\vartheta)\in[0,2^k-1]$, where
$\bit_i(n(\vartheta))=1$ if and only if $\vartheta(x_i)=1$, for each $i\in[k]$.
First, let $\widehat{\beta}$ be the $\CTL$ formula that is obtained from the
Boolean formula $\beta$ by replacing every occurrence of every variable
$x_i$ by $\psi_i$.
Hence we obtain that for each $\vartheta:\{x_1,\ldots,x_{k}\}\rightarrow\{0,1\}$ 
we have $\vartheta\models\beta$ if and only if
$(\overline{t},n(\vartheta))\models\widehat{\beta}$ by Lemma~\ref{L Bit}.

It remains to define $\theta$. 
Recall that $\theta$ will be evaluated in $(\overline{t},0)$.
Let us parse our quantified Boolean formula $\alpha$ from left to right. 
Setting the variable $x_k$ to $1$ will correspond to adding $2^{k-1}$ to the
counter and getting to state $(\overline{t},2^{k-1})$. Setting $x_k$ to $0$ on the other
hand will correspond to adding $0$ to the counter and hence remaining in
state $(\overline{t},0)$. Next, setting $x_{k-1}$ to $1$ corresponds to adding to the
current counter value $2^{k-2}$, whereas setting $x_{k-1}$ to $0$ corresponds to
adding $0$, as expected. 
Adding zero to the counter will be realized by the finite path 
that jumps  from control location $\overline{t}$ to
$p_0$ and then back to $\overline{t}$.
Adding $2^{i-1}$ to the counter, on the other hand, will be realized by a
finite path that jumps from control
location $\overline{t}$ to $p_1$ (and thereby adds $1$ to the counter), then 
loops at $p_1$ as long as the counter value is not divisible by $2^{i-1}$
(which can be ensured by checking if $(p_1,n)\models \exists\X(\overline{t} \wedge \varphi_{i-1})$ by
Lemma~\ref{L Correctness}) and finally jumps back to $\overline{t}$ 
when the counter value is divisible by $2^{i-1}$ for the first time again.
We repeat this process until we have to set $x_1$
either to $1$ or to $0$. Eventually setting $x_1$ to $1$ will correspond to go
from $\overline{t}$ to
$p_1$ (hence adding $1$ to the counter) and then getting back to $\overline{t}$, whereas
setting $x_1$ to $0$ will correspond to go from $\overline{t}$ to $p_0$ and then back to
$\overline{t}$. After that, we finally check if $\widehat{\beta}$ holds. 
Recall that $Q_k,\ldots,Q_1$ are the quantifiers of our quantified Boolean
formula $\alpha$. For each $i\in[2,k]$, let us define formula $\theta_i$ as
\begin{eqnarray*}
\theta_i\quad &=& \quad Q_i\X\ 
\left((p_0\vee p_1)\bigcirc_i
\exists\left((p_0\vee
\exists\X(\overline{t} \wedge \varphi_{i-1}))\
\U\ (\overline{t}\wedge \neg\varphi_{i-1}\wedge\theta_{i-1})
)\biggl.\right)\right) \text{ and } \\
\theta_1\quad &=& \quad Q_1\X \left((p_0\vee p_1)\bigcirc_1 \exists\X\
\widehat{\beta}\right)
\end{eqnarray*}
with $\bigcirc_i=\wedge$ in case $Q_i=\exists$ and $\bigcirc_i=\,\rightarrow$ in
case $Q_i=\forall$ for each $i\in[k]$.
As expected, we put $\theta=\theta_k$. Observe that the size of $\theta$ is
polynomial in the size of $\alpha$ and that $\theta$ can be computed in logarithmic space
from $\alpha$. 
We finally obtain the following easy equivalence.
\begin{lemma}
The formula $\alpha$ is valid if and only if $(\overline{t},0)\in\sem{\theta}_{T(\O)}$.
\end{lemma}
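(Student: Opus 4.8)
The plan is to prove the biconditional by induction on the quantifier block, establishing the following strengthened statement: for every $i \in [k]$ and every truth assignment $\vartheta$ to the variables $x_k, \ldots, x_{i+1}$, the formula $\theta_i$ holds in $(\overline{t}, n)$ where $n = \sum_{j > i} \bit_j$-contribution of $\vartheta$, i.e.\ $n = n(\vartheta)$ restricted to the already-fixed variables, if and only if $Q_i x_i \cdots Q_1 x_1 \, \beta$ is valid under the partial assignment $\vartheta$. The base case $i = 1$ uses that $\theta_1 = Q_1\X((p_0 \vee p_1)\bigcirc_1 \exists\X\widehat{\beta})$: from $(\overline{t}, n)$ the only $\X$-moves go to $(p_0, n)$ or $(p_1, n+1)$; from $(p_0,n)$ the only successor satisfying $\exists\X\widehat\beta$ leads back to $(\overline t, n)$ (encoding $x_1 = 0$), and from $(p_1, n+1)$ back to $(\overline t, n+1)$ (encoding $x_1 = 1$, legitimate since $2^0 = 1$ divides everything). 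Combined with the already-established fact that $(\overline t, n(\vartheta)) \models \widehat\beta$ iff $\vartheta \models \beta$ (from Lemma~\ref{L Bit}), and the interpretation of $\bigcirc_1$ as $\wedge$ for $\exists$ and $\rightarrow$ for $\forall$, this gives the base case. The overall statement follows by taking $i = k$ and the empty assignment, since $\theta = \theta_k$ and $n = 0$.

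For the induction step I would fix $i \geq 2$, assume the claim for $i-1$, and analyze the two choices encoded by $Q_i\X$ from state $(\overline t, n)$ with $2^i \mid n$ (which holds inductively because $n$ is a sum of distinct powers $2^{j-1}$ with $j > i$, hence divisible by $2^i$ — this divisibility invariant needs to be threaded through). The move to $(p_0, n)$ must be matched: from $(p_0,n)$ the inner until formula $\exists((p_0 \vee \exists\X(\overline t \wedge \varphi_{i-1}))\U(\overline t \wedge \neg\varphi_{i-1}\wedge\theta_{i-1}))$ is satisfied exactly by the one-step path $(p_0,n)\to(\overline t,n)$, because $(\overline t, n) \models \neg\varphi_{i-1}$ would require $2^{i-1} \nmid n$, which is \emph{false} here — so actually I must be careful: the path from $p_0$ leads to $\overline t$ with the same counter value $n$ where $2^{i-1} \mid n$, so $\varphi_{i-1}$ holds at $(t,n)$ but we are at $\overline t$, and by Lemma~\ref{L Correctness}(2), $(\overline t, n) \models \varphi_{i-1}$ iff $2^{i-1} \nmid n$; hence $(\overline t, n) \models \neg\varphi_{i-1}$, so the target condition $\overline t \wedge \neg\varphi_{i-1}$ is met and we require $\theta_{i-1}$ there. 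This encodes $x_i = 0$, keeping the counter at $n$. The move to $(p_1, n+1)$ encodes $x_i = 1$: from $p_1$ we loop (via $\exists\X(\overline t \wedge \varphi_{i-1})$ holding at $(p_1,m)$ exactly when $2^{i-1} \nmid m$, by Lemma~\ref{L Correctness}) incrementing until the counter first becomes divisible by $2^{i-1}$, i.e.\ reaches $n + 2^{i-1}$ (the least multiple of $2^{i-1}$ strictly above $n$, since $2^{i-1} \mid n$), then jumps to $\overline t$; there $2^{i-1} \mid n + 2^{i-1}$ so $\neg\varphi_{i-1}$ holds and we require $\theta_{i-1}$ at $(\overline t, n + 2^{i-1})$. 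Then invoke the induction hypothesis at counter values $n$ and $n + 2^{i-1}$ respectively, and match $\bigcirc_i$ to the quantifier.

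The main obstacle I anticipate is verifying that the inner until formulas have \emph{exactly} the intended witnessing paths and no spurious ones — in particular, that from $(p_1, n+1)$ the looping cannot ``overshoot'' a multiple of $2^{i-1}$ (it cannot, because the until's left disjunct $\exists\X(\overline t \wedge \varphi_{i-1})$ fails precisely when the counter becomes divisible by $2^{i-1}$, forcing the path to exit the loop at the first such value), and that the left disjunct $p_0$ in the until is only usable for a single initial step (since $p_0$ has no outgoing positive transitions that stay among $p_0/p_1$ — examining Figure~\ref{Fig-fixed-CCP}, from $p_0$ one can only reach $\overline t$). One must also confirm the divisibility invariant $2^i \mid$ (current counter) is maintained so that at each stage $\neg\varphi_{i-1}$ genuinely holds at the $\overline t$-state reached, legitimizing the recursive call into $\theta_{i-1}$; this is where the careful bookkeeping with Lemma~\ref{L Correctness} is essential, and it is the step most prone to sign/off-by-one errors. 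Everything else is a routine unwinding of the $\CTL$ semantics against the fixed net, and the polynomial size and logspace computability of $\theta$ are immediate from the recursive definition and Facts~\ref{F size varphi} and \ref{F size psi}.
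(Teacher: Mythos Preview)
The paper itself gives no proof, calling the lemma an ``easy equivalence'' that follows from the preceding informal discussion together with Lemmas~\ref{L Correctness} and~\ref{L Bit}. Your inductive scheme --- showing for each $i\in[k]$ that $(\overline t,n)\models\theta_i$, under the invariant $2^i\mid n$, if and only if $Q_ix_i\cdots Q_1x_1\,\beta$ holds under the partial assignment encoded by $n$ --- is precisely the intended argument. The base case and the $x_i=0$ branch (via $p_0$) of the induction step are handled correctly.

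There is, however, a genuine gap in the $x_i=1$ branch for $i\geq 2$, at exactly the point you yourself flagged as delicate. You assert that the until-witness from $(p_1,n+1)$ loops up to counter value $n+2^{i-1}$ and ``then jumps to $\overline t$''. But on that path the state $(p_1,n+2^{i-1})$ is an intermediate state and hence must satisfy the until's left-hand side $p_0\vee\exists\X(\overline t\wedge\varphi_{i-1})$. It does not: since $2^{i-1}\mid n+2^{i-1}$, Lemma~\ref{L Correctness}(2) gives $(\overline t,n+2^{i-1})\not\models\varphi_{i-1}$, and $(p_1,n+2^{i-1})$ has no other $\overline t$-successor, so neither disjunct holds. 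In fact no until-witness from $(p_1,n+1)$ exists at all: any path staying in $p_1$ and then exiting to $\overline t$ ends either at some $(\overline t,m)$ with $n<m<n+2^{i-1}$ (where $2^{i-1}\nmid m$, so the target conjunct $\neg\varphi_{i-1}$ fails and the left side also fails since $\overline t$ has no $\overline t$-successor), or it passes through $(p_1,n+2^{i-1})$, which already fails the left side. Thus with $\theta_i$ as written the $p_1$-branch is never satisfiable and the equivalence breaks down. This appears to be a defect in the paper's definition of $\theta_i$ rather than in your proof strategy; the intended behaviour is clear from the paper's prose, and a small repair to the left-hand side of the until (or to the $p_1$-transitions) would make your inductive argument go through. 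But as the construction stands, the step ``then jumps to $\overline t$'' cannot be justified.
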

This finishes our $\PSPACE$ lower bound proof for expression complexity of
$\CTL$ over 
one-counter nets. We have the following theorem.

\begin{theorem} \label{Theo-CTL-expression}
$\CTL$ model checking of the fixed one-counter net $\O$ from Figure~\ref{Fig-fixed-CCP} is
$\PSPACE$-hard.
\end{theorem}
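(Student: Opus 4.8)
The plan is a logspace reduction from QBF, which is $\PSPACE$-complete; since the one-counter net $\O$ of Figure~\ref{Fig-fixed-CCP} is fixed, such a reduction establishes $\PSPACE$-hardness of its expression complexity. Concretely, given $\alpha = Q_k x_k\cdots Q_1 x_1\,\beta$, the reduction outputs the $\CTL$ formula $\theta=\theta_k$ constructed in this section together with the state $(\overline t,0)$, and the only thing left to verify is that $\alpha$ is valid iff $(\overline t,0)\in\sem{\theta}_{T(\O)}$ --- which is exactly the correctness lemma stated above. Granting that lemma, the theorem is immediate, once one adds the observations (already noted) that $\theta$ has size polynomial in $|\alpha|$ and that $\alpha\mapsto(\theta,(\overline t,0))$ is computable in logarithmic space.

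The substance therefore lies in the construction and in the correctness lemma, which I would organize just as here. Encode a truth assignment $\vartheta$ of $x_1,\dots,x_k$ by the number $n(\vartheta)\in[0,2^k-1]$ whose $i$-th bit is $\vartheta(x_i)$. Lemma~\ref{L Correctness} supplies, for each $i$, a \emph{single} formula $\varphi_i$ of linear size (Fact~\ref{F size varphi}) that tests divisibility of the counter by $2^i$ when evaluated at $t$ and non-divisibility when evaluated at $\overline t$; having one formula handle both polarities is what prevents the nesting $\varphi_1,\varphi_2,\dots$ from blowing up exponentially. Lemma~\ref{L Bit} then produces from these the linear-size formulas $\psi_i$ (Fact~\ref{F size psi}) with $(\overline t,n)\models\psi_i$ iff $\bit_i(n)=1$, and replacing each $x_i$ in $\beta$ by $\psi_i$ gives $\widehat\beta$ with $(\overline t,n(\vartheta))\models\widehat\beta$ iff $\vartheta\models\beta$. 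Finally $\theta=\theta_k$ is defined by recursion on the quantifier prefix: from $(\overline t,m)$ the modality $Q_i\X$, ranging over the choice between the control locations $p_0$ and $p_1$, models the choice of a value for $x_i$; the $p_0$-branch leaves the counter at $m$, while the $p_1$-branch --- which enters $p_1$ adding $1$, loops at $p_1$ while the counter is not a multiple of $2^{i-1}$ (detectable by $\exists\X(\overline t\wedge\varphi_{i-1})$ via Lemma~\ref{L Correctness}), and returns to $\overline t$ at the next multiple of $2^{i-1}$ --- adds exactly $2^{i-1}$. The connective $\bigcirc_i$ ($\wedge$ for $Q_i=\exists$, $\rightarrow$ for $Q_i=\forall$) turns the $\X$-branching into the quantifier $Q_i$, and after all $k$ levels the counter equals $n(\vartheta)$, at which point $\theta_1$ checks $\widehat\beta$.

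The main obstacle is the inductive correctness proof for the $\theta_i$'s: one must show that any witnessing path for the inner until $\exists\,\bigl(\cdots\bigr)\U\bigl(\overline t\wedge\neg\varphi_{i-1}\wedge\theta_{i-1}\bigr)$ is forced to follow exactly the intended gadget --- either stay at counter $m$ (this is $x_i=0$) or traverse the $p_1$-loop precisely as many times as needed to reach the next multiple of $2^{i-1}$ before re-entering $\overline t$ (this is $x_i=1$), and nothing else --- and that the resulting recursion faithfully reproduces the alternating evaluation of the quantifier prefix of $\alpha$; the divisibility tests $\varphi_{i-1}$ are precisely what make the $p_1$-loop terminate at the correct counter value. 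The remaining points are bookkeeping: the polynomial bound on $|\theta|$ follows from Facts~\ref{F size varphi} and~\ref{F size psi} together with the fact that each $\varphi_{i-1}$ and each $\psi_i$ occurs only a bounded number of times one level up in the recursion, and the logspace bound is immediate from the uniform syntactic shape of $\theta_k,\dots,\theta_1$. Together with the $\PSPACE$-completeness of QBF this gives $\PSPACE$-hardness of $\CTL$ model checking over the fixed net $\O$.
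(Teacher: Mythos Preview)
Your proposal is correct and follows the paper's approach exactly: the paper proves Theorem~\ref{Theo-CTL-expression} precisely by the QBF reduction built up throughout Section~\ref{S Expression}, relying on Lemmas~\ref{L Correctness} and~\ref{L Bit}, Facts~\ref{F size varphi} and~\ref{F size psi}, and the correctness lemma for $\theta$, with the theorem then stated as an immediate consequence. Your sketch of the inductive correctness argument for the $\theta_i$'s even adds detail the paper leaves implicit (it merely calls the equivalence ``easy'').
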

Note that the formula $\theta$ in our 
reduction necessarily has a leftward until
depth that depends on the size of $\alpha$.
By Corollary~\ref{CTL upper bound} this cannot be avoided unless $\P=\PSPACE$.
Observe that in order to express divisibility by powers of two, 
our $\CTL$ formulas $(\varphi_i)_{i\geq 0}$ 
have a linearly growing leftward until depth.

\section{Tools from complexity theory}  \label{S Tools}

For Section~\ref{S Data}--\ref{S Markov} we need some concepts
from complexity theory.
The {\em $i^{\text{th}}$ level $\BH_i$ of the Boolean hierarchy}  is defined as follows: $\BH_1=\NP$, 
$\BH_{2i}=\{L_1\cap L_2\mid L_1\in\BH_{2i-1}, L_2\in\coNP \}$, and
$\BH_{2i+1}=\{L_1\cup L_2\mid L_1\in\BH_{2i}, L_2\in\NP \}$. 
The {\em Boolean hierarchy $\BH$} is defined as $\cup_{i\geq 1}\BH_i$.
The class $\mathsf{P}^{\mathsf{NP}}$ is the class of all problems 
that can be solved on a polynomially time bounded deterministic Turing machine with
access to an oracle from $\NP$.
By $\P^{\NP[\log]}$ we denote the class of all problems that can be solved on a
polynomially time bounded deterministic Turing machines which can have access to
an $\NP$-oracle only logarithmically many times.
It is known that $\BH \subseteq \P^{\NP[\log]}$.

For naturals $m \geq 1$ and $0 \leq M \leq 2^m-1$ 
let $\BIN_m(M) = \bit_1(M) \cdots \bit_m(M) \in \{0,1\}^m$ denote the $m$-bit binary representation of $M$.
In \cite{Wag87}, it was shown that the following problem is 
complete for $\mathsf{P}^{\mathsf{NP}}$:

\noindent
INPUT: A Boolean formula $\psi(x_1,\ldots,x_m)$?

\noindent
QUESTION: Is $\psi$ satisfiable and is the maximal number $M \in [0,2^m-1]$ with
$\psi(\BIN_m(M))=1$ even (i.e. is the  lexicographically maximal
satisfying assignment even)?

\subsection{Circuit complexity} \label{Sec-circuits}

More details on circuit complexity can be found in \cite{Vol99}.
A Boolean circuit $C = C(x_1,\ldots,x_n)$ 
is a directed acyclic graph (dag) with the following properties
(in the following, nodes of $C$ are called
{\em gates}, the in-degree (resp. out-degree) of a gate is called
its \emph{fan-in}  (resp. \emph{fan-out})):
\begin{itemize}
\item The gates with fan-in $0$ (they are called \emph{input gates} in the following)
are labeled with one of the symbols
$x_1$,  $\neg x_1, \ldots, x_n$,  $\neg x_n$.
\item Every gate with fan-in at least one is labeled with either AND or with OR.
\item The gates of fan-out 0 (they are called \emph{output gates} in the following)
are linearly ordered, we denote this order by $o_1, \ldots, o_m$ in the following.
\end{itemize}
Such a circuit computes a function $f_C : \{0,1\}^n \to \{0,1\}^m$ in the obvious
way. {\em Threshold circuits} may in addition to Boolean circuits contain 
{\em majority gates}. Such a gate outputs $1$ if and only if at least 
half of its input gates evaluate to $1$. 
The \emph{fan-in of a circuit} is the maximal fan-in of a gate in the circuit.
The \emph{size of a circuit} is the number of gates in the circuit.
The \emph{depth of a circuit} is the number of gates along a longest path
from an input gate to an output gate.
An \emph{$\AC^0$-circuit family} (resp. \emph{$\TC^0$-circuit family})
is a sequence $(C_n)_{n \geq 1}$ of Boolean circuits (resp. threshold
circuits) such that for some polynomial $p(n)$ and constant $c$:
\begin{itemize}
\item the size of $C_n$ is at most $p(n)$, 
\item the depth of $C_n$ is at most $c$, and
\item for each $m$ there is at most one circuit in $(C_n)_{n \geq 1}$
with exactly $m$ input gates.
\end{itemize}
An \emph{$\NC^1$-circuit family} 
is a sequence $(C_n)_{n \geq 1}$ of Boolean circuits such 
that for some polynomial $p(n)$ and constant $c$:
\begin{itemize}
\item the size of $C_n$ is at most $p(n)$,
\item the depth of $C_n$ is at most $c \cdot \log(n)$,
\item the fan-in of $C_n$ is at most $2$, and
\item for each $m$ there is at most one circuit in $(C_n)_{n \geq 1}$
with exactly $m$ input gates.
\end{itemize}
Circuit families of these types compute 
partial mappings on $\{0,1\}^*$ in the 
obvious way.\footnote{Note that we do not require to have for every $n \geq 0$ 
a circuit with exactly $n$ input gates in the family, therefore
the computed mapping is in general only partially defined.} 
Finally, a circuit family $(C_n)_{n \geq 0}$ is called
\emph{logspace-uniform} if there exists a logspace transducer
that computes on input $1^n$ a representation (e.g. as a 
node-labeled dag) of the circuit $C_n$. In the literature 
on circuit complexity one can find more restrictive notions
of uniformity, see e.g. \cite{Vol99}, but logspace uniformity 
suffices for our purposes. In fact, polynomial time uniformity
suffices for proving our lower bounds w.r.t. 
polynomial time reductions.

For our lower bound on the data complexity of $\CTL$, 
we use a deep result from \cite{ChDaLi01,HeAlBa02}. First, we need
a few definitions. Let $p_i$ denote the $i^{\text{th}}$ prime number.
It is well-known from number theory that the $i^{\text{th}}$ prime 
requires $O(\log(i))$ bits in its binary representation.
For a number $0 \leq M < \prod_{i=1}^m p_i$ we define the 
{\em Chinese remainder representation} $\CRR_m(M)$ as the 
Boolean tuple 
$$
\CRR_m(M) = (x_{i,r})_{i \in [m], 0 \leq r < p_i}\quad \text{ with } \\
x_{i,r} = \begin{cases} 1 & \text{if } M \text{ mod } p_i = r \\
0 & \text{else } 
\end{cases}
$$
By the following theorem, we can transform a CRR-representation very
efficiently into binary representation.

\begin{theorem}[{\cite[Thm.~3.3]{ChDaLi01}}] \label{theorem hesse und co}
There is a logspace-uniform $\NC^1$-circuit family 
$(B_m( (x_{i,r})_{i \in [m], 0 \leq r < p_i} ))_{m \geq 1}$  such that 
for every $m \geq 1$, $B_m$ has has $m$ output gates and
$$
\forall\, 0 \leq M < \prod_{i=1}^m p_i : B_m( \CRR_m(M) ) = \BIN_m(M \text{ mod
} 2^m) .
$$ 
\end{theorem}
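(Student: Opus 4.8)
The plan is to derive the circuit from the Chinese Remainder Theorem together with the classical fact that iterated integer multiplication lies in logspace-uniform $\NC^1$ (Beame--Cook--Hoover; the sharpening to $\TC^0$ is \cite{HeAlBa02}). Write $P=\prod_{i=1}^m p_i$ and $P_i=P/p_i=\prod_{j\neq i}p_j$. By the prime number theorem $p_m=O(m\log m)$, so each $p_i$ has $O(\log m)$ bits, both $P$ and the $P_i$ have $O(m\log m)$ bits, and the input length $\sum_i p_i$ is $\mathrm{poly}(m)$. A logspace transducer can therefore output the list $p_1,\dots,p_m$ (by trial division) and, for each $i$, the modular inverse $e_i=P_i^{-1}\bmod p_i\in[0,p_i)$ (compute $P_i\bmod p_i$ as a product of small numbers modulo the small number $p_i$, then run the extended Euclidean algorithm); all these values are hard-wired into $B_m$.

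Given the input $\CRR_m(M)=(x_{i,r})$, the circuit first recovers $M_i:=M\bmod p_i=\sum_{r=0}^{p_i-1}r\cdot x_{i,r}$ (a multiplexer, since exactly one $x_{i,r}$ equals $1$) and then $M_i':=(M_i\cdot e_i)\bmod p_i\in[0,p_i)$, both by constant-size sub-circuits. The Chinese Remainder Theorem yields
$$
M\;=\;\Big(\sum_{i=1}^m M_i'\,P_i\Big)\bmod P ,
$$
and, setting $S:=\sum_{i=1}^m M_i'P_i$, one has $0\le M<P$ and $0\le S<mP$, so $M=S-qP$ for the unique $q\in\{0,1,\dots,m-1\}$, namely $q=\lfloor S/P\rfloor$.

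The circuit computes $P$ and all the $P_i$ by iterated-product sub-circuits --- this is the one genuinely nontrivial ingredient, supplied by the $\NC^1$ (respectively $\TC^0$) bound for iterated multiplication --- and then $S=\sum_i M_i'P_i$ by an iterated-addition sub-circuit, all at full $O(m\log m)$-bit precision. To obtain $q$ without performing a division, note that $q=\lfloor S/P\rfloor$ is exactly the number of $j\in\{1,\dots,m-1\}$ with $j\cdot P\le S$; each such comparison is between two $\mathrm{poly}(m)$-bit integers and lies in $\NC^1$, and summing the $m-1$ resulting bits produces $q$ (an $O(\log m)$-bit number) again in $\NC^1$. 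Finally the circuit outputs the low $m$ bits of $S-qP$, i.e.\ $(S\bmod 2^m)-(qP\bmod 2^m)$ reduced modulo $2^m$, which is a single subtraction; these bits are precisely $\BIN_m(M\bmod 2^m)$. Since the construction composes a constant number of logspace-uniform $\NC^1$ stages, the family $(B_m)_{m\ge 1}$ is logspace-uniform $\NC^1$.

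The main obstacle is the iterated-multiplication step: realizing $P$, the $P_i$ and the products $jP$ inside a logspace-uniform $\NC^1$ circuit is exactly the Beame--Cook--Hoover theorem (and its $\TC^0$ strengthening in \cite{HeAlBa02}), and it is what makes the statement deep. Everything else --- the CRT identity, the counting reformulation of the overflow $q$ that avoids high-precision division, and the logspace bookkeeping for the hard-wired primes and inverses --- is routine by comparison.
\qed
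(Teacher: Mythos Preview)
The paper does not prove this theorem; it is quoted verbatim as \cite[Thm.~3.3]{ChDaLi01} and used as a black box. So there is no in-paper proof to compare your attempt against.

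That said, your sketch has a genuine circularity. You invoke ``iterated integer multiplication lies in logspace-uniform $\NC^1$'' in order to build $P=\prod_i p_i$ and the $P_i=P/p_i$ inside the circuit (or, equivalently, to let the logspace uniformity machine hard-wire their bits). But Beame--Cook--Hoover \cite{BCH86} only yields \emph{$\mathsf{P}$-uniform} $\NC^1$ for iterated multiplication; the improvement to logspace-uniform $\NC^1$ is precisely the content of \cite{ChDaLi01}, and its key step is exactly the CRR-to-binary conversion you are trying to establish. Citing \cite{HeAlBa02} does not help either, since that paper also bootstraps through CRR-to-binary. In short, you are assuming the theorem (in the guise of logspace-uniform iterated multiplication) in order to prove it.

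The actual argument in \cite{ChDaLi01} avoids ever forming $P$ or the $P_i$. Instead one observes that $M/P=\big(\sum_{i=1}^m M_i'/p_i\big)-\big\lfloor\sum_{i=1}^m M_i'/p_i\big\rfloor$, where $M_i'=(M\bmod p_i)\cdot(P_i^{-1}\bmod p_i)\bmod p_i$ as in your sketch. Each summand $M_i'/p_i$ is a rational with an $O(\log m)$-bit denominator, and \emph{division by a single small prime} (equivalently, computing polynomially many bits of its binary expansion) is in logspace-uniform $\NC^1$ by elementary means. Summing $m$ such expansions to $\mathrm{poly}(m)$ bits of precision is iterated addition, also unproblematic. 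From a sufficiently accurate approximation of $M/P$ one reads off the bits of $M$ (after multiplying by the known-in-CRR value $P$ modulo small primes, or via a second application of the same idea). Your overflow trick for $q=\lfloor S/P\rfloor$ by counting is fine in spirit, but it presupposes having $P$ in binary, which is the step that cannot be taken for granted at this uniformity level.
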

By \cite{HeAlBa02}, we could replace logspace-uniform $\NC^1$-circuits
in Theorem~\ref{theorem hesse und co} even by $\mathsf{DLOGTIME}$-uniform
$\TC^0$-circuits. The existence of a $\mathsf{P}$-uniform $\NC^1$-circuit family
for converting from CRR-representation to binary representation
was already shown in \cite{BCH86}.

Usually the Chinese remainder representation of $M$ is the tuple $(r_i)_{i \in [m]}$,
where $r_i = M  \text{ mod } p_i$. Since the primes $p_i$ will be always given in unary
notation, there is no essential difference between this representation and our 
Chinese remainder representation. The latter is more suitable for our purpose.

\subsection{Serializability} \label{sec real}

Intuitively, a complexity class $\mathcal{C}_1$ is 
called $\mathcal{C}_2$-serializable (where $\mathcal{C}_2$
is another complexity class) if every language  
$L \in \mathcal{C}_1$ can be accepted in the following way: 
There exists a polynomial $p(n)$ and a $\mathcal{C}_2$-machine
(or $\mathcal{C}_2$-circuit family) $A$
such that $x \in L$ is checked in $2^{p(|x|)}$ many
stages, which are indexed 
by the strings from $\{0,1\}^{p(|x|)}$.
In stage $y \in \{0,1\}^{p(|x|)}$, $A$ 
gets from the stage indexed 
by the lexicographic predecessor of $y$ a constant number of 
bits $b_1, \ldots, b_c$ and computes from these bits, the 
index $y$ and the original input $x$ new bits $b'_1, \ldots, b'_c$
which are delivered to the lexicographic next stage. 
In \cite{CaFu91} it was shown that $\PSPACE$ is $\mathsf{P}$-serializable;
in \cite{HLSVW93} this result was sharpened to $\AC^0$-serializability,
see also \cite{Vol98}. It is not stated in \cite{HLSVW93,Vol98} but easy to 
see from the proofs that \emph{logspace-uniform} $\AC^0$ suffices for serializing
$\PSPACE$, see the appendix for more details.

For our purpose, a slightly different definition of 
$\AC^0$-serializability is useful: A language $L$ is $\AC^0$-serializable
if there exists an NFA $A$ over the alphabet
$\{0,1\}$, a polynomial $p(n)$, and a logspace-uniform
$\AC^0$-circuit family $(C_n)_{n \geq 0}$, where
$C_n$ has exactly $n+p(n)$ many inputs and one output, such that for every 
$x \in \{0,1\}^n$ we have:
$$
x \in L \ \Longleftrightarrow \ C_n(x,0^{p(n)}) \cdots C_n(x,1^{p(n)}) \in L(A),
$$
where ``$\cdots$'' refers to the lexicographic order on $\{0,1\}^{p(n)}$.
A proof that every language in $\PSPACE$ is $\AC^0$-serializable
in this sense can be found in the appendix.

\section{Data complexity for $\CTL$ is hard for $\PSPACE$}  \label{S Data}

In this section, we prove that also the data complexity of $\CTL$ over
one-counter nets is hard for $\PSPACE$ and therefore $\PSPACE$-complete
by the known upper bounds for the modal $\mu$-calculus \cite{Serr06}.
Let us fix the set of propositions $\Pmc=\{\alpha,\beta,\gamma\}$ 
for this section.
In the following, w.l.o.g. we allow in $\delta_0$ (resp. in $\delta_{>0}$) 
transitions of the kind $(q,k,q')$, where $k\in\N$ (resp. $k\in\Z$) is given
in unary representation with the expected intuitive meaning. 

\begin{proposition} \label{prop main}
For the fixed $\EF$ formula $\varphi = (\alpha  \to \exists\mathsf{X} (\beta
\wedge \EF ( \neg \exists \mathsf{X} \gamma)))$ 
the following problem can be solved with a logspace transducer:

\noindent
INPUT: A list of the first $m$ consecutive (unary encoded) prime numbers and
a Boolean formula $F = F((x_{i,r})_{i \in [m], 0 \leq r < p_i})$

\noindent
OUTPUT:  An OCN $\Omc(F)$ with distinguished control locations $\inp$ and $\out$,
such that for every number $0 \leq M < \prod_{i=1}^m p_i$ 
the following are equivalent:
\begin{itemize}
\item $F(\CRR_m(M))=1$
\item  There exists a $\sem{\varphi}_{T(\Omc(F))}$-path from $(\inp,M)$ to $(\out,M)$ in $T(\Omc(F))$.
\end{itemize}
\end{proposition}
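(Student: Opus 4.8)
The plan is to build $\Omc(F)$ by recursion on the structure of the Boolean formula $F$, where each subformula $G$ of $F$ gets its own pair of control locations $\inp_G, \out_G$, and the invariant we maintain is exactly the stated equivalence: there is a $\sem{\varphi}_{T(\Omc(F))}$-path from $(\inp_G, M)$ to $(\out_G, M)$ if and only if $G(\CRR_m(M)) = 1$. Crucially the counter value $M$ is never changed on such a traversal — it is only \emph{read} — so these gadgets can be freely composed. The fixed formula $\varphi = \alpha \to \exists\X(\beta \wedge \EF(\neg\exists\X\gamma))$ is the single "probe" we are allowed to use; the first task is therefore to realize an atomic divisibility test with it. I would attach to a control location $q$ carrying proposition $\alpha$ a transition to a location carrying $\beta$ from which one can decrement the counter in steps of $p_i$ (a loop with edge label $-p_i$, available since we allow unary-encoded transitions), reaching a $\gamma$-free sink precisely when the counter is $\equiv 0 \bmod p_i$; routing instead through a $\gamma$-location when it is not. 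Thus, sitting at such a $q$, the formula $\varphi$ holds iff $p_i \mid M$ — but to test $M \bmod p_i = r$ for a specific residue $r$, I first move down by $r$ (edge label $-r$), test divisibility by $p_i$, then move back up by $r$; this is the gadget for the input literal $x_{i,r}$, and its negation $\neg x_{i,r}$ is obtained by swapping the $\gamma$ / non-$\gamma$ exits.

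For the inductive step: a conjunction $G = G_1 \wedge G_2$ is realized by concatenating the gadget for $G_1$ and then the gadget for $G_2$ (identify $\out_{G_1}$ with $\inp_{G_2}$); since each sub-gadget returns the counter to its original value $M$, a full pass exists iff both $G_1(\CRR_m(M)) = 1$ and $G_2(\CRR_m(M)) = 1$. A disjunction $G = G_1 \vee G_2$ is realized by a nondeterministic branch from $\inp_G$ into $\inp_{G_1}$ or $\inp_{G_2}$, with both $\out_{G_i}$ merged into $\out_G$. Negation is pushed to the leaves using De Morgan (so we only ever need $x_{i,r}$ and $\neg x_{i,r}$ as in the previous paragraph), or alternatively handled directly at the leaf gadgets. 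Finally set $\inp = \inp_F$, $\out = \out_F$. One has to be a little careful that the $\sem{\varphi}$-path condition is about \emph{every} state along the path satisfying $\varphi$, not just a designated one: at every control location that does \emph{not} carry $\alpha$, $\varphi$ is vacuously true, so I make $\alpha$ hold \emph{only} at the specific "probe" locations where a divisibility test is being launched, and arrange that at those locations $\exists\X(\beta \wedge \EF(\neg\exists\X\gamma))$ indeed holds exactly when the intended residue condition is met. The locations internal to the decrement loops carry $\beta$ and (when appropriate) $\gamma$ but never $\alpha$, so they impose no constraint.

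It remains to argue the resource bounds. The circuit/automaton is built in one pass over the parse tree of $F$: each node contributes a constant number of new control locations plus transitions whose labels ($-p_i$, $-r$, $+r$, $+p_i$) are bounded by $p_m$, and since the $p_i$ are given in unary these labels have size polynomial in the input; the whole construction is plainly computable by a logspace transducer, reading $F$ once and emitting gadgets, keeping only a pointer into $F$ and a counter for naming fresh locations on the work tape. Correctness then follows by a straightforward induction on $G$ using the composition property together with the leaf analysis, and invoking nothing deeper than Fact~\ref{F div}-style elementary arithmetic for the residue tests. The main obstacle I anticipate is the bookkeeping for the $\sem{\varphi}$-path side condition — getting the proposition labelling right so that $\varphi$ is true everywhere along an intended traversal but \emph{fails} somewhere on any "cheating" traversal that tries to exit a divisibility-test loop at the wrong counter value; this is where the asymmetric placement of $\gamma$ (reachable in one step exactly from the "bad" exit configurations) does the real work, and it has to be checked carefully that no spurious path through the gadget can satisfy $\varphi$ at all of its states while returning to $(\out, M)$.
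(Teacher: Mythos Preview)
Your plan is essentially the paper's construction: recurse on the formula with $\inp_G/\out_G$ pairs, compose $\wedge$ sequentially and $\vee$ by branching, mark $\alpha$ only at leaf probes, and realise each residue test through a $\beta$-labelled decrement loop with a $\gamma$-sink.

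Two points where the paper differs. First, the main path never changes the counter: the paper places $\alpha$ on $\inp(x_{i,r})$ itself and gives the $\exists\X$-edge to the $\beta$-location weight $-r$, so $(\div(p_i),M-r)$ is already the $\beta$-successor and the actual traversal is just the single edge $\inp(x_{i,r}) \xrightarrow{0} \out(x_{i,r})$. Your ``move down by $r$ \ldots\ then move back up by $r$'' works but is unnecessary---the offset can live entirely in the side-branch probed by $\varphi$. Second, your ``swap the $\gamma$/non-$\gamma$ exits'' for $\neg x_{i,r}$ is not correct as stated: since $\EF$ is existential, any reachable deadlock already satisfies $\neg\exists\X\gamma$, so merely relabelling which sink carries $\gamma$ does not invert the test; you would need a genuinely different gadget (for instance a $p_i$-state cycle that detects a nonzero remainder at counter~$0$). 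The paper sidesteps this entirely by rewriting each $\neg x_{i,r}$ as $\bigvee_{k\neq r} x_{i,k}$ up front---this is logspace since the primes are given in unary---after which only positive leaf gadgets are ever built.
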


\begin{proof}
W.l.o.g. we may assume that negations occur in $F$ only
in front of variables. Then, a negated formula $\neg x_{i,r}$ 
can be replaced by the disjunction $\bigvee \{ x_{i,k} \mid 0 \leq k < p_i, r
\neq k \}$. Note that this can be done in logspace, since the primes $p_i$ 
are given in unary. Hence, we can assume that $F$ does not contain
negations.

The idea is to traverse the Boolean formula $F$
with the  OCN $\Omc(F)$ in a depth first manner. Each time
a variable $x_{i,r}$ is  seen, the OCN may also enter another branch, where it is 
checked, whether the current counter value is congruent 
$r$ modulo $p_i$. Let
\begin{eqnarray*}
\Omc(F) & = & (Q, \{Q_\alpha, Q_\beta, Q_\gamma\}, \delta_0, \delta_{>0}),
\text{ where} \\
Q &=& \{ \inp(G), \out(G) \mid G \text{ is a subformula of } F \} \cup 
 \{ \div(p_1), \ldots,\div(p_m), \perp \} \\
Q_\alpha &=& \{ \inp(x_{i,r}) \mid i \in [m], 0 \leq r < p_i \} \\
Q_\beta &=& \{ \div(p_1), \ldots,\div(p_m) \} \\
Q_\gamma &=&  \{ \perp \}.
\end{eqnarray*}
We set $\inp = \inp(F)$ and $\out = \out(F)$.
Let us now define the transition sets $\delta_0$ and $\delta_{>0}$.
In case $G = G_1 \vee G_2$ is a subformula of $F$,
we add the following transitions to $\delta_0$ and $\delta_{>0}$:
\begin{gather*}
(\inp(G),0,\inp(G_i)), \ 
(\out(G_i),0, \out(G)) \text{ for $i \in\{1,2\}$.} \\
\end{gather*}
In case $G = G_1 \wedge G_2$ is a subformula of $F$,
we add the following transitions to $\delta_0$ and $\delta_{>0}$:
\begin{gather*}
(\inp(G),0,\inp(G_1)), \   
(\out(G_1),0,\inp(G_2)), \
(\out(G_2),0,\out(G)) . 
\end{gather*}
For every variable $x_{i,r}$
we add to $\delta_0$ and $\delta_{>0}$
the transition
$$
(\inp(x_{i,r}),0,\out(x_{i,r})) .
$$
Moreover, we add to $\delta_{>0}$ the transitions
$$
(\inp(x_{i,r}),-r,\div(p_i))  
$$
The transition $(\inp(x_{i,0}),0,\div(p_i))$ 
is also added to $\delta_0$.
For the control locations $\div(p_i)$ we add to $\delta_{>0}$ 
the transitions  $(\div(p_i),-p_i, \div(p_i))$ and
$(\div(p_i),-1,\perp)$.
This concludes the description of the OCN $\Omc(F)$.
Correctness of the construction can be easily checked by
induction on the structure of the formula $F$.
\qed
\end{proof}
We are now ready to prove $\PSPACE$-hardness of the data complexity.

\begin{theorem} \label{theo ctl data}
There exists a fixed $\CTL$ formula 
of the form $\exists \varphi_1 \U \varphi_2$, where
$\varphi_1$ and $\varphi_2$ are $\EF$ formulas, such that
the following problem is $\PSPACE$-complete:

\noindent
INPUT: An OCN $\Omc$ and a control location $q$ of $\Omc$.

\noindent
QUESTION: $(T(\Omc), (q,0)) \models \exists \varphi_1 \U \varphi_2$?
\end{theorem}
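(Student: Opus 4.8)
The plan is to prove $\PSPACE$-hardness by a logspace many-one reduction from an arbitrary language $L\in\PSPACE$, combining the $\AC^0$-serializability of $\PSPACE$ (Section~\ref{sec real}) with the logspace-uniform $\NC^1$-circuits that convert Chinese remainder representation to binary (Theorem~\ref{theorem hesse und co}). The fixed formula will be $\exists\varphi\,\U\,\delta$, where $\varphi=(\alpha\to\exists\X(\beta\wedge\EF(\neg\exists\X\gamma)))$ is exactly the fixed $\EF$ formula of Proposition~\ref{prop main} and $\delta$ is one further atomic proposition; both sides of the until are then $\EF$ formulas, as the statement requires. Since the $\PSPACE$ upper bound is inherited from the modal $\mu$-calculus \cite{Serr06}, only hardness has to be shown.

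First I would fix an NFA $A$ over $\{0,1\}$, a polynomial $p(n)$, and a logspace-uniform $\AC^0$-circuit family $(C_n)_{n\geq 0}$ that serialize $L$, so that for $x\in\{0,1\}^n$ one has $x\in L$ iff $A$ accepts the word $s_x$ of length $2^{p(n)}$ whose $j$-th bit ($j\in[0,2^{p(n)}-1]$) is $C_n$ evaluated on $x$ and the binary representation of $j$. Put $m=p(n)$ and let $p_1<\cdots<p_m$ be the first $m$ primes (a logspace transducer can output them in unary); then $\prod_{i=1}^m p_i\geq 2^{p(n)}$, so $\CRR_m$ is defined on every counter value the reduction will use. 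Hard-wiring $x$ into $C_n$ gives a constant-depth, polynomial-size circuit $C_n^x$ with $p(n)$ inputs and one output. Composing the $\NC^1$-circuit $B_m$ of Theorem~\ref{theorem hesse und co} (which maps $\CRR_m(j)$ to $\BIN_{p(n)}(j\bmod 2^{p(n)})$) with the $\AC^0$, hence $\NC^1$, circuit $C_n^x$ and unfolding the resulting $\NC^1$-circuit into a formula yields, in logspace, a polynomial-size Boolean formula $F^{(1)}$ over the variables $(z_{i,r})_{i\in[m],\,0\leq r<p_i}$ with $F^{(1)}(\CRR_m(j))=C_n^x(\BIN_{p(n)}(j))$ for $j<2^{p(n)}$; negating the output similarly gives $F^{(0)}$ with $F^{(0)}(\CRR_m(j))=\neg C_n^x(\BIN_{p(n)}(j))$, and combining the $p(n)$ output bits of $B_m$ gives formulas $F^{\mathrm{last}}$ with $F^{\mathrm{last}}(\CRR_m(j))=1$ iff $j\equiv 2^{p(n)}-1\pmod{2^{p(n)}}$ and $F^{\mathrm{notlast}}=\neg F^{\mathrm{last}}$. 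As in the proof of Proposition~\ref{prop main}, replacing each $\neg z_{i,r}$ by $\bigvee_{k\neq r}z_{i,k}$ makes all four formulas negation-free.

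Next I would feed each of $F^{(0)},F^{(1)},F^{\mathrm{last}},F^{\mathrm{notlast}}$ to Proposition~\ref{prop main}, obtaining OCNs whose main routes keep the counter fixed, whose detours only decrease it and end in dead-ends (immediate from the construction there), and such that a $\sem{\varphi}$-path from $(\inp,j)$ to $(\out,j)$ exists iff the corresponding formula holds of $\CRR_m(j)$. Then I would build the OCN $\mathcal{B}$ as follows: it has a control location $\langle s\rangle$ for each state $s$ of $A$, a location $q_{\mathrm{acc}}$ that is the unique carrier of $\delta$, and for each transition $(s,b,s')$ of $A$ a private copy of the $F^{(b)}$-gadget with entry $\inp_{s,b,s'}$ and exit $\out_{s,b,s'}$; add $\langle s\rangle\xrightarrow{0}\inp_{s,b,s'}$, then route from $\out_{s,b,s'}$ a $\xrightarrow{+1}$ transition to $\langle s'\rangle$ through a private copy of the $F^{\mathrm{notlast}}$-gadget, and — whenever $s'$ is a final state — also route a transition from $\out_{s,b,s'}$ to $q_{\mathrm{acc}}$ through a private copy of the $F^{\mathrm{last}}$-gadget. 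None of $\langle s\rangle$, $\out_{s,b,s'}$, $q_{\mathrm{acc}}$ carries $\alpha$, so $\varphi$ holds there vacuously. The $F^{\mathrm{notlast}}$-guard lets an increment occur only while the counter is not yet $2^{p(n)}-1$, so along any path all of whose states satisfy $\varphi$ the counter stays in $[0,2^{p(n)}-1]$ and grows by exactly one per simulated $A$-transition; consequently, started at $(\langle s_0\rangle,0)$ for the initial state $s_0$ of $A$, such a path reaching a $\delta$-state exists iff $A$ has an accepting run reading precisely $C_n^x(\BIN_{p(n)}(0))\cdots C_n^x(\BIN_{p(n)}(2^{p(n)}-1))=s_x$, i.e.\ iff $x\in L$. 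Hence $(T(\mathcal{B}),(\langle s_0\rangle,0))\models\exists\varphi\,\U\,\delta$ iff $x\in L$.

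All of the construction — reading off $p(n)$ and $C_n$, listing the first $m$ primes in unary, composing and unfolding the circuits, invoking Proposition~\ref{prop main}, and wiring in the fixed automaton $A$ — is a composition of logspace-computable maps, so the reduction runs in logspace, in particular in polynomial time; together with the $\PSPACE$ upper bound of \cite{Serr06} this gives $\PSPACE$-completeness. The main obstacle is the gap between what an OCN can observe about its counter — membership in residue classes modulo small numbers, essentially the Chinese remainder representation — and what the serializing circuit must read, namely the binary representation of the stage index: Theorem~\ref{theorem hesse und co} is exactly what bridges it, turning the per-stage circuit into a Boolean formula over Chinese-remainder variables that Proposition~\ref{prop main} then lets the single fixed $\EF$ formula $\varphi$ check on the fly along the until-path. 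The remaining points are routine bookkeeping — simulating $A$ in the control locations, and guarding every increment by a last-stage test so that the counter cannot overshoot $2^{p(n)}-1$ and cause the simulation to re-read $s_x$ periodically.
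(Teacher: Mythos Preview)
Your proposal is correct and follows essentially the same approach as the paper: serialize $\PSPACE$ via a fixed NFA and an $\AC^0$/$\NC^1$ circuit, compose with the CRR-to-binary circuit of Theorem~\ref{theorem hesse und co}, unfold to a Boolean formula over the CRR variables, realize each NFA transition by a gadget from Proposition~\ref{prop main}, and use the fixed until formula $\exists\varphi\,\U\,\rho$ with $\varphi$ from Proposition~\ref{prop main} and an atomic proposition on the acceptance location. The only cosmetic differences are that the paper stops the counter at $2^m$ and detects termination by the direct CRR test $G=\bigwedge_i x_{i,r_i}$ (folded into the transition gadgets as $F\wedge\neg G$ and $\neg F\wedge\neg G$), whereas you stop at $2^m-1$ and detect ``last'' via the binary output of $B_m$; both work, the paper's termination test being slightly simpler.
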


\begin{proof}
Let us take an arbitrary $\PSPACE$-complete language $L$. 
Recall from Section~\ref{sec real}
that $\PSPACE$ is $\AC^0$-serializable \cite{HLSVW93} and 
hence $\NC^1$-serializable.
Thus, there exists an NFA $A=(S,\{0,1\},\delta,s_0,S_f)$ over the alphabet
$\{0,1\}$, a polynomial $p(n)$, and a logspace-uniform
$\NC^1$-circuit family $(C_n)_{n \geq 0}$, where
$C_n$ has $n+p(n)$ many inputs, such that for every 
$x \in \{0,1\}^n$ we have:
\begin{equation} \label{eq C_n}
x \in L \ \Longleftrightarrow \ C_n(x,0^{p(n)}) \cdots C_n(x,1^{p(n)}) \in L(A),
\end{equation}
where ``$\cdots$'' refers to the lexicographic order on $\{0,1\}^{p(n)}$.
Fix an input $x \in \{0,1\}^n$. 
Our reduction can be split into the following five steps:

\medskip

\noindent
{\em Step 1.}
Construct in logarithmic space the circuit $C_n$.
Fix the the first $n$ inputs of $C_n$ to the bits in $x$,
and denote the resulting circuit by $C$; it has only
$m = p(n)$ many inputs. Equivalence (\ref{eq C_n}) can 
be written as
\begin{equation} \label{eq C}
x \in L \ \Longleftrightarrow \ \prod_{M=0}^{2^m-1} C(\BIN_m(M)) \in L(A).
\end{equation}
{\em Step 2.} Compute the first $m$ consecutive primes
$p_1, \ldots, p_m$. This is possible in logarithmic space, see e.g. 
\cite{ChDaLi01}. Note that every $p_i$ is bounded polynomially in $n$. 
Hence, every $p_i$ can be written down in unary notation.
Note that $\prod_{i=1}^m p_i > 2^m$ (if $m > 1$).

\medskip

\noindent
{\em Step 3.} Compute in logarithmic space
the circuit $B = B_m((x_{i,r})_{i \in [m], 0 \leq r < p_i} )$
from Theorem~\ref{theorem hesse und co}.
Thus, $B$ is a Boolean circuit of fan-in 2 and
depth $O(\log(m)) = O(\log(n))$ with
$$
B(\CRR_m(M)) = \BIN_m(M \text{ mod } 2^m)
$$ 
for every $0 \leq M < \prod_{i=1}^m p_i$.

\medskip

\noindent
{\em Step 4.}
Now we compose the circuits $B$ and $C$: For every $i\in[m]$, connect the $i^{\text{th}}$ input of the circuit
$C(x_1,\ldots,x_m)$ with the $i^{\text{th}}$ output of the circuit $B$.
The result is a circuit with fan-in 2 and depth $O(\log(n))$.
We can unfold this circuit into a Boolean formula 
$F = F((x_{i,r})_{i  \in [m], 0 \leq r < p_i})$. The resulting formula (or tree) has the same
depth as the circuit, i.e., depth $O(\log(n))$ and every tree node has at most
2 children. Hence, $F$ has polynomial size.
Thus, for every $0 \leq M < 2^m$ we have
$F(\CRR_m(M)) = C(\BIN_m(M))$ and equivalence (\ref{eq C}) can be written
as
\begin{equation} \label{gl F}
x \in L \ \Longleftrightarrow \ \prod_{M=0}^{2^m-1} F(\CRR_m(M)) \in L(A).
\end{equation}
{\em Step 5.} 
We now apply our construction from Proposition~\ref{prop main} to the formula $F$.
More precisely, let $G$ be the Boolean formula $\bigwedge_{i\in [m]} x_{i,r_i}$
were $r_i = 2^m \text{ mod } p_i$ for $i\in[m]$ (these remainders
can be computed in logarithmic space).
For every $1$-labeled transition $\tau \in \delta$ of the NFA $A$ let $\Omc(\tau)$
be a copy of the OCN $\Omc(F \wedge \neg G)$. 
For every $0$-labeled transition $\tau \in \delta$ let
$\Omc(\tau)$ be a copy of the OCN $\Omc(\neg F\wedge \neg G)$.
In both cases we write $\Omc(\tau)$
as $(Q(\tau),\{Q_\alpha(\tau),Q_\beta(\tau),Q_\gamma(\tau)\}, \delta_0(\tau),\delta_{>0}(\tau))$.
Denote with $\inp(\tau)$ 
(resp. $\out(\tau)$) the control location of this copy that corresponds to
$\inp$ (resp. $\out$) in $\Omc(F)$. Hence, for every $b$-labeled transition
$\tau \in \delta$ ($b \in \{0,1\}$) and every $0 \leq M < \prod_{i=1}^m p_i$
there exists a $\sem{\varphi}_{T(\Omc(\tau))}$-path  
($\varphi$ is from Proposition~\ref{prop main}) from
$(\inp(\tau),M)$ to $(\out(\tau),M)$ 
if and only if $F(\CRR_m(M))=b$ and $M \neq 2^m$.

We now define an OCN $\Omc = (Q, \{Q_\alpha, Q_\beta,Q_\gamma\}, \delta_0, \delta_{>0})$ as follows:
We take the disjoint union of all the OCNs $\Omc(\tau)$ for $\tau \in \delta$.
Moreover, every state $s \in S$ of the automaton $A$ becomes
a control location of $\Omc$:
\begin{eqnarray*}
Q &=& S \cup \bigcup_{\tau \in \delta} Q(\tau) \\
Q_p &=& \bigcup_{\tau \in \delta} Q_p(\tau) \text{ for } p \in \{\alpha,\beta,\gamma\} \\
\end{eqnarray*}
We add to $\delta_0$ and $\delta_{>0}$ for
every $\tau = (s,b,t) \in \delta$ the following transitions:
$$
(s,0,\inp(\tau)), \ (\out(\tau),+1,t) .
$$
Then, by Proposition~\ref{prop main} and (\ref{gl F}) we have
$x \in L$ if and only if there exists a 
$\sem{\varphi}_{T(\Omc)}$-path in $T(\Omc)$ from
$(s_0,0)$ to $(s,2^m)$ for some $s \in S_f$. 
Also note that there is no $\sem{\varphi}_{T(\Omc)}$-path in $T(\Omc)$
from $(s_0,0)$ to some configuration $(s,M)$ with $s \in S$ and 
$M > 2^m$. 
It remains to add to $\Omc$ some structure that enables
$\Omc$ to check that the counter has reached the value $2^m$.

For this, use Proposition~\ref{prop main} to construct the OCN $\Omc(G)$ (where $G$ is from above) 
and add it disjointly to $\Omc$. Moreover, add
to $\delta_{>0}$ and $\delta_0$ the transitions 
$(s,0,\inp)$ for all $s \in S_f$, where $\inp$ is the in control location  of $\Omc(G)$.
Finally, introduce a new proposition $\rho$ and set
$Q_\rho = \{\out\}$, where $\out$ is the out control location of $\Omc(G)$.
By putting $q=s_0$ we obtain:
$$
x \in L \ \Longleftrightarrow \ (T(\Omc), (q,0)) \models \exists \ 
\underbrace{(\alpha  \to \exists\mathsf{X} (\beta
\wedge \EF ( \neg \exists \mathsf{X} 
\gamma)))}_{\varphi\text{ from Proposition~\ref{prop main}}} \ \mathsf{U} \ \rho .
$$
This concludes the proof of the theorem.
\qed
\end{proof}
By slightly modifying the proof of Theorem~\ref{theo ctl data}, the following
corollary can be shown.

\begin{corollary}
There exists a fixed $\CTL$ formula of the kind $\exists\G\psi$, where $\psi$ is an
$\EF$ formula, such that the following problem is $\PSPACE$-complete:

\noindent
INPUT: An OCN $\O$ and a control location $q$ of $\O$.

\noindent
QUESTION: $(T(\O),(q,0))\models\exists\G\psi$?
\end{corollary}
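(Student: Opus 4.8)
The plan is to reuse the reduction from the proof of Theorem~\ref{theo ctl data} almost verbatim, keeping the \emph{same} fixed $\EF$ formula $\varphi=(\alpha\to\exists\X(\beta\wedge\EF(\neg\exists\X\gamma)))$ from Proposition~\ref{prop main} and making it the ``$\psi$'' of the $\exists\G$ formula. Recall that $\exists\G\varphi$ abbreviates $\exists\varphi\WU\false$, so $(q,0)\models\exists\G\varphi$ holds precisely when there is an \emph{infinite} path starting in $(q,0)$ all of whose states lie in $\sem{\varphi}_{T(\O)}$. The only change to the OCN $\Omc$ built in the proof of Theorem~\ref{theo ctl data} is a single self-loop: at the control location $\out$ of the appended copy of $\Omc(G)$ (the gadget checking that the counter equals $2^m$) we add the transition $(\out,0,\out)$ to both $\delta_0$ and $\delta_{>0}$, so that $\Omc$ stays an OCN; the auxiliary proposition $\rho$ is then no longer needed. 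As before we put $q=s_0$. Note also that $\varphi=(\alpha\to\cdots)$ holds vacuously in every control location not carrying $\alpha$, in particular in all NFA states $s\in S$, in $\inp(G)$ and in $\out$.

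For the direction ``$x\in L\Rightarrow(s_0,0)\models\exists\G\varphi$'': by the analysis in the proof of Theorem~\ref{theo ctl data} there is a $\sem{\varphi}_{T(\Omc)}$-path from $(s_0,0)$ to $(s,2^m)$ for some $s\in S_f$; since $G(\CRR_m(2^m))=1$, Proposition~\ref{prop main} gives a $\sem{\varphi}$-traversal of the copy of $\Omc(G)$ at counter value $2^m$, so the above path extends to a $\sem{\varphi}_{T(\Omc)}$-path ending in $(\out,2^m)$; the new self-loop then continues this path forever inside $\sem{\varphi}_{T(\Omc)}$, witnessing $(s_0,0)\models\exists\G\varphi$.

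The substance is the converse. First I would establish a monotonicity property: \emph{along any $\sem{\varphi}_{T(\Omc)}$-path starting in $(s_0,0)$, every NFA-state configuration has counter value $\le 2^m$.} The counter can increase only via the $+1$-edges $(\out(\tau),{+}1,t)$; such an edge is preceded by a $\sem{\varphi}$-traversal of the copy $\Omc(\tau)$ of $\Omc(F\wedge\neg G)$ (or of $\Omc(\neg F\wedge\neg G)$) at a fixed counter value $M$, which by Proposition~\ref{prop main} forces $(\neg G)(\CRR_m(M))=1$, i.e.\ $M\ne 2^m$; since $M$ is also the value at which that gadget was entered from an NFA-state configuration, an easy induction along the path gives the claim. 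Second, the only cycle in the transition system $T(\Omc)$ is the new self-loop at $\out$: every other cycle in the control graph of $\Omc$ strictly changes the counter --- the NFA-induced cycles pass through a $+1$-edge and strictly increase it, the $\div(p_i)$-loops strictly decrease it --- hence none of them is realised by a cycle of $T(\Omc)$. Combining these two facts: an infinite $\sem{\varphi}_{T(\Omc)}$-path from $(s_0,0)$ has bounded counter, so it visits only finitely many configurations and is eventually periodic, hence it eventually repeats the self-loop at some $(\out,M_0)$ and in particular reaches $(\out,M_0)$; the prefix up to the first such configuration enters the copy of $\Omc(G)$ through some $(s,M_0)$ with $s\in S_f$, and a $\sem{\varphi}$-traversal of $\Omc(G)$ forces $G(\CRR_m(M_0))=1$, hence $M_0=2^m$; thus there is a $\sem{\varphi}_{T(\Omc)}$-path from $(s_0,0)$ to $(s,2^m)$ with $s\in S_f$, and therefore $x\in L$ by the correctness of the construction in the proof of Theorem~\ref{theo ctl data}. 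Contrapositively, $x\notin L$ implies $(s_0,0)\not\models\exists\G\varphi$.

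The main obstacle is precisely this cycle/boundedness analysis of $T(\Omc)$ (everything else is a direct reading of the proof of Theorem~\ref{theo ctl data}), and the delicate point inside it is that the bound $M\le 2^m$ really propagates only along $\sem{\varphi}$-paths, not along arbitrary paths. Finally, $\PSPACE$-membership is inherited from model checking the modal $\mu$-calculus over OCPs \cite{Serr06}, so the problem is $\PSPACE$-complete, with $\psi=\varphi$ a fixed $\EF$ formula as required.
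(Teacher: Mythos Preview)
Your proposal is correct and follows essentially the same approach as the paper. The only difference is cosmetic: the paper adds the transition $(\out,0,\inp)$ (looping back through the whole gadget $\Omc(G)$) whereas you add the self-loop $(\out,0,\out)$; both modifications create the required infinite $\sem{\varphi}$-path at counter $2^m$, and your more detailed cycle/boundedness analysis for the converse direction is a fine elaboration of what the paper leaves implicit.
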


\begin{proof}
The proof is almost identical to the proof of Theorem~\ref{theo ctl data},
except that we do not introduce the atomic proposition $\rho$. 
We rather add both to $\delta_0$ and $\delta_{>0}$ the transition
$(\out,0,\inp)$, where $\out$ is the out control location of 
$\Omc(G)$ and $\inp$ is the in control location of $\O(G)$.
We define $\psi=\exists\G\varphi$, where again $\varphi$ is
the formula from Proposition~\ref{prop main}.
\qed
\end{proof}

\section{Combined complexity of $\EF$ is hard for $\mathsf{P}^{\mathsf{NP}}$}
{\label{S Combined}}

In this section, we will apply the efficient transformation from
Chinese remainder representation to binary representation
(Theorem~\ref{theorem hesse und co}) in order
to prove that the combined complexity for 
$\EF$ over one-counter nets is hard for $\mathsf{P}^{\mathsf{NP}}$. 
For formulas represented succinctly by dags (directed acyclic graphs)
this was already shown in \cite{GoMaTo09}. The point here is that we use
the standard tree representation for formulas. 

\begin{proposition} \label{prop main ef}
The following problem can be solved by a logspace transducer:

\noindent
INPUT: A list of the first $m$ consecutive (unary encoded) prime numbers and
a Boolean circuit $C = C((x_{i,r})_{i \in [m], 0 \leq r < p_i})$ (with a
single output gate)

\noindent
OUTPUT:  An OCN $\Omc(C)$ with a distinguished state $\inp$
and an $\EF$ formula $\varphi(C)$ 
such that for every number $0 \leq M < \prod_{i=1}^m p_i$ 
we have:
$$
C(\CRR_m(M))=1 \quad \Longleftrightarrow \quad (T(\Omc(C)), (\inp,M)) \models \varphi(C) .
$$
\end{proposition}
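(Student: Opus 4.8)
The goal is to compute, in logspace, an OCN $\Omc(C)$ and an $\EF$ formula $\varphi(C)$ such that $(\inp,M)$ satisfies $\varphi(C)$ in $T(\Omc(C))$ exactly when $C(\CRR_m(M))=1$. The crucial difference to Proposition~\ref{prop main} is that here the input is a Boolean \emph{circuit} (a dag), not a formula/tree, and we want an $\EF$ formula, not an until-formula. The plan is to exploit that $\EF$ formulas can navigate the OCN, so the circuit structure can be encoded \emph{inside the OCN} while the formula only needs to be of constant alternation depth but linear in the \emph{depth} of the circuit (or, better, we make the formula independent of $C$ altogether). Concretely, I would build one ``gadget'' per gate of $C$ and wire the gadgets together according to the dag edges; evaluating a gate of the circuit corresponds to reachability questions in the OCN, and an AND gate becomes a universal branching ($\forall\X$, expressible as $\neg\exists\X\neg$) while an OR gate becomes an existential branching ($\exists\X$). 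The counter value stays fixed at $M$ throughout the traversal (all internal transitions have weight $0$), except in the leaf gadgets $\inp(x_{i,r})$, where — just as in Proposition~\ref{prop main} — we branch off into a divisibility-testing component $\div(p_i)$ that decrements by $r$ and then by multiples of $p_i$, checking $M\equiv r\pmod{p_i}$.

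\textbf{Key steps.} First I would fix the gadget for a variable $x_{i,r}$: a control location $\inp(x_{i,r})$ from which a $0$-transition leads to an ``accept'' sink and an (alternatively taken) transition of weight $-r$ leads to $\div(p_i)$, with self-loop $(\div(p_i),-p_i,\div(p_i))$ and $(\div(p_i),-1,\perp)$, and $\perp$ carrying the proposition $\gamma$; this is exactly the mechanism of Proposition~\ref{prop main}, and the sub-$\EF$-formula $\beta\wedge\EF(\neg\exists\X\gamma)$ of the formula $\varphi$ there detects $M\equiv r\pmod{p_i}$. Second, for an OR gate $G=G_1\vee G_2$ I add $0$-transitions $(\inp(G),0,\inp(G_1))$ and $(\inp(G),0,\inp(G_2))$; then ``$G$ evaluates to true at counter value $M$'' becomes ``$\exists\X$(the disjunctive continuation)''. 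Third, for an AND gate $G=G_1\wedge G_2$ the naive ``visit $G_1$, then $G_2$'' sequential trick from Proposition~\ref{prop main} is not available for $\EF$, so instead I use branching: from $\inp(G)$ a $0$-transition to $\inp(G_1)$ and one to $\inp(G_2)$, but now the formula at an AND gate must be of the shape $\forall\X(\ldots)=\neg\exists\X\neg(\ldots)$. The alternation of $\exists\X$ and $\forall\X$ along a root-to-leaf path of $C$ forces the $\EF$ formula $\varphi(C)$ to have $\X$-nesting depth equal to the depth of $C$; this is harmless because $C$ is given explicitly and has polynomial size, so depth $\le$ size is polynomial, hence $\varphi(C)$ has polynomial size and is computable in logspace by a straightforward recursion on $C$ (no unfolding — one nested $\X$-layer per \emph{level} of the circuit, with gate fan-out handled because the formula is the same at every gate of a given level only up to which $\exists/\forall$ to use, but to be safe one writes the formula by structural recursion and bounds its size by $O(\mathrm{depth}(C))$).

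Wait — structural recursion on a dag with sharing does \emph{not} blow up: we recurse on the \emph{level} (longest path from an input), so $\varphi(C)=\Theta_{d}$ where $\Theta_j$ is defined from $\Theta_{j-1}$ with $d=\mathrm{depth}(C)$, giving $|\varphi(C)|=O(d)\le O(|C|)$. The formula $\Theta_j$ reads, at the current control location, either ``$\exists\X\,\Theta_{j-1}$'' (if we are at an OR level) or ``$\forall\X\,\Theta_{j-1}$'' (AND level) or the variable-test formula $\beta\wedge\EF(\neg\exists\X\gamma)$ at level $0$; but since different gates on the same level may be AND or OR, I instead bake the gate type into the OCN: give each gate gadget a proposition-controlled ``mode'' and let a single fixed formula pattern $(\text{OR-marker}\to\exists\X\,\Theta_{j-1})\wedge(\text{AND-marker}\to\neg\exists\X\neg\,\Theta_{j-1})$ do the work. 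This keeps $\varphi(C)$ a genuine $\EF$ formula (the $\forall\X$ unfolds to $\neg\exists\X\neg$, no until) of size $O(\mathrm{depth}(C))$.

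\textbf{Main obstacle.} The delicate point — and where I expect the real work to lie — is the AND case: encoding conjunction in $\EF$ via $\forall\X$ is sound only if \emph{every} successor of the AND gadget must be ``good'', which forces the OCN to have \emph{no} spurious successors at an AND control location and requires the divisibility-test branches not to interfere (e.g. a $\forall\X$ must not be spoiled by the edge into $\div(p_i)$). I would resolve this by routing the divisibility test so that its ``failure'' outcome still satisfies the relevant sub-formula vacuously, mirroring how $\varphi$ in Proposition~\ref{prop main} uses $\alpha\to\cdots$: the variable gadget's $\alpha$-marker makes the $\X$-obligation conditional, so a ``wrong residue'' branch is harmless. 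The correctness proof is then an induction on the level $j$ of the gate, with the invariant: for every gate $g$ at level $j$ and every $M$, $(\inp(g),M)\models\Theta_j$ in $T(\Omc(C))$ iff $g$ evaluates to $1$ under $\CRR_m(M)$. The base case $j=0$ is the residue test from Proposition~\ref{prop main}; the inductive step splits on OR (use $\exists\X$ semantics and the IH for one child) and AND (use $\forall\X=\neg\exists\X\neg$ and the IH for both children, checking that the only $\X$-successors carrying the relevant marker are $\inp(G_1)$ and $\inp(G_2)$). Finally, observing that every construction step — emitting gadgets per gate, wiring per edge, computing residues $r$ (the primes are unary, so logspace), and producing the $O(\mathrm{depth})$-sized $\varphi(C)$ by the level recursion — is logspace-computable, gives the logspace transducer claimed, and we set $\inp=\inp(o)$ for the output gate $o$ of $C$.
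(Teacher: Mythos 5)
Your overall construction coincides with the paper's: the OCN has one control location per gate, dag edges become $0$-weight transitions (so the counter stays at $M$), input gates labelled $x_{i,r}$ branch into the divisibility gadget $\mathrm{div}(p_i)$ exactly as in Proposition~\ref{prop main}, and the $\EF$ formula descends the circuit with $\exists\X$ at $\OR$ gates and $\forall\X=\neg\exists\X\neg$ at $\AND$ gates, ending in the residue test $\exists\X\,\EF(\neg\exists\X\gamma)$; your worry about $\forall\X$ being ``spoiled'' is in fact vacuous in this construction, because only input gates carry transitions into the $\mathrm{div}(p_i)$ part, and at input gates the formula is existential. The genuine gap is in how you handle levels that mix $\AND$ and $\OR$ gates. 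Your fallback pattern $\Theta_j=(\text{OR-marker}\to\exists\X\,\Theta_{j-1})\wedge(\text{AND-marker}\to\neg\exists\X\neg\,\Theta_{j-1})$ contains \emph{two} occurrences of $\Theta_{j-1}$, so as a formula (tree representation, which is the whole point of Theorem~\ref{T EF}) its size doubles with every level: $|\Theta_d|=\Omega(2^d)$, not the claimed $O(d)$. For an arbitrary input circuit the depth can be as large as the size, so the output is exponential and cannot be produced by a logspace transducer; the proposition as stated is then not proved. (The dag-sharing remark you make is beside the point: the blow-up comes from the two syntactic copies per level, not from sharing in $C$.)

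The missing idea is the normalization the paper invokes: assume w.l.o.g.\ that $C$ is organized into layers $1,\dots,k+1$ in which every layer consists exclusively of $\AND$ gates or exclusively of $\OR$ gates, with all children of a layer-$i$ gate in layer $i+1$ (this can be enforced in logspace by inserting dummy one-input gates, at polynomial cost). Then a \emph{single} modality $\mathsf{M}_i\in\{\exists\X,\forall\X\}$ per layer suffices, and the formula is the plain chain $\varphi(C)=\mathsf{M}_1\mathsf{M}_2\cdots\mathsf{M}_k\,\exists\X\,\EF(\neg\exists\X\gamma)$ of size $O(k)$, trivially logspace-computable; no per-gate markers and no case split inside the formula are needed. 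With that replacement (or, equivalently, a uniform $\forall\X\exists\X$ gadget per gate), your induction on levels goes through and matches the paper's argument. As written, though, the size claim $|\varphi(C)|=O(\mathrm{depth}(C))$ for your marker-based formula is incorrect, and this is precisely the step the proposition's logspace requirement hinges on.
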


\begin{proof}
As in the proof of Proposition~\ref{prop main} we can eliminate in $C$ all
input gates labeled with a negated variable.
Moreover, we can w.l.o.g. assume that the circuit $C$ is organized in $k+1$ layers,
where each layer either contains only AND- or OR-gates. All children
of a node in layer $i$ belong to layer $i+1$. Layer $1$ contains only
the unique output gate of the circuit, whereas layer $k+1$ contains the input gates.
For $i \in [k]$, let $\ell_i = \AND$ (resp. $\ell_i = \OR$) if layer
$i$ consists of AND-gates (resp. OR-gates).

The state set of the OCN $\Omc(C,b)$ contains all gates of the circuit
$C$; the unique output gate becomes the  distinguished state $\inp$.
We add the transition $(g_1, 0, g_2)$ to $\delta_0$ and $\delta_{>0}$
if gate $g_2$ is a child of gate $g_1$. 
If gate $g$ is an input gate labeled with $x_{i,r}$ then
we add the transition $(g, -r, \div(p_i))$ to $\delta_{>0}$.
If $r=0$, then the transition $(g, 0, \div(p_i))$ is also added
to $\delta_0$. 
Finally, for the states $\div(p_i)$ we have
the same transitions as in the proof of Proposition~\ref{prop main}.
This concludes the description of the OCN $\Omc(C)$.

In order to describe the $\EF$ formula $\varphi(C)$ 
let $\mathsf{M}_i = \exists\X$ (resp. $\mathsf{M}_i = \forall\X$) if 
$\ell_i = \OR$ (resp. $\ell_i = \AND$) for $i \in [k]$. Then let
\begin{equation}  \label{varphi(C,b)}
\varphi(C) = \mathsf{M}_1 \mathsf{M}_2 \cdots \mathsf{M}_k \exists\X \EF( \neg \exists \mathsf{X} \gamma) ,
\end{equation}
where the proposition $\gamma$ is used in the same way as in the proof
of Proposition~\ref{prop main} to allow to test if the counter value is zero.
It is clear that this formula fulfills the requirements of the theorem.
\qed
\end{proof}

\begin{theorem}{\label{T EF}}
The following problem is $\mathsf{P}^{\mathsf{NP}}$-hard:

\noindent
INPUT: An OCN $\Omc$, a state $q_0$ of $\Omc$, and an $\EF$ formula $\varphi$.

\noindent
QUESTION: $(T(\Omc), (q_0,0)) \models \varphi$?
\end{theorem}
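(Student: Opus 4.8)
\emph{Proof idea.}
The plan is to reduce from the $\mathsf{P}^{\mathsf{NP}}$-complete problem recalled in Section~\ref{S Tools}: given a Boolean formula $\psi(x_1,\dots,x_m)$, decide whether $\psi$ is satisfiable and the maximal $M^\ast\in[0,2^m-1]$ with $\psi(\BIN_m(M^\ast))=1$ is even. The first move I would make is to pass from the maximum to a minimum: since $\BIN_m(2^m-1-N)$ is the bitwise complement of $\BIN_m(N)$ for $N\in[0,2^m-1]$, the number $N^\ast:=2^m-1-M^\ast$ (when $\psi$ is satisfiable) is the \emph{least} $N\in[0,2^m-1]$ with $\psi'(N):=\psi(\overline{\BIN_m(N)})=1$, and $M^\ast$ is even iff $N^\ast$ is odd. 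This is the decisive trick: the universal part ``nothing smaller works'' then quantifies over the bounded window $[0,N)\subseteq[0,2^m)$, which an OCN can sweep by counting \emph{down}, whereas the original ``nothing larger works'' would require counting up with no stopping criterion.

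Next I would build, in logspace, a polynomial-size Boolean circuit $C$ over the Chinese-remainder inputs $(x_{i,r})_{i\in[m],\,0\le r<p_i}$, where $p_1,\dots,p_m$ are the first $m$ primes written in unary: $C$ feeds these inputs into the circuit $B_m$ of Theorem~\ref{theorem hesse und co}, complements its $m$ output bits, and feeds the result into a negation-normal form of $\psi$, so that $C(\CRR_m(N))=\psi'(N)$ for all $N\in[0,2^m)$. Because $B_m$ has depth $O(\log m)$ and $\psi$ has polynomial depth, $C$ can be leveled into polynomially many AND/OR-layers; this is exactly the point where the low-depth CRR-to-binary conversion of Theorem~\ref{theorem hesse und co} is used, and it is what keeps the final $\EF$ formula of polynomial size \emph{in tree representation} (an iterated-doubling encoding of divisibility as in Section~\ref{S Expression} would blow up here — in the dag-formula construction of \cite{GoMaTo09} this blow-up is hidden by sharing). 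Applying Proposition~\ref{prop main ef} to the circuit $C'=C\wedge x_{1,1}$ (``$\psi'(N)=1$ and $N$ odd''), to a second copy of $C$, and to the circuit $\psi\circ B_m$ (a satisfiability test) yields OCNs with distinguished $\inp$-states and $\EF$ formulas $\varphi(C')$, $\varphi(C)$, $\varphi(\psi\circ B_m)$ with $(\inp,N)\models\varphi(\cdot)$ iff the circuit accepts the residue tuple of $N$; inspecting the divisibility gadget in the proof of Proposition~\ref{prop main} shows this equivalence holds for \emph{every} $N\ge 0$.

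Finally I would assemble an OCN $\Omc$ with a start location $q_0$ where from $(q_0,0)$ a $+1$ self-loop reaches $(q_0,N)$ for all $N\ge 0$; from $q_0$ there are $0$-edges to the $\inp$-state of the $\psi\circ B_m$-gadget and to a location $s$, and $s$ has a $0$-edge to the $\inp$-state of the $C'$-gadget and a $-1$ edge to a location $s'$ carrying a $-1$ self-loop and a $0$-edge to the $\inp$-state of the second $C$-gadget, so that from $(s,N)$ exactly the counters $N'\in[0,N)$ reach that gadget. With fresh propositions $a,b,c,s$ marking respectively the $\inp$-state of the $\psi\circ B_m$-gadget, the $\inp$-state of the second $C$-gadget, the $\inp$-state of the $C'$-gadget, and the location $s$, the formula is
\begin{equation*}
\Phi\ =\ \exists\F\bigl(a\wedge\varphi(\psi\circ B_m)\bigr)\ \wedge\ \exists\F\Bigl(s\wedge\exists\X\bigl(c\wedge\varphi(C')\bigr)\wedge\neg\,\exists\F\bigl(b\wedge\varphi(C)\bigr)\Bigr).
\end{equation*}
The first conjunct expresses satisfiability of $\psi$ (as $N$ ranges over $\N$, the residue tuples realize all $m$-bit assignments). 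The second conjunct says that some guessed $N$ is odd with $\psi'(N)=1$ and $\psi'(N')=0$ for all $N'<N$: for $N<2^m$ this pins $N=N^\ast$, hence expresses ``$N^\ast$ odd'', and a spurious guess $N\ge 2^m$ is harmless, since then $N^\ast<2^m\le N$ is among the swept values $N'<N$ and has $\psi'(N^\ast)=1$, refuting the inner universal clause. Hence $(T(\Omc),(q_0,0))\models\Phi$ iff $\psi$ is satisfiable and $M^\ast$ is even; computing the primes, $B_m$, Proposition~\ref{prop main ef}, and the scaffolding is logspace, and $\Phi$ is polynomial (one modal operator per circuit layer), giving a polynomial-time (indeed logspace) reduction. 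The main obstacle I anticipate is not a single calculation but precisely this: arranging matters so that the inherently unbounded OCN counter cannot corrupt the intended semantics on the window $[0,2^m)$, which the complement-to-minimum reformulation together with the ``a too-large guess already sees the true minimum'' observation is meant to resolve, while Theorem~\ref{theorem hesse und co} is what turns the succinct dag construction of \cite{GoMaTo09} into one with a polynomial-size tree formula. \qed
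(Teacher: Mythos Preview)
Your argument is correct, and it differs from the paper's in an interesting way. Both proofs use Theorem~\ref{theorem hesse und co} together with Proposition~\ref{prop main ef} to evaluate a polynomial-depth circuit on the Chinese-remainder encoding of the current counter value via a polynomial-size tree $\EF$ formula; this is the essential step that upgrades the dag-formula lower bound of \cite{GoMaTo09} to tree formulas. Where you diverge is in how the $\mathsf{P}^{\mathsf{NP}}$-complete ``maximum satisfying assignment is even'' problem is encoded on the counter. The paper works with the maximum directly: it guesses $M_1$ by a $+1$ loop, uses a separate ``counter equals $2^m$'' gadget $G$ once to certify $M_1<2^m$ (by scanning down) and a second time to bound the universal range (for every $M_2>M_1$, either $M_2\ge 2^m$, detectable by scanning down to $G$, or $\psi(M_2)=0$). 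You instead complement to a minimum, so the universal clause becomes ``nothing below the guess satisfies $\psi'$'', which a single downward sweep handles; the observation that a spurious guess $N\ge 2^m$ already sees the true minimum $N^\ast<2^m$ among the swept values replaces the paper's explicit $G$-based range check. This buys you a slightly leaner construction (no $G$ gadget, no second up-loop), at the cost of the bit-complement reformulation; the paper's route is more direct but needs the auxiliary $2^m$-detector in two places. As a side remark, your first conjunct (satisfiability of $\psi$) is in fact redundant: if $\psi$ is unsatisfiable then $\psi'$ is too, so $C'$ is identically $0$ and the $\exists\X(c\wedge\varphi(C'))$ part of the second conjunct already fails for every $N$.
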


\begin{proof}
Let us take a Boolean formula $\psi(x_1,\ldots,x_m)$.
We construct an OCN $\Omc_\psi$ with a distinguished state
$q_0$ and an $\EF$ formula $\varphi_\psi$
such that $(T(\Omc_\psi), (q_0,0)) \models \varphi_\psi$
if and only if $\psi$ is satisfiable and the maximal number $M \in [0,2^m-1]$ with
$\psi(\BIN_m(M))=1$ is even.

As in the proof of Theorem~\ref{theo ctl data} (Steps 2 and 3), we 
compute in logarithmic space the list $p_1, \ldots, p_m$ of 
the first $m$ consecutive primes and 
the circuit $B = B_m((x_{i,r})_{i \in [m], 0 \leq r < p_i} )$
of logarithmic depth and fan-in at most two
from Theorem~\ref{theorem hesse und co}.
We combine $B$ with the  Boolean formula $\psi(x_1,\ldots,x_m)$
and obtain a Boolean circuit $C =  C((x_{i,r})_{i \in [m], 0 \leq r < p_i})$
such that for every number $0 \leq M \leq 2^m-1$:
\begin{equation} \label{gl varphi 1}
\psi(\BIN_m(M)) = 1 \ \Longleftrightarrow \  C(\CRR_m(M)) = 1.
\end{equation}
As in the proof of Theorem~\ref{theo ctl data}
let $G$ be the Boolean formula $\bigwedge_{i\in [m]} x_{i,r_i}$
were $r_i = 2^m \text{ mod } p_i$ for $i\in[m]$.

The main structure of the OCN $\Omc_\psi$ is described by the 
following diagram:
\begin{center}
\setlength{\unitlength}{1mm}
\begin{picture}(100,40)(-10,-15)
\gasset{Nadjustdist=0.8,Nadjust=wh,Nframe=n,Nfill=n,AHnb=1,ELdist=.8,linewidth=.2} 
  \put(-15,4){\Large $\delta_{>0}$:}
  \node(q)(0,5){$q_0$}
  \drawloop[loopangle=90](q){$+1$}
  \node(p)(20,0){$p$} 
  \drawedge[ELside=r](q,p){$0$} 
  \drawloop[loopangle=-90](p){$-1$}
  \node(r)(20,10){$r$} 
  \drawedge(q,r){$+1$} 
  \drawloop[loopangle=90](r){$+1$}
  \node(s)(40,10){$s$} 
  \drawedge(r,s){$0$} 
  \drawloop[loopangle=90](s){$-1$}
  \put(55,4){\Large $\delta_{0}$:}
  \node(q)(70,5){$q_0$}
  \drawloop[loopangle=90](q){$+1$}
  \node(p)(90,0){$p$} 
  \drawedge[ELside=r](q,p){$0$} 
  \node(r)(90,10){$r$} 
  \drawedge(q,r){$+1$} 
   \end{picture}
\end{center}
{}From the states $q_0$, $p, r$, and $s$ some further $0$-labeled transitions
emanate to OCNs of the form constructed in Proposition~\ref{prop main ef}:
\begin{itemize}
\item From $q_0$ a transition into the initial state $\inp$ of a copy of 
$\Omc(C)$.
\item From $p$ and $s$ a transition into the initial state $\inp$ of a copy of 
$\Omc(G)$.
\item From $r$ a transition into the initial state $\inp$ of a copy of 
$\Omc(\neg C)$.
\end{itemize}
Now our $\EF$ formula $\varphi_\psi$ expresses the following:
We can reach a configuration $(q_0,M_1)$ from $(q_0,0)$ in the OCN
$\Omc_\psi$ such that the following holds:
\begin{itemize}
\item $C(\CRR_m(M_1)) = 1$,
\item from $(q_0,M_1)$ we cannot reach a configuration
$(p,M_0)$ with $0 \leq M_0 \leq M_1$ and $G(\CRR_m(M_0)) = 1$
(i.e., $M_0 = 2^m \text{ mod } \prod_{i=1}^m p_i$), and
\item for all configurations $(r,M_2)$  that are reachable from $(q_0,M_1)$
(hence  $M_2 > M_1$)
the following holds: If we cannot reach a configuration 
$(s, M_3)$ from $(r, M_2)$ with $G(\CRR_m(M_3)) = 1$ 
then $C(\CRR_m(M_2)) = 0$.
\end{itemize}
Using the formulas constructed in Proposition~\ref{prop main ef},
it is straightforward to transform this description into a real
$\EF$ formula. This concludes the proof.
\qed
\end{proof}
At the moment we cannot prove $\mathsf{P}^\mathsf{NP}$-hardness
for the data complexity of $\EF$ over OCPs.
For this, it would be sufficient to have a fixed $\EF$ formula 
$\varphi(C)$ in (\ref{varphi(C,b)}). Note that this formula
only depends on the number of layers $k$ of the circuit $C$. Hence, if 
$C$ is from an $\AC^0$-circuit family, then $\varphi(C)$
is in fact a fixed formula.  In our case, the circuit is the composition
of two circuits, one from an $\NC^1$-circuit family
(coming from Theorem~\ref{theorem hesse und co}, where we could
even assume a $\TC^0$-circuit family) 
and a Boolean formula, which can be assumed to be in conjunctive normal form.
Hence, the main obstacle for getting a fixed formula
is the fact that converting from Chinese remainder representation
to binary representation is not possible in $\AC^0$ (this is provably
the case).

\section{Reachability objectives on one-counter Markov decision processes} \label{S Markov}

In this section we show that the techniques developed in the previous sections
can be used to improve a lower bound  
on verifying reachability objectives on one-counter
Markov decision processes from \cite{BraBroEteKucWoj09}.

\renewcommand{\D}{\mathcal{D}}
\renewcommand{\C}{\mathcal{C}}
\newcommand{\Reach}{\text{Reach}}
\newcommand{\Val}{\text{Val}}

A {\em probability distribution} on a non-empty finite set $S$ is a
function $f:S\rightarrow\{x\in\Q\mid 0\leq x\leq 1\}$ such that
$\sum_{s\in S}f(s)=1$. We restrict here to rational probabilities,
in order to get finite representations for probability distributions.
A (image-finite) {\em Markov chain} is a triple $\C=(S,\rightarrow,f)$, where 
$(S,\to)$ is an image-finite and deadlock-free directed graph ($S$ is also called
the set of states of $\C$)  and
$f$ assigns to each $s\in S$ a probability distribution $f(s)$ over all (the finitely
many) successors of $s$ w.r.t. $\to$.  If $s \to t$, then we also use the notations
$f(s,t) = x$ or $s\xrightarrow{x}t$ for $(f(s))(t) = x \in \Q$.
A (image-finite) {\em Markov decision process} (MDP) is a triple
$\D=(V,\hookrightarrow,f)$, where 
$(V,\hookrightarrow)$ is again an image-finite and deadlock-free directed graph,
the set $V$ of vertices is partitioned as $V=V_N\uplus V_P$
($V_N$ is the set of  {\em nondeterministic} vertices, $V_P$ 
is the set of {\em probabilistic} vertices),
and $f$ assigns to each probabilistic vertex $v\in V_P$ a probability distribution on $v$'s successors.
A {\em strategy} $\sigma$ is a function that assigns to each $wv$ with 
$w\in V^*$ and $v\in V_N$ a probability distribution on $v$'s successors.
If  $\sigma$ assigns to $wv$ and $v'$ (where $v\hookrightarrow
v'$) the probability $x$, then we write $\sigma(wv,v')=x$.
Every strategy $\sigma$ determines a Markov chain
$\D(\sigma) = (V^+, \to, f)$, where $wv\xrightarrow{x}wvv'$ if and only if
$v\hookrightarrow v'$ and moreover either $v\in V_P$ and $f(v,v')=x$, or $v\in
V_N$ and $\sigma(wv,v')=x$.
Let $\path_\omega(\D) = \path_\omega(V,\hookrightarrow)$ and 
$\path_\omega(\D(\sigma)) = \path_\omega(V^+,\to)$; paths in these
sets will be called {\em runs} in $\D$ or $\D(\sigma)$, respectively.
Note that every run in $\D$ corresponds to a unique run in $\D(\sigma)$
and vice versa in a natural way. In order to simplify notation, we will quite often
identify these corresponding runs.
Let us fix a set of {\em target vertices} (also called a {\em reachability objective})
$T\subseteq V$ of the MDP $\D$. 
For each strategy $\sigma$ and each vertex $v\in V$ of $\D$, let
$$
\Reach_T^\sigma(v)=\{w\in\path_\omega(\D(\sigma))\mid w_1=v\text{ and } \exists i \geq 1 : w_i \in V^*T\}
$$ 
denote all runs in $\D(\sigma)$ that start in $v$
and that satisfy the reachability objective $T$ in $\D$.
For each $T$ and each $v$, the set $\Reach_T^\sigma(v)$ is measurable.
The probability $\mathcal{P}(\Reach_T^\sigma(v))$ for the set $\Reach_T^\sigma(v)$ can be obtained as follows:
Take all finite paths $w \in \path_f(\D(\sigma))$ that start in $v$ and
such that the last state of $w$ is from $V^*T$ but no previous state in $w$ 
is from $V^*T$ (this set is prefix free). For each such finite path $w = w_1 \cdots w_n$ 
such that $w_i \xrightarrow{x_i} w_{i+1}$ in $\D(\sigma)$ the probability
is $x_1 \cdot x_2 \cdots x_{n-1}$. Finally, the probability for 
$\Reach_T^\sigma(v)$ is the (possibly infinite) sum of all these probabilities.
Now, let us define the {\em $T$-reachability value in $v$} by
$$\Reach_T(v)\ =\ \sup\{\mathcal{P}(\Reach_T^\sigma(v))\mid 
\sigma\text{ is a strategy in }\D\}.
$$ 
Observe that it is not required that this supremum is actually reached
by a certain strategy $\sigma$.
If however a strategy $\sigma$ does reach the $T$-reachability value, i.e.,
$\mathcal{P}(\Reach^\sigma_T(v))=\Reach_T(v)$, then $\sigma$ is called
{\em optimal}.

\renewcommand{\A}{\mathcal{A}}

A one-counter Markov decision process (OC-MDP) is a tuple
$\A=(Q,\delta_0,\delta_{>0},f_0,f_{>0})$, where $Q=Q_N\uplus Q_P$ is a finite
set of {\em control locations} which is partitioned into {\em nondeterministic
control locations} $Q_N$ and {\em probabilistic control locations} $Q_P$,
$\delta_0\subseteq Q\times\{0,1\}\times Q$ is  a set of {\em zero transitions}
and
$\delta_{>0}\subseteq Q\times\{-1,0,1\}\times Q$ is  a set of {\em positive transitions}
such that each $q\in Q$ has at least one outgoing zero transition and at least
one outgoing positive transition, and finally $f_0$ (resp. $f_{>0}$) assigns to 
each $q\in Q_P$ a probability distribution over all outgoing zero (resp. positive)
transitions of $q$. 
The MDP that $\A$ describes is $\D(\A)=(V,\hookrightarrow,f)$, where
\begin{itemize}
\item $V_N=Q_N\times\N$ and $V_P=Q_P\times\N$, and
\item $(q,n)\hookrightarrow(q',n+i)$ if and only if one of the following two
holds:
\begin{itemize}
\item $n=0$ and $(q,i,q')\in\delta_0$. In this case
$f$ assigns to $(q,n)\hookrightarrow(q',n+i)$ the probability $f_0(q,i,q')$.
\item $n>0$ and $(q,i,q')\in\delta_{>0}$.
In this case
$f$ assigns to $(q,n)\hookrightarrow(q',n+i)$ the probability $f_{>0}(q,i,q')$.
\end{itemize}
\end{itemize}
Given an OC-MDP $\A=(Q,\delta_0,\delta_{>0},f_0,f_{>0})$ and a set of 
control locations $R\subseteq Q$, define
$$
\ValOne(R)\ =\ \{(q,n)\in Q\times\N\mid \Reach_{R\times\{0\}}(q,n)=1\}
$$
and
$$\OptValOne(R)\ =\ \{(q,n)\in Q\times\N\mid \exists \text{ strategy }\sigma:
\mathcal{P}(\Reach^\sigma_{R\times\{0\}}(q,n))=1\}$$
(both sets are defined w.r.t $\D(\A)$). 
In other words: $\ValOne(R)$ is the set of all states $(q,n)$
of the MDP $\D(\A)$ such that for every $\epsilon > 0$ there exists
a strategy $\sigma_\epsilon$ under which the
probability of reaching from $(q,n)$ a control location in $R$ and at
the same time having counter value $0$ is at least $1-\varepsilon$.
$\OptValOne(R)$ is the set of all states $(q,n)$
of the MDP $\D(\A)$ for which there exists a specific strategy under
which this probability becomes $1$.

\begin{theorem}[\cite{BraBroEteKucWoj09}]  \label{OC-MDP-1}
The following problem is $\PSPACE$-hard and in $\EXPTIME$:

\noindent
INPUT: An OCP-MDP $\A=(Q,\delta_0,\delta_{>0},f_0,f_{>0})$, $R\subseteq Q$, and
$q\in Q$.

\noindent
QUESTION: $(q,0)\in\OptValOne(R)$?
\end{theorem}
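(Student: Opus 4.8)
The plan is to dispatch the two halves separately. The $\EXPTIME$ upper bound is exactly the one of \cite{BraBroEteKucWoj09}, so I would simply invoke it. For the $\PSPACE$ lower bound I would give a logspace reduction from an arbitrary $\PSPACE$-complete language $L$; this is in fact the same reduction that underlies Theorem~\ref{T ValOne}, built so that the constructed OC-MDP has $T$-reachability value $1$ in its initial configuration exactly when $x\in L$, and value bounded away from $1$ otherwise. Since $\OptValOne(R)\subseteq\ValOne(R)$ (a strategy attaining probability $1$ attains $\geq 1-\varepsilon$ for every $\varepsilon$), one reduction proves $\PSPACE$-hardness of both membership problems at once, which is how the paper reproves the hardness part of Theorem~\ref{OC-MDP-1}.

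The reduction reuses the first steps of the proof of Theorem~\ref{theo ctl data}. Using $\AC^0$-serializability of $\PSPACE$ (Section~\ref{sec real}), fix a single NFA $A$, a polynomial $m=p(n)$, and --- after hard-wiring $x$ into a circuit of the serializing $\NC^1$-family --- a logspace-computable, logarithmic-depth, fan-in-$2$ Boolean circuit $C$ such that $x\in L$ iff the bit string $C(\BIN_m(0))C(\BIN_m(1))\cdots C(\BIN_m(2^m-1))$ is accepted by $A$. As in Steps~2--4 there, compute the first $m$ primes $p_1,\dots,p_m$ in unary, prepend the $\NC^1$-circuit $B_m$ of Theorem~\ref{theorem hesse und co} converting Chinese remainder to binary representation, and unfold the composed logarithmic-depth circuit into a polynomial-size Boolean formula $F((x_{i,r})_{i\in[m],\,0\le r<p_i})$ with $F(\CRR_m(M))=C(\BIN_m(M))$ for all $0\le M<2^m$.

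Now build an OC-MDP $\A$ that scans the counter value $M$ from $0$ up to $2^m$ while a control location remembers a state of $A$: in location $(s,M)$ the player Eve picks a transition $(s,b,s')$ of $A$ and must then \emph{certify} that $b=F(\CRR_m(M))$. Certification runs inside a copy of the gadget $\Omc(F)$ if $b=1$, of $\Omc(\neg F)$ if $b=0$ (negations first pushed to the leaves as in Proposition~\ref{prop main}), now read as a stochastic game: an $\OR$-gate is a nondeterministic location where Eve chooses a subformula, an $\AND$-gate is a probabilistic location where nature picks one of the two subformulas with probability $\frac12$, and a leaf $x_{i,r}$ is the divisibility-test gadget of Proposition~\ref{prop main} (a $-r$ step followed by a $-p_i$-loop) whose only way out leads through counter value $0$. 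Every false leaf, and every configuration that would be a deadlock, is redirected into a fresh absorbing sink outside $R$ --- this also makes $\A$ meet the OC-MDP requirement of having a zero- and a positive-transition at each location. On successful certification Eve returns to $(s',M)$, increments the counter, and continues; a parallel copy of $\Omc(G)$ with $G=\bigwedge_i x_{i,\,2^m\bmod p_i}$ lets her test ``$M=2^m$'', and if then $s'\in S_f$ she walks the counter down to $0$ by a deterministic $-1$-loop and enters a location of $R$. Put $q:=s_0$ and take $(q,0)$ as initial configuration.

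Correctness rests on a structural induction showing that for a \emph{true} Boolean formula Eve has a strategy reaching the gadget's exit with probability $1$ (whatever nature does at $\AND$-gates, every conjunct is true and hence traversable), whereas for a \emph{false} formula every strategy reaches the sink with probability at least $2^{-O(\log n)}=1/\mathrm{poly}(n)$, because $F$ has logarithmic depth. Hence if $x\in L$ Eve guesses the honest bit sequence together with an accepting run of $A$ and reaches $R\times\{0\}$ with probability $1$, so $(q,0)\in\OptValOne(R)$; if $x\notin L$ then under any strategy Eve either certifies some wrong bit --- and loses a $1/\mathrm{poly}(n)$ fraction of probability inside the corresponding gadget --- or stays honest, in which case $A$ rejects the true bit sequence and she never reaches an accepting location at all, so the value is at most $1-1/\mathrm{poly}(n)<1$ and $(q,0)\notin\ValOne(R)\supseteq\OptValOne(R)$. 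The whole construction is logspace (the only non-trivial ingredients, the primes and $B_m$, are logspace by Section~\ref{Sec-circuits}), which gives the hardness. The step I expect to be the main obstacle is exactly this stochastic Boolean-evaluation gadget: making the $\AND$/$\OR$ alternation interact with the counter-based divisibility leaves so that truth of $F$ corresponds \emph{exactly} to reachability probability $1$ --- not merely probability close to $1$ --- and routing all deadlocks into harmless sinks without opening a spurious path to $R\times\{0\}$ and without disturbing the counter bookkeeping (each certification must return the counter to the value it was entered with). Everything else is a routine adaptation of the proof of Theorem~\ref{theo ctl data}.
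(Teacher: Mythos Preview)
Two remarks, one organizational and one substantive.

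First, Theorem~\ref{OC-MDP-1} is a \emph{cited} result: the paper does not give its own proof but only records that in \cite{BraBroEteKucWoj09} the $\PSPACE$ lower bound is obtained by a reduction from emptiness of alternating finite automata over a one-letter alphabet. What you sketch is instead the reduction underlying the paper's Theorem~\ref{T ValOne}, which the paper explicitly says reproves $\PSPACE$-hardness of $\OptValOne$ as a byproduct. So you are comparing against the right argument, just not the one originally attached to Theorem~\ref{OC-MDP-1}.

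Second, and more importantly, your construction has a real gap at exactly the point you flag as ``the main obstacle''. You write that after a successful certification Eve \emph{returns to $(s',M)$} and then increments the counter; but your own description of the leaf gadget says that testing $x_{i,r}$ is a $-r$ step followed by a $-p_i$-loop ``whose only way out leads through counter value $0$''. Once the counter is at $0$ the value $M$ is gone and cannot be restored; there is no way to ``return to $(s',M)$''. The paper's construction (Lemma~\ref{L Prob} together with the proof of Theorem~\ref{T ValOne}) sidesteps this entirely: certification is \emph{not} sequential. From the intermediate control location $(s,b,t)$ a \emph{probabilistic} $\tfrac12/\tfrac12$ split sends the process either onward to $(t,M{+}1)$ or into the formula gadget $\A(F\wedge\neg G)$ (resp.\ $\A(\neg F\wedge\neg G)$), which is a one-way trip terminating at counter $0$. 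The target set $R$ consists of the ``good'' zero-counter locations \emph{inside} these gadgets (the $q(0,p_i)$), together with those of $\A(G)$; there is no final ``walk the counter down and enter $R$'' step as you describe. With this design, $\mathcal{P}(\Reach^\sigma_{R\times\{0\}}(s_0,0))=1$ forces every side-branch certification to succeed with probability $1$, hence every chosen bit $b$ to equal $F(\CRR_m(M))$, and the bound $\le 1-2^{-|F|}$ from Lemma~\ref{L Prob} (your $1/\mathrm{poly}(n)$ depth-based bound is not needed) handles the converse. Once you replace the impossible ``return'' by this probabilistic fork, the rest of your outline is essentially the paper's proof.
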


\noindent
Theorem~\ref{OC-MDP-1} was proven by a reduction from the
$\PSPACE$-complete emptiness problem for alternating finite word automata
over a singleton alphabet (\cite{Ho96}, see also \cite{JaSa07} for a simplified
presentation). 

\begin{theorem}[\cite{BraBroEteKucWoj09}]  \label{OC-MDP-2}
The following problem is hard for every level of $\BH$:

\noindent
INPUT: An OC-MDP $\A=(Q,\delta_0,\delta_{>0},f_0,f_{>0})$, $R\subseteq Q$, and
$q\in Q$.

\noindent
QUESTION: $(q,0)\in\ValOne(R)$?
\end{theorem}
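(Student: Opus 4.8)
The plan is to prove the stronger statement that this problem is in fact \emph{$\PSPACE$-hard} — which subsumes Theorem~\ref{OC-MDP-2} since $\BH\subseteq\P^{\NP[\log]}\subseteq\PSPACE$ — and, as a byproduct, to reprove the $\PSPACE$-hardness of $\OptValOne(R)$ from Theorem~\ref{OC-MDP-1}. The reduction reuses essentially all the machinery of Section~\ref{S Data}. Fix a $\PSPACE$-complete language $L$; by $\AC^0$-serializability (Section~\ref{sec real}) there are an NFA $A=(S,\{0,1\},\delta,s_0,S_f)$, a polynomial $p$, and a logspace-uniform $\NC^1$-circuit family $(C_n)$ with $C_n:\{0,1\}^{n+p(n)}\to\{0,1\}$ such that $x\in L$ iff $C_n(x,0^{p(n)})\cdots C_n(x,1^{p(n)})\in L(A)$. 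Given $x$ of length $n$, set $m=p(n)$, hard-wire $x$ into $C_n$, compose with the CRR-to-binary circuit $B_m$ of Theorem~\ref{theorem hesse und co}, and unfold into a polynomial-size Boolean formula $F=F((x_{i,r}))$ with $F(\CRR_m(M))=C_n(x,\BIN_m(M))$ for all $0\le M<2^m$; let $G=\bigwedge_{i\in[m]}x_{i,r_i}$ with $r_i=2^m\bmod p_i$ be the ``$M\equiv 2^m$'' test formula. All of this is logspace-computable, exactly as in Steps~1--5 of the proof of Theorem~\ref{theo ctl data}.

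Next I would build an OC-MDP $\A$ whose counter value plays the role of the index $M$ and whose \emph{nondeterministic} control locations let Eve simulate a run of $A$ on the word $F(\CRR_m(0))\,F(\CRR_m(1))\cdots F(\CRR_m(2^m-1))$. The backbone is a loop over control locations indexed by $S$: from the ``stage'' location for automaton state $s$ at counter value $M$, Eve picks a transition $(s,b,t)\in\delta$, thereby \emph{claiming} that $F(\CRR_m(M))=b$; the play then enters a \emph{probabilistic} vertex that, with probability $1-\varepsilon_0$ (for a fixed $\varepsilon_0\in(0,1)$, say $\varepsilon_0=1/2$), moves to the stage location for $t$ with the counter incremented by one, and with probability $\varepsilon_0$ diverts the play into a copy of the OCN $\Omc(F\wedge\neg G)$ (if $b=1$) resp.\ $\Omc(\neg F\wedge\neg G)$ (if $b=0$) from Proposition~\ref{prop main}, appended with a decrement-to-zero loop into $R$, so that the diverted play reaches $R\times\{0\}$ iff the claim ``$F(\CRR_m(M))=b$ and $M\not\equiv 2^m$'' holds. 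In addition, from a stage location for a final state $s\in S_f$ Eve may instead enter a gadget $\Omc(G)$ (again followed by a decrement-to-zero loop into $R$) which succeeds iff the counter equals $2^m$; this is the only accepting route after all $2^m$ letters have been read without ever being challenged, and — as in Theorem~\ref{theo ctl data} — one arranges that no accepting configuration with counter $>2^m$ is reachable (the $\neg G$ conjunct makes every challenge at counter $2^m$ fail, so any attempt to run the backbone past $2^m$ already leaks positive probability). The target set is $R\times\{0\}$, the initial configuration is $(s_0,0)$, and missing outgoing zero/positive transitions are padded with harmless self-loops to satisfy the OC-MDP definition.

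The correctness claim is $x\in L\iff(s_0,0)\in\ValOne(R)$, and when $x\in L$ the value is even attained and equals $1$, so $(s_0,0)\in\OptValOne(R)$ as well. If $x\in L$, fix an accepting run $\rho$ of $A$ on $w=F(\CRR_m(0))\cdots F(\CRR_m(2^m-1))$; the pure strategy ``follow $\rho$, always claiming the true bit'' makes \emph{every} probabilistic branch reach $R\times\{0\}$: a branch challenged at counter $M<2^m$ makes a correct claim, so the diverted play has a path to $R\times\{0\}$ through the gadget — and since every vertex of the gadget is Eve's, ``has a path'' means ``reaches with probability $1$'' — while the single unchallenged branch reaches a final state at counter $2^m$ and empties into $R$ via $\Omc(G)$; hence the reachability value is $1$. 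Conversely, if $x\notin L$ then $w\notin L(A)$; for any pure strategy of Eve (pure strategies suffice for the supremum of a reachability objective), let $w'$ be the word she claims: a branch challenged at position $j$ succeeds iff $w'_j=w_j$, and the unchallenged branch succeeds only if her run on $w'$ is accepting. If $w'=w$, the unchallenged branch fails, leaking failure probability $(1-\varepsilon_0)^{2^m}>0$; if $w'\ne w$, say first differing at position $k$, the challenge at stage $k$ enters the wrong gadget and fails, leaking failure probability $\ge(1-\varepsilon_0)^{k-1}\varepsilon_0>0$. Either way the value is strictly below $1$, so $(s_0,0)\notin\ValOne(R)$.

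The circuit/serialization bookkeeping is routine — it is literally Steps~1--5 of Theorem~\ref{theo ctl data}. The main obstacle will be getting the probabilistic gadgets to behave \emph{exactly}: one must (i) reorganize the OCN gadgets of Proposition~\ref{prop main} so that when the claim is correct the diverted play reaches $R\times\{0\}$ with probability exactly $1$ (all branching resolved by Eve, then a clean decrement loop into $R$) and when it is false there is no path to $R\times\{0\}$ at all; (ii) make the ``$M\equiv 2^m$'' check via $G$ (and the $\neg G$ conjunct in the challenge gadgets) watertight, so that the unchallenged branch can accept only at exactly counter value $2^m$, matching the length of $w$, and no challenge at $M=2^m$ can succeed; and (iii) verify that pure strategies suffice for the supremum and that, because the backbone counter is strictly increasing, each index $M$ is visited at most once, so a committed false claim cannot be ``hidden'' by the counter dynamics. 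Finally, note that the whole construction exploits that $\ValOne(R)$ depends on whether the value equals $1$ \emph{exactly}: when $x\notin L$ the value is $1-(1-\varepsilon_0)^{2^m}$, astronomically close to $1$ yet still $<1$, which is precisely the distinction the $\ValOne$ problem is sensitive to.
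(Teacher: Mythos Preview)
Your high-level plan is exactly the paper's: rather than proving $\BH$-hardness directly, prove the stronger $\PSPACE$-hardness (Theorem~\ref{T ValOne}), reusing the serialization/CRR machinery of Section~\ref{S Data} and building an OC-MDP whose backbone lets Eve guess an NFA run while probabilistic ``challenges'' divert into formula-checking gadgets. The backbone, the probabilistic split with parameter $\varepsilon_0$ (the paper uses $\tfrac12$), the $G$-test for ``counter $=2^m$'', and the case analysis for $x\notin L$ are all in line with the paper's proof.

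There is, however, a real gap in your obstacle~(i). You want the diverted formula-gadget to be \emph{purely Eve's} with $0/1$ behaviour: ``when it is false there is no path to $R\times\{0\}$ at all''. This cannot be achieved by merely reorganizing the OCN of Proposition~\ref{prop main}. In that OCN the path $(\inp,M)\to(\out,M)$ always exists---the truth of $F$ is enforced only by the CTL side-condition $\varphi$, which you no longer have. More fundamentally, if every gadget vertex is nondeterministic, you must encode $\wedge$ by a \emph{sequential} traversal of both conjuncts; but a leaf test ``$M\bmod p_i=r$'' consumes the counter (it decrements to $0$), so the second conjunct receives the wrong value. Collapsing conjunctions of leaves via CRT works only after DNF conversion, which blows up exponentially for the unfolded $\NC^1$ formula $F$. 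The paper's fix is Lemma~\ref{L Prob}: make $\wedge$-nodes \emph{probabilistic} (branch to each conjunct with probability $\tfrac12$) and $\vee$-nodes nondeterministic. One then loses the $0/1$ dichotomy---when $F(\CRR_m(M))=0$ the best Eve can do is $\le 1-2^{-|F|}$---but this weaker bound is exactly what the correctness argument needs: in your ``$w'\neq w$ at position $k$'' case the leaked failure probability becomes at least $\varepsilon_0(1-\varepsilon_0)^{k-1}2^{-|F\wedge\neg G|}$, still a uniform positive constant independent of the strategy. (You should also cover the third case, where Eve reaches a final NFA state with counter $<2^m$ and enters the $G$-gadget prematurely; there $G$ evaluates to $0$ and the same $1-2^{-|G|}$ bound applies. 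The paper's proof splits the $x\notin L$ analysis into precisely these three cases.) With probabilistic $\wedge$ in the gadget, your outline becomes the paper's proof.
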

Currently, it is open whether the problem stated in Theorem~\ref{OC-MDP-2} is decidable;
the corresponding problem for 
MDPs defined by pushdown processes is undecidable \cite{EtYa05}.

From the proof of Theorem~\ref{OC-MDP-2} 
it can be seen that the authors prove actually hardness for
$\P^{\NP[\log]}$. Moreover, it is pointed out in  \cite{BraBroEteKucWoj09} that various difficulties arise
when trying to improve the latter lower bound.
In this section, we will improve the lower bound for membership in $\ValOne(R)$ 
to $\PSPACE$. From our proof one can easily see that we reprove
$\PSPACE$-hardness of $\OptValOne$ as a byproduct. But first, we need the
following lemma.

\begin{lemma} \label{L Prob}
The following problem can be solved by a logspace transducer:

\noindent
INPUT: A list of the first $m$ consecutive (encoded in unary) prime numbers and
a Boolean formula $F=F((x_{i,r})_{i \in [m], 0 \leq r < p_i})$.

\noindent
OUTPUT: An OC-MDP $\A=\A(F)$ with control locations $Q$, a set $R=R(F)\subseteq Q$, and
some control location $q_F\in Q$ such that for every number $0\leq
M<\prod_{i=1}^m p_i$ the following holds:
\begin{itemize}
\item If $F(\CRR_m(M))=1$, then there exists a strategy $\sigma$ such that
$\mathcal{P}(\Reach^\sigma_{R\times\{0\}}(q_F,M))=1$.
\item If $F(\CRR_m(M))=0$, then for every strategy $\sigma$ we have
$\mathcal{P}(\Reach^\sigma_{R\times\{0\}}(q_F,M))\leq 1-2^{-|F|}$.
\end{itemize}
\end{lemma}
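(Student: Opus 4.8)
The plan is to carry the depth-first formula traversal of Proposition~\ref{prop main} into an OC-MDP, using randomisation \emph{only} at the conjunctions of $F$: a true formula can then be forced to value $1$, while a single false literal along some branch is hit with probability at least $2^{-|F|}$ no matter how Eve plays. As in Proposition~\ref{prop main} I would first push all negations down to the variables (De Morgan), so that leaves are literals $x_{i,r}$ or $\neg x_{i,r}$; this does not increase the number of $\wedge$-connectives beyond the number of connectives of the original $F$, which is all that the estimate below uses.

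\textbf{The OC-MDP $\A(F)$.} Its states are: two absorbing states $\mathsf{win}$ and $\perp$, each with a $0$-self-loop in both $\delta_0$ and $\delta_{>0}$; a state $\inp(G)$ for every subformula $G$ of $F$; and, for every leaf, a private ``remainder gadget''. All transitions at the $\inp(G)$-states are put into both $\delta_0$ and $\delta_{>0}$ with identical effect, so the formula part is counter-independent. For $G=G_1\vee G_2$, $\inp(G)$ is \emph{nondeterministic} with transitions $(\inp(G),0,\inp(G_1))$ and $(\inp(G),0,\inp(G_2))$. For $G=G_1\wedge G_2$, $\inp(G)$ is \emph{probabilistic}, routing the play to $\inp(G_1)$ and $\inp(G_2)$ with probability $\tfrac12$ each. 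For a leaf $\ell\in\{x_{i,r},\neg x_{i,r}\}$, $\inp(\ell)$ has the single transition $(\inp(\ell),0,h^\ell_0)$ into its gadget, which is a cycle $h^\ell_0\xrightarrow{-1}h^\ell_1\xrightarrow{-1}\cdots\xrightarrow{-1}h^\ell_{p_i-1}\xrightarrow{-1}h^\ell_0$ of positive transitions where, in addition, $h^\ell_j$ carries a zero transition to $\mathsf{win}$ if the residue $j$ satisfies $\ell$ (i.e.\ $j=r$ for $\ell=x_{i,r}$, resp.\ $j\ne r$ for $\ell=\neg x_{i,r}$) and to $\perp$ otherwise. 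Every state now has both a zero and a positive transition, all $p_i$ are polynomial and given in unary, so $\A(F)$ has polynomial size and is logspace computable. Put $q_F=\inp(F)$ and $R=\{\mathsf{win}\}$.

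\textbf{Correctness.} Since the formula part is counter-independent, a play from $(q_F,M)$ keeps the counter at $M$ until it enters some leaf gadget; starting from $h^\ell_0$ with counter $M$ it then decrements by one at each step and hits counter $0$ exactly at $h^\ell_{M\bmod p_i}$, whence it moves deterministically to $\mathsf{win}$ if $\CRR_m(M)$ satisfies $\ell$ and to $\perp$ otherwise. Hence every play reaches exactly one leaf and $\mathcal{P}(\Reach^\sigma_{R\times\{0\}}(q_F,M))$ equals the probability that this leaf is true under $\CRR_m(M)$. If $F(\CRR_m(M))=1$, the strategy ``pick a true disjunct at every disjunction'' keeps the traversed subformula true along every play, so a true leaf is reached with probability $1$. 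If $F(\CRR_m(M))=0$, I would prove by induction on a false subformula $G$ that from $\inp(G)$ \emph{every} strategy reaches $\perp$ with probability $\ge 2^{-c(G)}$, where $c(G)$ is the number of $\wedge$-connectives in $G$: for a false leaf this probability is $1=2^0$; for a false $G_1\vee G_2$ both disjuncts are false, so any (possibly randomised) choice of Eve gives probability $\ge\min_j 2^{-c(G_j)}=2^{-\max_j c(G_j)}\ge 2^{-c(G)}$; for a false $G_1\wedge G_2$ some $G_j$ is false and nature routes to $\inp(G_j)$ with probability $\tfrac12$, giving $\ge\tfrac12\cdot 2^{-c(G_j)}=2^{-(c(G_j)+1)}\ge 2^{-c(G)}$. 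With $G=F$ this yields $\mathcal{P}(\Reach^\sigma_{R\times\{0\}}(q_F,M))\le 1-2^{-c(F)}\le 1-2^{-|F|}$, since $c(F)$ is at most the number of connectives of $F$, hence at most $|F|$.

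The routine parts are verifying the remainder gadget (a one-line induction on the number of completed cycles), the bookkeeping that every control location has a zero and a positive transition, and logspace computability. The only genuine design point is to place the $\tfrac12$-branching at conjunctions and nowhere else; this is exactly what forces value $1$ for a true formula (nature cannot escape a subtree all of whose leaves are true) while keeping a \emph{uniform} gap $2^{-|F|}$ for a false formula — obtaining both simultaneously, rather than only an $\varepsilon$-gap, being the content of the lemma and what the subsequent $\PSPACE$-hardness argument for $\ValOne$ and $\OptValOne$ will exploit.
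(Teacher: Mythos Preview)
Your proof is correct and follows essentially the same design as the paper's: nondeterministic branching at disjunctions, uniform $\tfrac12$-probabilistic branching at conjunctions, and a modular counting gadget at the leaves, with the gap bound proved by induction on the number of $\wedge$-connectives. The differences are purely cosmetic --- the paper eliminates negated literals by replacing $\neg x_{i,r}$ with $\bigvee_{k\neq r} x_{i,k}$ rather than handling them directly in the gadget, uses a shared remainder cycle $q(j,p_i)$ per prime with $R=\{q(0,p_i)\mid i\in[m]\}$ instead of a private cycle per leaf and a single $\mathsf{win}$ sink, and phrases the induction as an upper bound on the reach probability rather than a lower bound on hitting $\perp$ --- but the argument is the same.
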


\begin{proof}
As in the proof of Proposition~\ref{prop main} we can eliminate all
input gates labeled with a negated variable $\neg x_{i,r}$.
The OC-MDP $\A=\A(F)=(Q,\delta_0,\delta_{>0},f_0,f_{>0})$ will have for each 
subformula $G$ of $F$ a control
location $q_G$. If $G$ is of the form $G=G_1\vee G_2$, then
$q_G$ will be nondeterministic and both in $\delta_0$ and in $\delta_{>0}$
there is a transition from $q_G$ to both $q_{G_1}$ and
$q_{G_2}$ that does not change the counter value. 
If $G$ is of the form $G=G_1\wedge G_2$, then $q_G$ will be probabilistic
and both in $\delta_0$ and in $\delta_{>0}$ there will be a transition to
both $q_{G_1}$ and $q_{G_2}$ that does not
change the counter value and which  will be chosen with probability ${1 \over
2}$ each. 
Now assume that $G$ is a variable $x_{i,r}$. 
Recall that  $x_{i,r}$ is set to one if and only if $M \text{ mod } p_i=r$. We
introduce in $\A$ further (deterministically behaving) control locations 
$q(j,p_i)$ for $0 \leq j < p_i$ that allow to test if $M$ is
congruent $r$ modulo $p_i$ by allowing the following transitions in
$\delta_{>0}$ for each $0 \leq j < p_i$:
$$
(q(j,p_i),-1,q(j-1\text{ mod } p_i,p_i))
$$ 
Since each $q(j,p_i)$ has to have an outgoing transition both in $\delta_0$ and
$\delta_{>0}$, we add the transition $$(q(j,p_i),0,q(j,p_i))$$ to $\delta_0$
for each $0 \leq j < p_i$.
We put $q_{x_{i,r}}$ to be nondeterministic with a transition both in $\delta_0$ and in
$\delta_{>0}$ from $q_{x_{i,r}}$ to $q(r,p_i)$ that does not change the counter value.
Finally we put  $R=\{q(0,p_i)\mid i\in[m]\}$.

Assume first that $F(\CRR_m(M))=1$. We prove that there exists a strategy
$\sigma$ such that $$\mathcal{P}(\Reach^\sigma_{R\times\{0\}}(q_F,M))=1$$ 
in $\D(\A)$.  Note that the only  
nondeterministic states in
$\D(\A)$ that have more than one successor are states which correspond to a
disjunctive subformula $G=G_1\vee G_2$ of $F$. If
$G(\CRR_m(M))=1$, then there exists some $i\in\{1,2\}$ such that 
$G_i(\CRR_m(M))=1$. Our strategy $\sigma$ will choose $(q_G,M)$'s successor
$(q_{G_i},M)$ with probability $1$. If $G(\CRR_m(M))=0$, then the choice of
$\sigma$ is irrelevant and we let
$\sigma$ choose $(q_G,M)$'s successor uniformly distributed, say.
It is now easy to verify 
that $\mathcal{P}(\Reach^\sigma_{R\times\{0\}}(q_F,M))=1$.

On the other hand, assume that $F(\CRR_m(M))=0$ and consider an arbitrary
strategy $\sigma$. 
The question is how close can $\mathcal{P}(\Reach^\sigma_{R\times\{0\}}(q_F,M))$
reach $1$.  We prove by induction on the structure of the formula $F$ that
\begin{equation}  \label{case F=0}
\mathcal{P}(\Reach^\sigma_{R\times\{0\}}(q_F,M))\leq 1-2^{-k},
\end{equation}
where $k$ is the number of conjunctions that appear in $F$. 
If $F$ is a variable $x_{i,r}$, then 
$$
\mathcal{P}(\Reach^\sigma_{R\times\{0\}}(q_F,M)) = 0 = 1 -2^0.
$$
If $F = F_1 \vee F_2$ then $F_1(\CRR_m(M))=F_2(\CRR_m(M))=0$.
Assume that $\sigma$ assigns to the transition from $(q_F,M)$ to $(q_{F_i},M)$ 
the probability $x_i$, where $x_1+x_2 = 1$. With the induction hypothesis, we get
\begin{eqnarray*}
\mathcal{P}(\Reach^\sigma_{R\times\{0\}}(q_F,M))  & = & 
x_1 \cdot \mathcal{P}(\Reach^\sigma_{R\times\{0\}}(q_{F_1},M))+x_2 \cdot \mathcal{P}(\Reach^\sigma_{R\times\{0\}}(q_{F_2},M)) \\
& \leq &
x_1 (1-2^{-k_1} ) + x_2 (1-2^{-k_2} ), 
\end{eqnarray*}
where $k_i$ the number of conjunctions that appear in $F_i$. Since $k_i \leq k$, we get (\ref{case F=0}).
Finally, assume that $F = F_1 \wedge F_2$ and let $k_i$ be the number of conjunctions that appear in $F_i$.
Hence, $k_i \leq k-1$. If $F_1(\CRR_m(M))=F_2(\CRR_m(M))=0$ then we get 
$\mathcal{P}(\Reach^\sigma_{R\times\{0\}}(q_F,M)) \leq 1-2^{-k+1} \leq 1-2^{-k}$. 
On the other hand, if e.g. $F_1(\CRR_m(M))=0$ but $F_2(\CRR_m(M))=1$ (the other case
is symmetric), then we get
\begin{eqnarray*}
\mathcal{P}(\Reach^\sigma_{R\times\{0\}}(q_F,M)) & = & \frac{1}{2}  \cdot \mathcal{P}(\Reach^\sigma_{R\times\{0\}}(q_{F_1},M))+ 
\frac{1}{2} \cdot \mathcal{P}(\Reach^\sigma_{R\times\{0\}}(q_{F_2},M)) \\
&\leq  & \frac{1}{2}  \cdot (1-2^{-k+1}) + \frac{1}{2} = 1-2^{-k} .
\end{eqnarray*}
This concludes the proof of (\ref{case F=0}). Since $k\leq |F|$ we
obtain
$\mathcal{P}(\Reach^\sigma_{R\times\{0\}}(q_F,M))\leq 1-2^{-|F|}$.
This concludes the proof of Lemma~\ref{L Prob}.
\qed
\end{proof}

\begin{theorem}{\label{T ValOne}}
The following problem is $\PSPACE$-hard:

\noindent
INPUT: An OC-MDP $\A=(Q,\delta_0,\delta_{>0},f_0,f_{>0})$, $R\subseteq Q$, and $q\in Q$.

\noindent
QUESTION: $(q,0)\in\ValOne(R)$?
\end{theorem}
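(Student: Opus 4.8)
The plan is to reuse the machinery behind Theorem~\ref{theo ctl data} (serializability of $\PSPACE$ together with the efficient conversion from Chinese remainder to binary representation) and feed it into the probabilistic gadget of Lemma~\ref{L Prob}. Fix a $\PSPACE$-complete language $L$ and an input $x\in\{0,1\}^n$. Exactly as in Steps~1--4 of the proof of Theorem~\ref{theo ctl data} I would compute in logarithmic space the first $m=p(n)$ primes $p_1,\dots,p_m$, an NFA $A=(S,\{0,1\},\delta,s_0,S_f)$ coming from the ($\NC^1$-)serialization of $\PSPACE$, and a polynomial-size Boolean formula $F=F((x_{i,r})_{i\in[m],\,0\le r<p_i})$ over the Chinese remainder variables such that $F(\CRR_m(M))$ is the $M$-th symbol of the word $w=C_n(x,0^{p(n)})\cdots C_n(x,1^{p(n)})$ and $x\in L$ iff $w\in L(A)$. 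I would also form $G=\bigwedge_{i\in[m]}x_{i,r_i}$ with $r_i=2^m\bmod p_i$, so that $G(\CRR_m(M))=1$ iff $M=2^m$ for all $0\le M<\prod_i p_i$; for $b\in\{0,1\}$ let $F_b$ denote $F$ if $b=1$ and $\neg F$ if $b=0$, so that $F_b(\CRR_m(M))\wedge\neg G(\CRR_m(M))=1$ holds iff the $M$-th symbol of $w$ equals $b$ and $M\neq 2^m$.

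Next I would build an OC-MDP $\A$ in which Eve guesses an accepting run of $A$ on $w$, the counter holding the current position. Its control locations include $S$, all of which are nondeterministic; disjointly I add, for every transition $\tau=(s,b,t)\in\delta$, a copy $\A(\tau)$ of the OC-MDP $\A(F_b\wedge\neg G)$ from Lemma~\ref{L Prob}, and one copy of $\A(G)$, and I let the target set $R$ be the union of the target sets of all these copies. From a state $s\in S$ Eve may, for any $\tau=(s,b,t)\in\delta$, move with the counter unchanged to a fresh \emph{probabilistic} location $f_\tau$, from which with probability $\tfrac12$ the counter is incremented and control passes to $t$ (\emph{continue}) and with probability $\tfrac12$ control passes, counter unchanged, to the initial location of $\A(\tau)$ (\emph{audit}); and if in addition $s\in S_f$, Eve may instead move (counter unchanged) to the initial location of $\A(G)$ (\emph{finish}). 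The initial configuration is $(s_0,0)$, and routine self-loops ensure every control location has an outgoing zero- and positive transition. The whole construction is a logspace reduction.

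For correctness I would use Lemma~\ref{L Prob}: from $(q_{F'},M)$ Eve can force reaching $R\times\{0\}$ with probability $1$ if $F'(\CRR_m(M))=1$, and with probability at most $1-2^{-|F'|}$ otherwise. If $x\in L$, Eve follows a fixed accepting run of $A$ on $w$: at position $M<2^m$ she plays that run's transition $(s_M,b_M,t)$, so whenever \emph{audit} is taken the audited formula $F_{b_M}\wedge\neg G$ is true at $M$ and Eve wins with probability $1$; and if \emph{continue} is taken $2^m$ times in a row she reaches $(s_{2^m},2^m)$ with $s_{2^m}\in S_f$ and \emph{finishes}, where $G$ is true at $2^m$; either way the value from $(s_0,0)$ is $1$, so $(s_0,0)\in\OptValOne(R)$, which already reproves $\PSPACE$-hardness of $\OptValOne$. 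Conversely, if $x\notin L$ there is no accepting run of $A$ on $w$, and a short case analysis over an arbitrary strategy exhibits a strictly positive probability of never reaching $R\times\{0\}$: the strategy must either guess a symbol $b\neq b_M$ at some position $M<2^m$, or \emph{finish} at a position not $\equiv 2^m\pmod{\prod_i p_i}$, or (if it does neither) be driven by the random \emph{continue}/\emph{audit} coins to an \emph{audit} at position exactly $2^m$; each of these audits a formula that is \emph{false} at the relevant counter value, hence fails with probability at least $2^{-|F_b\wedge\neg G|}=2^{-\mathrm{poly}(n)}$, and the relevant configuration is reached with probability at least $2^{-2^{p(n)}}$ (it takes at most $2^m$ \emph{continue}-coins). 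So the value is at most $1-2^{-2^{p(n)}-\mathrm{poly}(n)}<1$ and $(s_0,0)\notin\ValOne(R)$.

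The step I expect to be the main obstacle is exactly this combination: the gadget of Lemma~\ref{L Prob} \emph{consumes} the counter while testing a residue, so one cannot ``audit and then resume the simulation'' the way a $\CTL$ path through $\Omc(F\wedge\neg G)$ returns to its starting counter value in the proof of Theorem~\ref{theo ctl data}. I resolve this by making every audit \emph{terminal} (a successful audit already lands in $R\times\{0\}$) and by routing the never-audited branch through the \emph{finish} gadget; one then has to verify carefully that honest play still reaches value exactly $1$ while every cheating strategy retains a nonzero failure probability. A secondary, harmless point is that the probability gap produced this way is only doubly exponentially small in $n$; this does not matter, since membership in $\ValOne(R)$ asks only whether the supremum of the reachability probabilities equals $1$, not whether it exceeds some fixed threshold.
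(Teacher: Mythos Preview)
Your proposal is correct and follows essentially the same approach as the paper's proof: reuse Steps~1--4 of Theorem~\ref{theo ctl data} to obtain the formula $F$ and the NFA $A$, invoke Lemma~\ref{L Prob} for the ``audit'' gadgets $\A(F_b\wedge\neg G)$ and the ``finish'' gadget $\A(G)$, and wire them to the NFA via a probabilistic $\tfrac12/\tfrac12$ split between continuing the simulation (incrementing the counter) and auditing. The correctness analysis for $x\in L$ and the failure lower bound for $x\notin L$ match the paper's Cases (A)--(C); your ``terminal audit'' observation is exactly how the paper resolves the counter-consumption issue.

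The only differences are cosmetic. The paper first normalises $A$ so that final states have no outgoing transitions and non-final states always have one, which lets it avoid self-loops and makes the case split literally exhaustive over runs of $A$; you instead keep $A$ as is, make \emph{finish} an optional action at final states, and patch deadlocks with self-loops. Both choices work, but note that your three-way case split should, strictly speaking, also mention the possibility that Eve reaches a non-final sink and self-loops (failure probability $1$) and that case~3 is justified because any strategy avoiding cases~1 and~2 must play a transition at counter value $2^m$, where the conjunct $\neg G$ kills the audit regardless of the guessed bit. With those remarks made explicit, the summation over Eve's (possibly randomised) choices of transition prefixes yields the uniform gap $\ge 2^{-(2^m+1+|\neg F\wedge\neg G|)}$ exactly as in the paper.
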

\begin{proof}
Let $L \subseteq \{0,1\}^*$ be an arbitrary $\PSPACE$-complete language, let $x\in\{0,1\}^*$ be a
word of length $n$.  
We repeat steps 1 to 4  of the proof of Theorem~\ref{theo ctl data}.
This means, we compute in logspace a Boolean formula
$F=F((x_{i,r})_{i \in [m], 0 \leq r < p_i})$ of polynomial
size in $n$ such that for some fixed NFA $A =(S,\{0,1\},\delta,s_0,S_f)$
we have
$$
x\in L\quad\Longleftrightarrow\quad \prod_{M=0}^{2^m-1}F(\CRR_m(M))\in L(A).
$$
By doubling, if necessary, the set of final states of $A$ we can
assume that states from $S_f$ do not have outgoing
transitions but every state from $S \setminus S_f$ has at least
one outgoing transition. This assumption
will slightly simplify our construction. 

Let $G=\bigwedge_{i\in[m]} x_{i,r_i}$ with $r_i=2^m \text{ mod } p_i$ for
each $i\in[m]$ be the Boolean formula that tests if $M$ equals $2^m$.
We will build an OC-MDP $\A=(Q,\delta_0,\delta_{>0},f_0,f_{>0})$ with $S \subseteq Q$ and a target set of
control locations $R\subseteq Q$ such that 
$$
\prod_{M=0}^{2^m-1}F(\CRR_m(M))\in L(A)\ 
\quad\Longleftrightarrow\quad \Reach_{R\times\{0\}}(s_0,0)=1.
$$
Moreover, our reduction will have the additional property that 
$$\Reach_{R\times\{0\}}(s_0, 0)=1\quad\Longleftrightarrow\quad\exists \sigma:
\mathcal{P}(\Reach^\sigma_{R\times\{0\}}(s_0, 0))=1.
$$
Hence, we prove $\PSPACE$-hardness of $\OptValOne$ as a byproduct.
The control locations in $S \setminus S_f$ are nondeterministic in $\A$ ($\A$ will
hence behave nondeterministically in control locations from $S \setminus S_f$).
The NFA
$A$ on input $F(\CRR_m(0)) \cdots F(\CRR_m(2^m-1))$ will be simulated by $\A$ from state
$(s_0, 0)$ by consecutively incrementing the counter and checking if for the
current  counter value $M$ and for the current (to be simulated) $b$-labeled transition of $A$
we have $F(\CRR_m(M))=b$. This simulation will be done
until a state $(s,2^m)$ with $s \in S_f$ is reached.
Recall that by Lemma~\ref{L Prob} we can compute  
OC-MDPs $\A(F \wedge \neg G)$,  $\A(\neg F \wedge \neg G)$, and $\A(G)$
together with sets of control locations $R(F \wedge \neg G)$, $R(\neg F \wedge \neg G)$,
and $R(G)$, and  control locations
$q_{F \wedge \neg G}$, $q_{\neg F \wedge \neg G}$,
and $q_G$ such that, e.g., $\A(F \wedge \neg G)$ satisfies for each $0\leq M<\prod_{i=1}^m p_i$:
\begin{eqnarray*}
F(\CRR_m(M))=1 \wedge M \neq 2^m  & \Rightarrow &  \exists \text{ strategy } \sigma :
\mathcal{P}(\Reach^\sigma_{R(F \wedge \neg G)\times\{0\}}(q_{F \wedge \neg G},M))=1 \\
F(\CRR_m(M))=0 \vee M = 2^m  & \Rightarrow  & \forall \text{ strategies } 
\sigma : \mathcal{P}(\Reach^\sigma_{R(F \wedge \neg G)\times\{0\}}(q_{F \wedge \neg G},M)) \leq 1-2^{-|F \wedge \neg G|}
\end{eqnarray*}
The OC-MDPs $\A(\neg F \wedge \neg G)$ and $\A(G)$ have analogous properties.

In the following diagrams we draw transitions that do {\em not} modify the counter value in normal
width and we draw transitions that increase the counter value by one in thicker
width. 
We realize each NFA-transition $(s,1,t)\in\delta$ with $s \not\in S_f$ both in $\delta_0$
and in $\delta_{>0}$ by
\begin{center}
\setlength{\unitlength}{0.05cm}
\begin{picture}(60,35)(0,0)
 \gasset{Nframe=n,loopdiam=9,ELdist=1}
 \gasset{Nadjust=wh,Nadjustdist=1}
 \gasset{curvedepth=0}
 \node(s)(0,30){$s$}
 \node(b)(30,30){$(s,1,t)$}
 \node(t)(60,30){$t$}
 \node(F)(30,0){$q_{F\wedge\neg G}$}
 \drawedge(s,b){}
 \drawedge[linewidth=.9,ELside=r](b,t){${1\over 2}$}
 \drawedge(b,F){${1\over 2}$}
\end{picture}
\end{center}
whereas each transition $(s,0,t)\in\delta$ with $s \not\in S_f$ is realized in $\A$ by
\begin{center}
\setlength{\unitlength}{0.05cm}
\begin{picture}(60,35)(0,0)
 \gasset{Nframe=n,loopdiam=9,ELdist=1}
 \gasset{Nadjust=wh,Nadjustdist=1}
 \gasset{curvedepth=0}
 \node(s)(0,30){$s$}
 \node(b)(30,30){$(s,0,t)$}
 \node(t)(60,30){$t$}
 \node(F)(30,0){$q_{\neg F\wedge\neg G}$}
 \drawedge(s,b){}
 \drawedge[linewidth=.9,ELside=r](b,t){${1\over 2}$}
 \drawedge(b,F){${1\over 2}$}
\end{picture}
\end{center}
i.e. we connect the intermediate control location $(s,b,t) \in \delta$ to  
$\A(F\wedge\neg G)$ (if $b=1$) or $\A(\neg F\wedge\neg G)$ (if $b=0$)
for checking if $F(\CRR_m(M))=b$ and $M<2^m$ for the
current counter value $M$. Moreover, for all final states $s \in S_f$ we add
a transition $s \xrightarrow{1} q_G$ to both $\delta_0$ and
$\delta_{>0}$ that does not change the counter value.
As expected, we put $R=R(F\wedge \neg G)\cup R(\neg F\wedge\neg G)\cup R(G)$.
Let $\D = \D(\A)$ in the following. Note that since every non-final state
has at least one outgoing transition in $A$,
$\D$ is indeed an MDP, i.e., the underlying graph is deadlock-free.

Now assume that $x\in L$. 
We show that there exists a strategy $\sigma$ such that 
$\mathcal{P}(\Reach^\sigma_{R\times\{0\}}(s_0, 0))=1$. 
Since $x\in L$, we have $\prod_{M=0}^{2^m-1}F(\CRR_m(M))\in L(A)$
along with some accepting run
$$
s_0\xrightarrow{b_0}s_1\xrightarrow{b_1} \quad \cdots\quad
s_{2^m-1}\xrightarrow{b_{2^m-1}}s_{2^m} \in S_f,
$$
where $s_M \not\in S_f$ and
$b_M=F(\CRR_m(M))$ for all $M\in[0,2^m-1]$.
For each $M\in[0,2^m-1]$ our strategy $\sigma$ will assign to $(s_M, M)$'s successor
$( (s_M,b_M,s_{M+1}), M)$ probability $1$. Moreover, by Lemma~\ref{L Prob} we can choose the 
strategy $\sigma$ such that:
\begin{eqnarray*}
b_M=1&\quad \Longrightarrow\quad&
\mathcal{P}(\Reach^\sigma_{R\times\{0\}}(q_{F\wedge\neg G},M))=1\\
b_M=0&\quad \Longrightarrow\quad&
\mathcal{P}(\Reach^\sigma_{R\times\{0\}}(q_{\neg F\wedge\neg G},M))=1
\end{eqnarray*}
for each $0\leq M <2^m$ and  
$\mathcal{P}(\Reach^\sigma_{R\times\{0\}}(q_G, 2^m))=1$.
It follows
$$\mathcal{P}(\Reach^\sigma_{R\times\{0\}}(s_0, 0))=1.$$
Conversely, assume now that $x\not\in L$. 
Our goal is to prove a global non-zero lower bound on the probability of runs in $\D(\sigma)$ that 
begin in $(s_0, 0)$ and that do {\em not} reach $R\times\{0\}$, where 
$\sigma$ is an arbitrary strategy.
For this, let us first fix an arbitrary strategy $\sigma$ in $\D$.
We distinguish the following three types $(A)$, $(B)$ and $(C)$ of finite paths $\pi$ in
the Markov chain $\D(\sigma)$:

\medskip

\noindent
{\em Case (A)}:
$\pi$ is of the form
\begin{multline*}
(s_0, 0)\xrightarrow{\alpha_0} ( (s_0,c_0,s_1), 0)\xrightarrow{\frac{1}{2}} (s_1, 1) \xrightarrow{\alpha_1} ( (s_1,c_1,s_2), 1) \cdots  \\
((s_{M-1}, c_{M-1}, s_{M}), M-1)\xrightarrow{\frac{1}{2}}
(s_M,M)  \xrightarrow{\alpha_M}  ((s_M, c_M, s_{M+1}), M),
\end{multline*}
where $M < 2^m$,  
$c_M \neq F(\CRR_m(M))$, and $c_N = F(\CRR_m(N))$ for all
$N \in [0,M-1]$.  The $\alpha_N$ are probabilities that result from the strategy $\sigma$.
Let $\alpha = \prod_{N \in [0,M]} \alpha_N$.
The probability for the set of all runs from 
$(s_0, 0)$ that  (i)  start with $\pi$, 
then (ii)  proceed to $(q_{F\wedge\neg G},M)$ (if $c_M =1$) 
or to $(q_{\neg F\wedge\neg G},M)$ (if $c_M =0$), and (iii)  do {\em not} visit $R\times\{0\}$   
is at least 
\begin{eqnarray*} 
\label{Case A}
\alpha \cdot   2^{-M+1} \cdot 2^{-|\neg F\wedge \neg G|}  \geq
\alpha \cdot   2^{-(2^m+|\neg F\wedge \neg G|)} .
\end{eqnarray*}
{\em Case (B)}:
$\pi$ is of the form
\begin{multline*}
(s_0, 0)\xrightarrow{\beta_0} ( (s_0,c_0,s_1), 0)\xrightarrow{\frac{1}{2}} (s_1, 1) \xrightarrow{\beta_1} ( (s_1,c_1,s_2), 1) \xrightarrow{\frac{1}{2}}  (s_2,2) \cdots  \\
(s_{M-1}, M-1) \xrightarrow{\beta_{M-1}}
((s_{M-1}, c_{M-1}, s_{M}), M-1) \xrightarrow{\frac{1}{2}} (s_M,M) \xrightarrow{1} (q_G,1),  
\end{multline*}
where $M < 2^m$,  $s_M \in S_f$,  
and $c_N = F(\CRR_m(N))$ for all
$N \in [0,M-1]$.  Let $\beta = \prod_{N \in [0,M-1]} \beta_N$.
The probability for the set of all runs from 
$(s_0, 0)$ that  (i)  start with $\pi$ and (ii)  do {\em not} visit $R\times\{0\}$   
is at least 
\begin{eqnarray*} 
\label{Case B}
\beta  \cdot 2^{-M} \cdot 2^{-|G|}  \geq \beta  \cdot 2^{-(2^m+|\neg F\wedge \neg G|)} .
\end{eqnarray*}
{\em Case (C)}:
$\pi$ is of the form
\begin{multline*}
(s_0, 0)\xrightarrow{\gamma_0} ( (s_0,c_0,s_1), 0)\xrightarrow{\frac{1}{2}} (s_1, 1) \xrightarrow{\gamma_1} ( (s_1,c_1,s_2), 1)  \cdots  \\
(s_{2^m-1},2^m-1) \xrightarrow{\gamma_{2^m-1}}  ((s_{2^m-1}, c_{2^m-1}, s_{2^m}), 2^m-1)\xrightarrow{\frac{1}{2}}
(s_{2^m},2^m), 
\end{multline*}
where $s_{2^m} \not\in S_f$ and $c_N = F(\CRR_m(N))$ for all
$N \in [0,2^m-1]$. Let $\gamma = \prod_{N \in [0,2^m-1]} \gamma_N$.  
The probability of the set of runs in $\D(\sigma)$ that (i) begin with $\pi$, 
then (ii)  proceed (via an intermediate control location of the form $(s_{2^m},b,t)$)
to either $(q_{F\wedge \neg G},2^m)$ or $(q_{\neg F\wedge\neg G},2^m)$ 
and (iii) that do {\em not} reach $R\times\{0\}$ is at least
\begin{eqnarray*}
\label{Case C}
\gamma  \cdot 2^{-(2^m+1)} \cdot 2^{-|\neg F\wedge \neg G|}  = \gamma  \cdot 2^{-(2^m+1+|\neg F\wedge \neg G|)}.
\end{eqnarray*}
Now, the crucial point is that the sum of all values $\alpha$ from (A), all values $\beta$ from (B), and all values
$\gamma$ from (C) is $1$. To see this, note that the nondeterministic choices in $\D$ correspond exactly to 
the selection of transitions in the NFA $A$. But, since $x \not\in L$, every sequence of consecutive transitions in $A$ either (i) 
reads in the $(M+1)^{\text{th}}$ step (for some $0 \leq M \leq 2^m-1$) a symbol different from $F(\CRR_m(M))$
(Case (A)) or (ii) reaches a final state after less than $2^m$ steps (Case (B)), or (iii) make at least
$2^m$ steps and is not in a final state after exactly $2^m$ steps (Case (C)).
Since moreover the set of paths in (A), (B), and (C) are pairwise disjoint,
it follows that the probability of the set of runs that do {\em not} reach $R\times\{0\}$ is at least
$2^{-(2^m+1+|\neg F\wedge \neg G|)}$.
This concludes the proof of the theorem.
\qed
\end{proof}

\bibliographystyle{abbrv}

\def\cprime{$'$}

\section*{Appendix}

Let $M$ be a nondeterministic Turing machine
with a linear ordering on the set of all transition tuples.
Assume furthermore that $M$ does not contain infinite
computation paths. Then, for every input 
$x$, the computation tree $T(x)$ of the machine $M$
on input $x$ is a finite ordered tree. Let
$v_1, v_2, \ldots, v_n$ be a list of all leafs of $T(x)$ in left-to-right
enumeration.  Then the {\em leaf string} 
$\leaf(M,x)$ is the string $a_1 a_2 \cdots a_n$, where
$a_i = 1$ (resp. $a_i = 0$) if $v_i$ is an accepting 
(resp. rejecting) configuration.

\begin{theorem}
Let $A$ be a language in $\PSPACE$. Then
$A$ is $\AC^0$-serializable, i.e.,
there exists a regular language 
$L \subseteq \{0,1\}^*$, a polynomial $p(n)$, and a logspace-uniform
$\AC^0$-circuit family $(B_n)_{n \geq 0}$, where
$B_n$ has exactly $n+p(n)$ many inputs and one output, such that for every 
$x \in \{0,1\}^n$ we have:
$$
x \in A \ \Longleftrightarrow \ B_n(x,0^{p(n)}) \cdots B_n(x,1^{p(n)}) \in L,
$$
where ``$\cdots$'' refers to the lexicographic order on $\{0,1\}^{p(n)}$.
\end{theorem}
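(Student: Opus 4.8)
The plan is to obtain the statement from the known $\AC^0$-serializability of $\PSPACE$ due to Hertrampf, Lautemann, Schwentick, Vollmer and Wagner~\cite{HLSVW93} (building on the $\mathsf{P}$-serializability of Cai and Furst~\cite{CaFu91}; see also~\cite{Vol98}), making two adjustments: recasting the ``constant number of carry bits handed from one stage to the next'' as the state of a fixed finite automaton, and verifying that the $\AC^0$-circuit family produced by that proof is logspace-uniform. First I would recall their result in a concrete form: for $A\in\PSPACE$ there are a polynomial $q$, a constant $c$, and an $\AC^0$-circuit family $(D_n)_{n\ge 0}$, where $D_n$ has $n+q(n)+c$ inputs and $c$ outputs, such that for every $x\in\{0,1\}^n$, running through the $2^{q(n)}$ stages $z\in\{0,1\}^{q(n)}$ in lexicographic order — where stage $z$ receives the $c$-bit output of stage $z-1$ (the lexicographic predecessor; the all-zero tuple for the first stage) and produces $D_n(x,z,\mathord{-})$ as the $c$-bit input of stage $z+1$ — the first bit of the tuple produced by the last stage equals $1$ iff $x\in A$. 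Intuitively $D_n$ runs a ``recompute-don't-store'' traversal of the Savitch reachability recursion on the $2^{\mathrm{poly}(n)}$-vertex configuration graph of a $\PSPACE$-machine for $A$: the index $z$ records the current position in that traversal, from which the local tree structure and the relevant pair of configurations are reconstructed by $\AC^0$ means (multiplexers, equality tests, the local transition predicate of the machine), and the $c$ carry bits thread the partial Boolean value.

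The first adjustment replaces the $c$-bit carry by a finite automaton and the $c$-output circuit by a one-output one. For a fixed index $z$ the map ``incoming carry $\mapsto$ outgoing carry'' computed by $D_n(x,z,\mathord{-})$ is one of the $(2^c)^{2^c}$ functions $\{0,1\}^c\to\{0,1\}^c$; fix an $\AC^0$-circuit that, from $x$ and $z$ alone, writes down the length-$c2^c$ truth table $\tau_z$ of this function (a composition of $2^c$ copies of $D_n$, one per possible incoming carry, plus fixed wiring). Let $r=\lceil\log_2(c2^c)\rceil$, set $p(n)=q(n)+r$, and let $B_n(x,\,zi)$, for $z\in\{0,1\}^{q(n)}$ and $i\in\{0,1\}^r$, output the $i$-th bit of $\tau_z$ (padded with zeros to length $2^r$). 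Since lexicographic order on $\{0,1\}^{p(n)}$ refines the $(z,i)$-splitting as ``$z$ first, then $i$'', the string $B_n(x,0^{p(n)})\cdots B_n(x,1^{p(n)})$ is the concatenation, in order of $z$, of the constant-length blocks $\tau_z$. Build a DFA $A'$ over $\{0,1\}$ whose state is a pair (position in the current block modulo $2^r$, current $c$-bit carry): reading a full block it reconstructs $\tau_z$, applies it to the carry, and continues; its accepting states are those with block-position $0$ and first carry bit $1$. Taking $L=L(A')$ and noting that on strings of the above form $A'$ simulates exactly the staged computation, we get $x\in A\iff B_n(x,0^{p(n)})\cdots B_n(x,1^{p(n)})\in L$, which is the claim.

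The second adjustment, the uniformity check, is the real content. The circuit $D_n$ is built generically from the fixed description of a $\PSPACE$-machine for $A$ and from $n$: its gates are indexed by bounded tuples of $O(\log n)$-bit numbers, and the wiring/labeling relation is first-order definable from the machine, hence computable by a logspace transducer on input $1^n$; every auxiliary gadget it uses (multiplexers, equality tests, bounded arithmetic on indices, the local transition predicate) has a standard logspace-uniform — indeed $\mathrm{DLOGTIME}$-uniform — $\AC^0$ implementation. The extra layer added in the first adjustment is again a logspace-describable composition of polynomially many copies of $D_n$. As logspace-uniformity is preserved by such compositions, $(B_n)$ is logspace-uniform.

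The main obstacle is precisely this last step: one must actually walk through the construction of~\cite{HLSVW93,Vol98} and confirm that each ingredient — in particular the part that makes a constant-width sequential device suffice — is presented in a logspace-uniform (not merely non-uniform, or $\mathsf{P}$-uniform) way. This is routine but must be done with care, since Theorem~\ref{theorem hesse und co}, and through it Theorem~\ref{theo ctl data}, depends on exactly this strengthening. Everything else — the finite-automaton reformulation, the constant-size block bookkeeping, and the check that lexicographic order on $\{0,1\}^{p(n)}$ produces the blocks in the right order — is mechanical.
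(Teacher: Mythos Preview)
Your proof is correct but follows a genuinely different route from the paper's. The paper starts from the \emph{leaf-string} characterization of $\PSPACE$ (also from \cite{HLSVW93}): there is a polynomial-time nondeterministic machine $M$ and a regular language $K$ with $x\in A$ iff $\leaf(M,x)\in K$. It then builds, explicitly and from scratch, $\AC^0$ circuits $C_n$ and $D_n$ that on input $(x,w)$ test whether $w$ encodes a valid (respectively, valid and accepting) computation of $M$ on $x$; enumerating all candidate $w$ in lexicographic order and outputting the pair $(C_n(x,w),D_n(x,w))$ yields $00$ for non-computations and $1b$ for computations with leaf bit $b$, so the regular language $L=\varphi(K\mathbin{\|}\{a\}^*)$ with $\varphi(a)=00$, $\varphi(0)=10$, $\varphi(1)=11$ filters out the padding and recognizes $K$ on the recovered leaf string. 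Your approach instead black-boxes the carry-bit formulation of serializability and performs a format conversion: each stage emits the constant-size truth table of its carry-update map, and a fixed DFA consumes these tables while threading the carry. The paper's route is more self-contained and makes the logspace-uniformity claim concrete (the ``is $w$ a valid computation of $M$ on $x$'' test is visibly logspace-constructible), whereas yours is more modular but, as you rightly flag, shifts the burden to checking that the cited carry-bit construction is itself logspace-uniform. One minor imprecision: your DFA state must also record the (constantly many) relevant bits read so far in the current block---not just the block position and the carry---in order to apply the truth table at the block boundary; this does not affect finiteness.
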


\begin{proof}
Let $A \subseteq \{0,1\}^*$ be a language in $\PSPACE$. By the work of \cite{HLSVW93} there exists
a  nondeterministic polynomial time Turing machine 
$$
M = (Q, \Gamma, \Delta, q_0,q_f, \Box)
$$ 
and a regular language 
$K \subseteq \{0,1\}^*$ such that 
\begin{equation} \label{leaf equality}
x \in A  \quad \Longleftrightarrow \quad \leaf(M,x) \in K.
\end{equation}
Here, $Q$ is the set of states, $\Gamma$ is the tape alphabet,
$\Delta \subseteq Q \times \Gamma \times Q \times \Gamma \cup \{L,R \}$
is the set of transition tuples, $q_0$ is the initial state,  $q_f$ is the final (accepting) state, and
$\Box$ is the blank symbol. W.l.o.g. we can 
assume that every computation path of $M$ on an input
of length $n$ has length $q(n)$ for a polynomial $q$.
This can be enforced by introducing a counter.
Note that the counter can be incremented deterministically, hence
the produced leaf string does not change. 
Assume that $\Delta = \{ \delta_1, \ldots, \delta_m \}$, where
$\delta_1 < \delta_1 < \cdots < \delta_m$ is the fixed order
on the transition tuples of $M$.  

Let $\Omega = Q \cup \Gamma \cup \Delta$, where all three sets are assumed to 
be pairwise disjoint. We will encode a computation 
of $M$ of length $q(n)$, starting on input $x \in \Sigma^n$, by a word from
the language
\begin{eqnarray*}
C(x) & = & \{ c_0 t_1 c_1 t_2 \cdots c_{q(n)-1} t_{q(n)} c_{q(n)} \mid t_1, \ldots, t_{q(n)} \in \Delta \\
& &\qquad  c_0 = q_0 x \Box^{q(n)-n}, \
             c_1, \ldots, c_{q(n)} \in \Gamma^* Q \Gamma^+, \\
& &\qquad  |c_1| = \cdots = |c_{q(n)}| = q(n)+1, \ \forall 0 \leq i <  q(n) : c_i \vdash_{t_{i+1}} c_{i+1} \} .
\end{eqnarray*}
Here, $c_i \vdash_{t_{i+1}} c_{i+1}$ means that configuration $c_{i+1}$ results
from configuration $c_i$ by applying transition $t_{i+1}$.
Let $D(x)$ be the subset of $C(x)$ consisting
of all successful computations 
$c_0 t_1 c_1 t_2 \cdots c_{q(n)-1} t_{q(n)} c_{q(n)} \in C(x)$,
where in addition  $c_{q(n)} \in \Gamma^* q_f \Gamma^+$.

Note that every word in $C(x)$ has length $(q(n)+1)^2 + q(n)$.
We use some block encoding $\gamma : \Omega \to \{0,1\}^k$ such that 
$\gamma(\delta_{i+1})$ is lexicographically larger than  $\gamma(\delta_i)$
for $i \in [m-1]$. 
This ensures that if we list all 
bit strings of length $k \cdot ((q(n)+1)^2 + q(n))$ in lexicographic 
order than the subset $C(x)$ of all (encodings of) valid computations appears as a subsequence in the 
same order as in the computation tree $T(x)$.

Let us next describe a logspace-uniform $\AC^0$-circuit family $(C_n)_{n \geq 0}$, where the $n$-th circuit
$C_n$ has $n + k \cdot ((q(n)+1)^2 + q(n))$ many inputs and accepts exactly all strings
of the form $x w$, where $x \in \{0,1\}^n$ and $w \in C(x)$.  
Constructing $C_n$ is tedious but straightforward. 
The most difficult part is to check $c_i \vdash_{t_{i+1}} c_{i+1}$ for all
$0 \leq i <  q(n)$.  For this, we use an AND-gate $g$ with $q(n)$ many children
$g_0, \ldots, g_{q(n)-1}$. Gate $g_i$ is an OR-gate with $q(n)$ many children
$g_{i,1}, \ldots, g_{i,q(n)}$. Gate $g_{i,j}$ evaluates to $1$ if and only if 
$c_{i+1}$ results from $c_i$ by applying the transition $t_{i+1}$ at position
$j$.  To achieve this, $g_{i,j}$ becomes an AND-gate with $k(q(n)+1)$ many
input gates. Each of these gates compares two corresponding bits in the 
$\gamma$-encodings of $c_i$ and $c_{i+1}$.  It should be clear that
such a circuit $C_n$ can be built in logarithmic space.
Analogously we can construct a
logspace-uniform $\AC^0$-circuit family $(D_n)_{n \geq 0}$
which accepts all strings of the form $x w$, where 
$x \in \{0,1\}^n$ and $w \in D(x)$.

Finally, we construct from the two families 
$(C_n)_{n \geq 0}$ and $(D_n)_{n \geq 0}$ a new 
logspace-uniform $\AC^0$-circuit family $(B_n)_{n \geq 0}$, where
$B_n$ has $n + k \cdot ((q(n)+1)^2 + q(n)) + 1$ many inputs.
On input $x w 0$ (with $x \in \Sigma^n$) it outputs 
$C_n(xw)$. On input $x w 1$, $B_n$ outputs 
$D_n(xw)$. Now, let us construct from the regular language $K \subseteq \{0,1\}^*$ 
the new regular language 
$L = \varphi( K \mid\!\mid \{ a \}^* )$,
where $\mid\!\mid$ is the shuffle operator,
$a \not\in \{0,1\}$ is a new symbol, and $\varphi$ is the homomorphism with 
$\varphi(a) = 00$, $\varphi(0) = 10$, $\varphi(1) = 11$. 

The regular language $L$, the polynomial $p(n) = k \cdot ((q(n)+1)^2 + q(n)) + 1$, and 
the circuit family $(B_n)_{n \geq 0}$ fulfill the requirements from the theorem. 
\qed
\end{proof}

\end{document}